\theoremstyle{plain}
\tikzset{join/.code=\tikzset{after node path={%
\ifx\tikzchainprevious\pgfutil@empty\else(\tikzchainprevious)%
edge[every join]#1(\tikzchaincurrent)\fi}}}
\tikzset{>=stealth',every on chain/.append style={join},
         every join/.style={->}}
\tikzset{
    >=stealth',
    punkt/.style={
           rectangle,
           rounded corners,
           draw=black, very thick,
           text width=6.5em,
           minimum height=2em,
           text centered},
    pil/.style={
           ->,
           thick,
           shorten <=2pt,
           shorten >=2pt,}
}
\newcommand{\bee}{\begin{enumerate}}
\newcommand{\eee}{\end{enumerate}}
\newcommand{\benn}{\begin{equation*}}
\newcommand{\eenn}{\end{equation*}}
\newcommand{\be}{\begin{equation}}
\newcommand{\ee}{\end{equation}}
\newcommand{\bean}{\begin{eqnarray}}
\newcommand{\eean}{\end{eqnarray}}
\newcommand{\bea}{\begin{eqnarray*}}
\newcommand{\eea}{\end{eqnarray*}}
\newcommand{\Ci}{C^{\infty}}
\newcommand{\N}{\mathbb{N}}
\newcommand{\Z}{\mathbb{Z}}
\newcommand{\R}{\mathbb{R}}
\newcommand{\C}{\mathbb{C}}
\newcommand{\op}[1]{\!\!\mathop{\rm ~#1}\nolimits}
\newcommand{\cN}{{\cal N}}
\newcommand{\Zn}{\Z_2^n}
\newcommand{\cF}{{\cal F}}
\mathchardef\za="710B  
\mathchardef\zb="710C  
\mathchardef\zg="710D  
\mathchardef\zd="710E  
\mathchardef\zve="710F 
\mathchardef\zz="7110  
\mathchardef\zh="7111  
\mathchardef\zy="7112 
\mathchardef\zi="7113  
\mathchardef\zk="7114  
\mathchardef\zl="7115  
\mathchardef\zm="7116  
\mathchardef\zn="7117  
\mathchardef\zx="7118  
\mathchardef\zp="7119  
\mathchardef\zr="711A  
\mathchardef\zs="711B  
\mathchardef\zt="711C  
\mathchardef\zu="711D  
\mathchardef\zf="711E 
\mathchardef\zq="711F  
\mathchardef\zc="7120  
\mathchardef\zw="7121  
\mathchardef\ze="7122  
\mathchardef\zvy="7123  
\mathchardef\zvw="7124  
\mathchardef\zvr="7125 
\mathchardef\zvs="7126 
\mathchardef\zvf="7127  
\mathchardef\zG="7000  
\mathchardef\zD="7001  
\mathchardef\zY="7002  
\mathchardef\zL="7003  
\mathchardef\zX="7004  
\mathchardef\zP="7005  
\mathchardef\zS="7006  
\mathchardef\zU="7007  
\mathchardef\zF="7008  
\mathchardef\zW="700A  
\newcommand{\cyclic}{\mathop{\kern0.9ex{{+}
\kern-2.15ex\raise-.25ex\hbox{\Large\hbox{$\circlearrowright$}}}}\limits}
\newcommand{\cE}{{\cal E}}
 \newcommand{\cV}{{\cal V}}
 \newcommand{\cH}{{\cal H}}
 \newcommand{\cA}{{\cal A}}
 \newcommand{\cM}{{\cal M}}
 \newcommand{\cO}{{\cal O}}
 \newcommand{\cB}{{\cal B}}
 \newcommand{\cT}{{\cal T}}
 \newcommand{\cI}{{\cal I}}
 \newcommand{\cG}{{\cal G}}
\newtheorem{rem}{Remark}
\newtheorem{theo}[rem]{Theorem}
\newtheorem{prop}[rem]{Proposition}
\newtheorem{lem}[rem]{Lemma}
\newtheorem{cor}[rem]{Corollary}
\newtheorem{defi}[rem]{Definition}
\newtheorem{theorem}[rem]{Theorem}
\newcommand{\h}{\op{Hom}}
\newcommand{\0}{\otimes}
\newcommand{\id}{\op{id}}
\DeclareMathAlphabet{\mathpzc}{OT1}{pzc}{m}{it}
\newcommand{\Hom}{\mathrm{Hom}}
\begin{document}
\title{\bf Products in the category of $\Z_2 ^n$-manifolds}
\date{}
\author{Andrew Bruce and Norbert Poncin\footnote{University of Luxembourg, Mathematics Research Unit, 4364 Esch-sur-Alzette, Luxembourg, andrew.bruce@uni.lu, norbert.poncin@uni.lu}}
\maketitle

\begin{abstract} We prove that the category of $\Zn$-manifolds has all finite products. Further, we show that a $\Zn$-manifold (resp., a $\Zn$-morphism) can be reconstructed from its algebra of global $\Zn$-functions (resp., from its algebra morphism between global $\Zn$-function algebras). These results are of importance in the study of $\Zn$ Lie groups. The investigation is all the more challenging, since the completed tensor product of the structure sheafs of two $\Zn$-manifolds is not a sheaf. We rely on a number of results on (pre)sheaves of topological algebras, which we establish in the appendix.\end{abstract}

\tableofcontents

\section{Introduction}

$\Z_2^n$-Geometry is an emerging framework in mathematics and mathematical physics, which has been introduced in the foundational papers \cite{CGPa} and \cite{COP}. This non-trivial extension of standard Supergeometry allows for $\Z_2^n$-gradings, where $$\Z_2^n=\Z_2^{\times n}=\Z_2\times\ldots\times\Z_2\quad\text{and}\quad n\in\N\;.$$ The corresponding $\Z_2^n$-commutation rule for coordinates $(u^A)_A$ with degrees $\deg u^A\in\Z_2^n$ does not use the product of the (obviously defined) parities, but the scalar product $\langle-,-\rangle$ of $\Z_2^n\,$: \be\label{ZCR}u^Au^B=(-1)^{\langle\op{deg}u^A,\op{deg}u^B\rangle}u^Bu^A\;.\ee A brief description of the category $\Z_2^n\tt Man$ of $\Z_2^n$-manifolds $\cM=(M,\cO_M)$ and morphisms $\zF=(\zf,\zf^*)$ between them can be found in Section \ref{Summary}. For $n=1$, one recovers the category $\tt SMan$ of supermanifolds. A survey on $\Z_2^n$-Geometry is available in \cite{Pon}. The differential calculus and the splitting theorem for $\Z_2^n$-manifolds have been investigated in \cite{CKP} and \cite{CGPb}, respectively. In the introduction of \cite{BP1}, the reader finds motivations for the study of $\Z_2^n$-Geometry. The present paper uses the main results of \cite{BP1}.\medskip

Applications of $\Z_2^n$-Geometry are based in particular on $\Z_2^n$ Lie groups and their actions on $\Z_2^n$-manifolds (supergravity), on $\Z_2^n$ vector bundles and their sections ($\Z_2^n$ Lie algebroids), on the internal Hom functor in $\Z_2^n\tt Man$ ($\Z^n$-gradings and $\Z_2^n$-parities in field theory), ... All these notions rely themselves on {\it products} in the category $\Z_2^n\tt Man$. On the other hand, a comparison of different approaches to $\Z_2^n$ vector bundles is more challenging than in the supercase \cite{BCC}. A generalization to $\Z_2^n$-manifolds of the Schwarz-Voronov-Molotkov embedding is needed. This extension, which embeds $\Z_2^n\tt Man$ into the category of contravariant functors from $\Z_2^n$-points to a specific category of Fr\'echet manifolds, uses the {\it reconstructions} of $\Z_2^n$-manifolds and $\Z_2^n$-morphisms from the $\Z_2^n$-commutative associative unital $\R$-algebras of global $\Z_2^n$-functions and the $\Z_2^n$-graded unital algebra morphisms between them, respectively.\medskip

The existence of categorical products and the mentioned reconstruction theorems are the main topics of the present paper. The text is organized as follows. Section \ref{ReconstructionSection} contains the proofs of the above-mentioned $\Z_2^n$ reconstruction results. The definition of a product $\Z_2^n$-manifold and the proof of its meaningfulness are rather obvious, see Definition \ref{ProdZnMan}. However, the proof of the existence of categorical products in $\Z_2^n\tt Man$ is quite tough. It relies on the generalization of the well-known isomorphism of topological vector spaces $$\Ci(\zW')\widehat\0\,\Ci(\zW'')\simeq\Ci(\zW'\times \zW'')$$ (for open subsets $\zW'\subset \R^p, \zW''\subset \R^r$) to an isomorphism of locally convex topological algebras of formal power series \be\label{Intro1}\Ci(\zW')[[\xi]]\,\widehat\0\;\Ci(\zW'')[[\zh]]\simeq\Ci(\zW'\times\zW'')[[\xi,\zh]]\ee (for $\Z_2^n$-domains ${\cal U}^{p|\mathbf{q}}=(\zW',\Ci_{\zW'}[[\xi]])$ and ${\cal V}^{r|\mathbf{s}}=(\zW'',\Ci_{\zW''}[[\zh]])$, with $$\xi=(\xi_1,\ldots,\xi_{N(\mathbf{q})})\quad\text{and}\quad\zh=(\zh_1,\ldots,\zh_{N(\mathbf{s})}))\;,$$ see Theorem \ref{FundaIsom}. The issue here is the formal power series, which replace the polynomials of standard Supergeometry. Moreover, if $\cM=(M,\cO_M)$ and $\cN=(N,\cO_N)$ are two $\Z_2^n$-manifolds, and $\cM\times\cN=(M\times N,$ $\cO_{M\times N})$ is their product $\Z_2^n$-manifold, one gets from \eqref{Intro1} that, for an open subset $u\times v\subset M\times N$ of the basis $\frak B$ made of products of $\Z_2^n$-chart domains, we have $$\cE(u\times v)\simeq \cO_{M\times N}(u\times v)\;,$$ where $\cE$ is the $\frak B$-presheaf $$\cE(u\times v)=\cO_M(u)\,\widehat\0\;\cO_N(v)\;.$$ Let now $\bar\cF$ be the standard extension of the $\cB$-presheaf $\cF$ that assigns to any open subset $U\times V\subset M\times N$ (where $U\subset M$ and $V\subset N$ are not necessarily chart domains) the algebra $$\cF(U\times V)=\cO_M(U)\,\widehat\0\;\cO_N(V)\;.$$ The presheaf $\bar\cF$ and the sheaf $\cO_{M\times N}$ are thus two extensions of the $\frak B$-presheaf $\cE$. However, this does not mean that $\bar\cF\simeq\cO_{M\times N}$ and that $\bar\cF$ is a sheaf. Indeed, $\frak B$-sheaves have unique extensions, but $\frak B$-presheaves do not. Also the reconstruction results mentioned above do not allow us to prove that $\bar\cF$ is a sheaf. Hence, we prove the next best result, i.e., the existence of an isomorphism of sheaves of algebras \be\label{Intro2}\cO_{M\times N}\simeq \bar\cF^+\ee between the structure sheaf of the product $\Z_2^n$-manifold and the sheafification of the presheaf $\bar\cF$, see Theorem \ref{IsomExtShfifi}. In the case $n=1$, we thus recover the definition of a product supermanifold used in \cite{BBH}. The isomorphism \eqref{Intro2} allows us to prove the existence of all finite categorical products in $\Z_2^n\tt Man$, see Theorem \ref{FinProds}. The proof uses the results on sheafification and presheaves of locally convex topological algebras proven in Subsections \ref{SheafiPShTA} and \ref{SheafiDirIm} of the Appendix. Products of $\Z_2^n$-morphisms are obtained from the universality property of categorical products. They are explicitly described in Proposition \ref{PropProdZ2nMor}.

\section{$\Z_2^n$-manifolds and their morphisms}\label{Summary}

We denote by $\Z_2^n$ the cartesian product of $n$ copies of $\Z_2\,$. Further, we use systematically the fol\-low\-ing \emph{standard order} of the $2^n$ elements of $\Z_2^n$: first the even degrees are ordered lexicographically, then the odd ones are also ordered lexicographically. For example, $$ \Z_2^3=\{(0,0,0), (0,1,1), (1,0,1), (1,1,0), (0,0,1),(0,1,0), (1,0,0), (1,1,1)\}\;.$$

A $\Z_2^n$-domain has, aside from the usual coordinates $x=(x^1,\ldots,x^p)$ of degree $\deg x^i=0\in\Z_2^n$, also formal coordinates or parameters $\xi=(\xi^1,\ldots,\xi^Q)$ of non-zero degrees $\deg\xi^a\in\Z_2^n$. These coordinates $u=(x,\xi)$ commute according to the generalized sign rule \be\label{SignRule}u^Au^B=(-1)^{\langle\deg u^A,\deg u^B\rangle}u^Bu^A\;,\ee where $\langle-,-\rangle$ denotes the standard scalar product. For instance, $$\langle (0,1,1),(1,0,1)\rangle=1\;.$$ Observe that, in contrast with ordinary $\Z_2$- or super-domains, even coordinates may anticommute, odd coordinates may commute, and even nonzero degree coordinates are not nilpotent. Of course, for $n=1$, we recover the classical situation. We denote by $p$ the number of coordinates $x^i$ of degree 0, by $q_1$ the number of coordinates $\xi^a$ which have the first non-zero degree of $\Z_2^n$, and so on. We get that way a tuple $\mathbf{q}=(q_1,\ldots,q_N)\in\N^N$ with $N:=2^{n}-1$. The dimension of the considered $\Z_2^n$-domain is then given by $p|\mathbf{q}$. Clearly the $Q$ above is the sum $|\mathbf{q}|=\sum_{i=1}^Nq_i$.
\medskip

We recall the definition of a $\Z_2^n$-manifold.

\begin{defi}
A \emph{locally $\Z_2^n$-ringed space} is a pair $(M,\cO_M)$ made of a topological space $M$ and a sheaf of $\Z_2^n$-graded $\Z_2^n$-commutative $(\,$in the sense of \eqref{SignRule}$\,)$ associative unital $\R$-algebras over it, such that at every point $m\in M$ the stalk $\cO_{M,m}$ is a local graded ring.\smallskip

A smooth \emph{$\Z_2^n$-manifold} of dimension $p|\mathbf{q}$ is a \emph{locally $\Z_2^n$-ringed space} $\cM=(M,\cO_M)$, which is locally isomorphic to the smooth $\Z_2^n$-domain $\R^{p|\mathbf{q}}:=(\R^p,\Ci_{\R^p}[[\xi]])$, and whose underlying topological space $M$ is second-countable and Hausdorff. Sections of the structure sheaf $\,\C^{\infty}_{\R^p}[[\xi]]$ are \emph{formal power series} in the $\Z_2^n$-commutative parameters $\xi$, with coefficients in smooth functions:
$$
 \Ci_{\R^p}(U)[[\xi]]:=\left\{ \sum_{\alpha\in\N^{\times |\mathbf{q}|}} f_{\alpha}(x)\,\xi^{\alpha}\; |\; f_{\alpha}\in\Ci(U)\right\}\quad(U\,\text{open in}\;\,\R^p) \;.
$$

\emph{$\Z_2^n$-morphisms} between $\Z_2^n$-manifolds are just morphisms of $\Z_2^n$-ringed spaces, i.e., pairs
$\zF=(\phi,\phi^*):(M,\cO_M)\to (N,\cO_N)$ made of a continuous map $\phi:M\to N$ and a sheaf morphism $\phi^*:\cO_N\to\phi_*\cO_M$, i.e., a family of $\Z_2^n$-graded unital $\R$-algebra morphisms, which commute with restrictions and are defined, for any open $V\subset N$, by $$\phi^*_V:\cO_N(V)\to \cO_M(\phi^{-1}(V))\;.$$

We denote the category of $\Z_2^n$-manifolds and $\Z_2^n$-morphisms between them by $\Z_2^n{\tt Man}$.
\end{defi}

\begin{rem} Let us stress that the base space $M$ corresponds to the degree zero coordinates (and not to the even degree coordinates), and let us mention that it can be proven that the topological base space $M$ carries a natural smooth manifold structure of dimension $p$, that the continuous base map $\phi:M\to N$ is in fact smooth, and that the algebra morphisms $$\phi^*_m:\cO_{\phi(m)}\to\cO_{m}\quad(m\in M)\;$$ between stalks, which are induced by the considered $\Z_2^n$-morphism $\Phi:{\cal M}\to {\cal N}$, respect the unique homogeneous maximal ideals of the local graded rings $\cO_{\phi(m)}$ and $\cO_{m}$.
\end{rem}

\section{Reconstructions of $\Zn$-manifolds and $\Zn$-morphisms}\label{ReconstructionSection}

In this section, we reconstruct a $\Z_2^n$-manifold $(M,\cO_M)$ from the $\Z_2^n$-commutative unital algebra $\cO_M(M)$ of global sections of its function sheaf. We also reconstruct a $\Zn$-morphism $$\Phi=(\phi,\phi^*):(M,\cO_M)\to(N,\cO_N)$$ from its pullback $\Zn$-graded unital algebra morphism $$\phi^*_N:\cO_N(N)\to\cO_M(M)$$ between global sections.

\subsection{Reconstruction of the topological base space}

Algebraic characterizations of spaces can be traced back to I. Gel'fand and A. Kolmogoroff \cite{GK}. In that paper, compact topological spaces $K$ are characterized by the algebras $C^0(K)$ of continuous functions on them. In particular, the points $m$ of these spaces are identified with the maximal ideals $$I_m=\{f\in C^0(K): f(m)=0\}$$ of these algebras. A similar characterization holds for the points of second countable Hausdorff smooth manifolds. \medskip

Let $\cM=(M,\cO_M)$ be a $\Z_2^n$-manifold. We denote the maximal spectrum of $\cO(M)$ (subscript omitted) by $\op{Spm}(\cO(M))$ (we actually consider here the real maximal spectrum, in the sense that the quotient $\cO(M)/\zm$ by an ideal $\zm$ in the  spectrum is isomorphic to the field $\R$ of real numbers). Note that any $m\in M$ induces a map $$\ze_m:\cO(M)\ni f\mapsto (\ze_M f)(m)\in\R\;,$$ which is referred to as the {\it evaluation map} at $m$ and is a $\Z_2^n$-graded unital $\R$-algebra morphism $$\ze_m\in\h_{\Z_2^n{\tt UAlg}}(\cO(M),\R)\;.$$ The kernel $$\zm_m=\ker \ze_m=\{f\in\cO(M):(\ze_M f)(m)=0\}\in\op{Spm}(\cO(M))$$ is a maximal ideal. More generally, the kernel of an arbitrary algebra morphism $$\psi\in\h_{\Z_2^n{\tt UAlg}}(\cO(M),\R)$$ is a maximal ideal, since $\cO(M)/\ker\psi\simeq\R$. Indeed, to any class $[f]$ in the quotient we can associate the real number $\psi(f)$. This map is well-defined and injective. It is also surjective, since, for any $r\in\R$, the image of $[r\cdot 1_{\cO(M)}]$ is $r$. It follows in particular that any class in the quotient is of type $[r\cdot 1_{\cO(M)}]$ for a unique $r\in\R$. We have the following

\begin{prop}\label{Bijections}
The maps $${\frak b}:M\ni m\mapsto \zm_m\in\op{Spm}(\cO(M))$$ and $$\flat:\h_{\Z_2^n{\tt UAlg}}(\cO(M),\R)\ni\psi\mapsto\ker\psi\in\op{Spm}(\cO(M))$$ are 1:1 correspondences.
\end{prop}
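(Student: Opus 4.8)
The plan is to reduce the bijectivity of $\frak b$ to that of $\flat$, and then to concentrate all the real work on a single assertion: every $\Z_2^n$-graded unital $\R$-algebra morphism $\psi\colon\cO(M)\to\R$ is the evaluation $\ze_m$ at a unique point $m\in M$. First I would dispose of $\flat$. It is well defined and surjective essentially by the definition of the real maximal spectrum: for $\zm\in\op{Spm}(\cO(M))$ the projection $\cO(M)\to\cO(M)/\zm\simeq\R$ is a morphism with kernel $\zm$. It is injective because two morphisms into $\R$ with the same kernel induce the same $\R$-algebra isomorphism $\cO(M)/\zm\simeq\R$, any unital $\R$-algebra endomorphism of $\R$ being the identity. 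Since $\frak b=\flat\circ e$ with $e\colon m\mapsto\ze_m$, the Proposition reduces to showing that $e$ is a bijection onto $\h_{\Z_2^n{\tt UAlg}}(\cO(M),\R)$. Injectivity of $e$ is just separation by bodies: if $m_1\neq m_2$, the reduced smooth functions on the base already separate the two points, so picking $g\in\Ci(M)$ with $g(m_1)\neq g(m_2)$ and viewing it in $\cO(M)$ gives $\ze_{m_1}(g)\neq\ze_{m_2}(g)$.

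Surjectivity of $e$ is the crux. Given $\psi$, I would first restrict it along the embedding $\Ci(M)\hookrightarrow\cO(M)$ of base functions; this produces a unital $\R$-algebra character of $\Ci(M)$, which by the classical Gel'fand--Kolmogoroff-type theorem for second-countable Hausdorff manifolds (see \cite{GK}) equals the evaluation $\mathrm{ev}_m$ at a unique $m\in M$. It then remains to prove that $\psi$ annihilates the augmentation ideal $J_M=\ker\ze_M$, for then $\psi(f)=\psi(\ze_M f)=(\ze_M f)(m)=\ze_m(f)$ for every $f$. A bump-function argument makes $\psi$ local at $m$: if $g\in\Ci(M)$ equals $1$ near $m$ and $f$ vanishes on a neighbourhood containing $\mathrm{supp}\,g$, then $gf=0$ and $\psi(g)=g(m)=1$, so $\psi(f)=\psi(g)\psi(f)=\psi(gf)=0$.

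Working in a $\Z_2^n$-chart $U\ni m$ with formal coordinates $\xi_1,\dots,\xi_Q$, the heart of the matter is to show that $\psi$ annihilates each global section $\chi\xi_i$, where $\chi\in\Ci(M)$ is a bump equal to $1$ near $m$ and supported in $U$. For a $\xi_i$ of odd degree this is immediate from nilpotency, $(\chi\xi_i)^2=0$. The genuinely new point is the case of $\xi_i$ of even nonzero degree, which is \emph{not} nilpotent: here $1-\lambda\,\chi\xi_i$ is invertible in $\cO(M)$ for every $\lambda\in\R$, its inverse being the geometric series $\sum_{k\ge 0}\lambda^k(\chi\xi_i)^k$, a legitimate element of the structure sheaf glued with the constant $1$ off $\mathrm{supp}\,\chi$. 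Hence $\psi(1-\lambda\,\chi\xi_i)=1-\lambda\,\psi(\chi\xi_i)$ is a nonzero real for all $\lambda$, which forces $\psi(\chi\xi_i)=0$. Finally, any $f\in J_M$ has vanishing body, so on $U$ it lies in the ideal generated by the finitely many $\xi_i$, say $f|_U=\sum_{i=1}^Q\xi_i h_i$; writing $\chi^2 f=\sum_{i=1}^Q(\chi\xi_i)(\chi h_i)$ as an honest finite sum of global sections and using $\psi(\chi)=1$ gives $\psi(f)=\psi(\chi^2 f)=\sum_i\psi(\chi\xi_i)\,\psi(\chi h_i)=0$. Thus $\psi=\ze_m$, with $m$ unique by the injectivity already shown.

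The hard part will be exactly the even nonzero-degree parameters. In ordinary supergeometry one kills the odd generators directly by nilpotency and the augmentation ideal is handled by finite polynomial truncation; here the non-nilpotent even generators survive, and it is the invertibility of $1-\lambda\xi_i$ furnished by the formal-power-series structure that must replace nilpotency. A secondary technical obstacle is the passage from the locally defined parameters to global sections, which I would manage throughout by bump functions, together with the observation that the augmentation ideal is finitely generated, so that the mere finite additivity and multiplicativity of $\psi$ suffice even though the individual functions are infinite power series.
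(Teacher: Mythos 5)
Your treatment of $\flat$ and the reduction of the bijectivity of $\frak b$ to the statement that every character of $\cO(M)$ is an evaluation are fine, and the individual analytic steps are all correct: locality via bumps, $(\chi\xi_i)^2=0$ for odd parameters, the geometric-series inverse of $1-\lambda\,\chi\xi_i$ for even nonzero-degree ones, and the decomposition $f|_U=\sum_i\xi_i h_i$ of a section with vanishing body. The genuine gap is the very first move of the surjectivity argument: you ``restrict $\psi$ along the embedding $\Ci(M)\hookrightarrow\cO(M)$ of base functions''. No such embedding exists canonically. A $\Z_2^n$-manifold comes with a canonical surjection $\ze_M:\cO(M)\to\Ci(M)$ (the body map), but the identification $\cO_M\simeq\Ci_M[[\xi]]$ is only local, and a global unital algebra section of $\ze_M$ exists only by the non-canonical splitting theorem of \cite{CGPb}, which you neither invoke nor prove. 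Since the point $m$, the identities $\psi(g)=g(m)$ and $\psi(\chi)=1$ for your bump functions, and the final formula $\psi(f)=\psi(\ze_M f)$ all presuppose this embedding, the argument as written does not get off the ground. It can be repaired either by citing the splitting theorem, or --- more cleanly --- by reversing the order: first show that $\psi$ kills $J_M=\ker\ze_M$ (for $F\in J_M$ the element $1-\lambda F$ has body $1$, hence is invertible by Proposition \ref{Invertibility}, so $1-\lambda\,\psi(F)\neq 0$ for all $\lambda$ and $\psi(F)=0$; this single observation subsumes your entire chart-and-parity analysis), then factor $\psi$ through the surjection $\ze_M$ to obtain a character of $\Ci(M)$, which is an evaluation by the classical Gel'fand--Kolmogoroff argument.

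For comparison, the paper works at the level of ideals rather than characters and in the opposite direction: it shows that for any $\zm\in\op{Spm}(\cO(M))$ the image $\ze_M(\zm)$ is a maximal ideal of $\Ci(M)$, hence equals some $I_m$, and then that $\zm\subset\ze_M^{-1}(I_m)=\zm_m$ forces $\zm=\zm_m$; the bijectivity of $\flat$ is then immediate. That route needs neither an embedding of $\Ci(M)$ nor any discussion of the formal parameters. Your route, once repaired, is a legitimate alternative and isolates the useful fact that every character is an evaluation; but note that the heart of the matter here is not the non-nilpotent even parameters, which die essentially for free (any $F$ with vanishing body is killed by the same $1-\lambda F$ trick, and homogeneous elements of nonzero degree are killed already because morphisms in $\Z_2^n{\tt UAlg}$ are graded and $\R$ is concentrated in degree $0$); it is the absence of a global body embedding.
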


\begin{proof} To prove that $\frak b$ is bijective, consider a maximal ideal $\zm\in\op{Spm}(\cO(M))$. The image $\ze_M(\zm)\subset\Ci(M)$ is a maximal ideal. Indeed, it is an ideal, since the map $\ze_M$ is surjective (the short sequence of sheafs \cite[Equation (3)]{BP1} is exact for the good reason that it is exact for any open subset of $M$). To see that it is maximal, assume there is an ideal $\zn$, such that $\ze_M(\zm)\subset\zn\subset\Ci(M)$, so that $\zm\subset\ze_M^{-1}(\zn)\subset\cO_M(M)$. It follows that $\ze_M^{-1}(\zn)=\zm$ or $\ze_M^{-1}(\zn)=\cO_M(M)$, and that $\zn=\ze_M(\zm)$ or $\zn=\Ci(M)$. Hence, $$\ze_M(\zm)=I_m=\{f\in\Ci(M):f(m)=0\}\;,$$ since any maximal ideal of $\Ci(M)$ is known to be of type $I_m$ for a unique $m\in M$. Finally, we get $$\zm\subset\ze_M^{-1}(I_m)=\{f\in\cO(M): (\ze_M f)(m)=0\}=\zm_m\subset\cO(M)\;.$$ Since $\zm_m\neq\cO(M)$, we have $\zm=\zm_m$, which proves the bijectivity of $\frak b$. Indeed, if $\zm=\zm_n$, we obtain $\ze_M(\zm_n)\subset I_n\subset \Ci(M)$, so that $$I_n=\ze_M(\zm_n)=\ze_M(\zm_m)=I_m\;,$$ and $m=n$.\medskip

Since any $\zm\in\op{Spm}(\cO(M))$ reads $\zm=\zm_m=\ker\ze_m=\flat(\ze_m)$, the map $\flat$ is surjective. Let $\psi,\phi$ be unital algebra morphisms, such that $\ker\psi=\ker\phi=\zm$. For any $f\in\cO(M)$, there exists a unique $r\in\R$, such that $[f]=[r\cdot 1_{\cO(M)}]$. Thus $\psi(f)=r=\phi(f)$ and $\psi=\phi$, so that $\flat$ is also injective. \end{proof}

\begin{prop}\label{Homeo} The map $$\ze_M^{-1}:\op{Spm}(\Ci(M))\ni I_m\mapsto \ze_M^{-1}(I_m)=\zm_m\in\op{Spm}(\cO(M))$$ is a homeomorphism with inverse $\ze_M$, both, if the maximal spectra are endowed with their Zariski topology and if they are endowed with their Gel'fand topology. Hence, the Zariski and Gel'fand topologies coincide on $\op{Spm}(\cO(M))$. Further, the bijection $${\frak b}:M\ni m\mapsto\zm_m\in\op{Spm}(\cO(M))$$ is a homeomorphism.\end{prop}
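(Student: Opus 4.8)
The plan is to reduce everything to the classical picture for the ordinary smooth manifold $M$ via the surjective algebra morphism $\ze_M:\cO(M)\to\Ci(M)$. First I would record the structural fact, already used in Proposition \ref{Bijections}, that $\ze_M$ is surjective and that every maximal ideal $\zm\in\op{Spm}(\cO(M))$ contains $\ker\ze_M$ (indeed $\zm=\zm_m=\ze_M^{-1}(I_m)\supset\ze_M^{-1}(0)$). Consequently, taking preimages along $\ze_M$ and taking images along $\ze_M$ are mutually inverse bijections between $\op{Spm}(\Ci(M))$ and $\op{Spm}(\cO(M))$, which are precisely the maps $\ze_M^{-1}$ and $\ze_M$ of the statement. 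It then remains to check bicontinuity in each of the two topologies and to import the coincidence of the two topologies from the base.

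For the Zariski topology I would test continuity on the basic closed sets $V(h)=\{\zm:h\in\zm\}$. For $f\in\cO(M)$ one computes $(\ze_M^{-1})^{-1}\big(V_{\cO(M)}(f)\big)=V_{\Ci(M)}(\ze_M f)$, so $\ze_M^{-1}$ is continuous. For the reverse map, given $g\in\Ci(M)$ pick $f$ with $\ze_M f=g$ (surjectivity); the key observation is that $g\in\ze_M(\zm)\iff f\in\zm$, which holds precisely because $\ker\ze_M\subset\zm$, whence the preimage of $V_{\Ci(M)}(g)$ under $\ze_M$ is $V_{\cO(M)}(f)$. Thus $\ze_M^{-1}$ is a Zariski homeomorphism.

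For the Gel'fand topology I would use the Gel'fand transforms $\hat f(\zm_m)=(\ze_M f)(m)$ and $\hat g(I_m)=g(m)$, valid since $f-\ze_m(f)\cdot 1\in\zm_m$ and $g-g(m)\cdot 1\in I_m$. The point is the single identity $\hat f\circ\ze_M^{-1}=\widehat{\ze_M f}$. Since the Gel'fand topology on each spectrum is by definition the \emph{initial topology} with respect to all its transforms, this identity makes $\ze_M^{-1}$ continuous, and, using surjectivity of $\ze_M$ to write every $g$ as $\ze_M f$, it makes the inverse continuous as well; hence $\ze_M^{-1}$ is also a Gel'fand homeomorphism.

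Finally, I would invoke the classical Gel'fand--Kolmogoroff result in its smooth form: on $\op{Spm}(\Ci(M))$ the Zariski and Gel'fand topologies coincide, and the map $\frak b':M\ni m\mapsto I_m$ is a homeomorphism onto this common topology (here second countability and Hausdorffness of $M$ enter). Transporting along the bijection $\ze_M^{-1}$, a homeomorphism for both topologies by the previous steps, shows that the Zariski and Gel'fand topologies agree on $\op{Spm}(\cO(M))$ as well, and the factorization $\frak b=\ze_M^{-1}\circ\frak b'$ exhibits $\frak b$ as a composite of homeomorphisms. The main obstacle is not any single computation but making sure the two genuinely external inputs are in place: the surjectivity of $\ze_M$ together with the fact that all maximal ideals contain its kernel (the structural content coming from \cite{BP1}), and the classical identification of $M$ with the maximal spectrum of $\Ci(M)$ in both topologies. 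Once these are granted, the entire statement propagates formally through the surjection.
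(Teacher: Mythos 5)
Your proposal is correct and follows essentially the same route as the paper: reduce both topologies to the classical picture on $\op{Spm}(\Ci(M))$ via the surjection $\ze_M$, check that $\ze_M^{-1}$ and $\ze_M$ exchange the basic Zariski and Gel'fand (sub)basic sets, and then compose with the classical homeomorphism $M\simeq\op{Spm}(\Ci(M))$. Your phrasing of the Gel'fand topology as the initial topology of the transforms is just a repackaging of the paper's explicit basis $B(m,\zve;f_1,\ldots,f_n)$, and your explicit remark that all maximal ideals contain $\ker\ze_M$ makes precise why image and preimage are mutually inverse, which the paper takes from Proposition \ref{Bijections}.
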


\begin{proof} The maps $$\ze_M^{-1}:\op{Spm}(\Ci(M))\ni I_m\mapsto \ze_M^{-1}(I_m)=\zm_m\in\op{Spm}(\cO(M))$$ and $$\ze_M:\op{Spm}(\cO(M))\ni\zm_m\mapsto \ze_M(\zm_m)=I_m\in\op{Spm}(\Ci(M))$$ are inverses of each other.\medskip

We first equip the spectrum $\op{Spm}(\Ci(M))$ as usual with the Zariski topology, which is defined by its basis of open subsets $V_{\Ci}(f)$, $f\in\Ci(M)$, given by $$V_{\Ci}(f)=\{I_m\in\op{Spm}(\Ci(M)):f\notin I_m\}\;,$$ and we proceed similarly for $\op{Spm}(\cO(M))$. It is straightforwardly checked that, if $f=\ze_M(F)$, we have $$\ze_M^{-1}(V_{\Ci}(f))=V_{\cO}(F)\quad \text{and} \quad \ze_M(V_{\cO}(F))=V_{\Ci}(f)\;.$$ Hence the announced homeomorphism result for the Zariski topologies.\medskip

The Gel'fand topology of $\op{Spm}(\Ci(M))$ is defined by the basis of open subsets $B_{\Ci}(m,\zve;$ $f_1,\ldots,f_n)$, indexed by $m\in M$, $\zve>0$, $n\in\N$, and $f_1,\ldots,f_n\in\Ci(M)$, and defined by $$B_{\Ci}(m,\zve;f_1,\ldots,f_n)=\{I_n\in\op{Spm}(\Ci(M)): |f_i(n)-f_i(m)|<\zve, \forall i\}\;.$$ The Gel'fand topology of $\op{Spm}(\cO(M))$ is defined analogously by $$B_{\cO}(m,\zve;F_1,\ldots,F_n)=\{\zm_n\in\op{Spm}(\cO(M)): |(\ze_MF_i)(n)-(\ze_MF_i)(m)|<\zve, \forall i\}\;,$$ where $F_i\in\cO(M)$. If $f_i=\ze_M(F_i)$, we have obviously $$\ze_M^{-1}(B_{\Ci}(m,\zve;f_1,\ldots,f_n))=B_{\cO}(m,\zve;F_1,\ldots,F_n)\;,$$ and similarly for $\ze_M$, so that the homeomorphism result holds also for the Gel'fand topologies.\medskip

Since the Zariski and Gel'fand topologies coincide on $\op{Spm}(\Ci(M))$, it follows from the above that there is a homeomorphism from $\op{Spm}(\cO(M))$ endowed with the Zariski topology to itself endowed with the Gel'fand topology.\medskip

It is well-known that the map ${\frak b}_{\Ci}:M\ni m\mapsto I_m\in\op{Spm}(\Ci(M))$ is a homeomorphism. Hence, the bijection ${\frak b}=\ze_M^{-1}\circ{\frak b}_{\Ci}$ is a homeomorphism as well.\end{proof}

\subsection{Reconstruction of the structure sheaf}

\begin{prop}\label{Invertibility}
Let $(M,\cO_M)$ be a $\Zn$-manifold and let $U\subset M$ be open. A $\Zn$-function $F\in\cO_M(U)$ is invertible if and only if its base projection $f=\ze_U(F)\in\Ci_M(U)$ is invertible.
\end{prop}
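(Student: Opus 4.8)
The plan is to prove the two implications separately; the forward one is immediate from the algebraic nature of the base projection, while the converse is where the power-series structure of the local model $\Ci_{\R^p}[[\xi]]$ really enters. For the ``only if'' direction I would simply use that the base projection $\ze_U:\cO_M(U)\to\Ci_M(U)$ is a unital $\R$-algebra morphism commuting with restrictions: if $G\in\cO_M(U)$ satisfies $FG=GF=1$, then applying $\ze_U$ yields $f\,\ze_U(G)=\ze_U(G)\,f=1$, so $f$ is a unit of $\Ci_M(U)$, i.e. nowhere vanishing on $U$.

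For the converse I would first reduce to the local model by observing that invertibility of a section is a local property: $F$ is invertible in $\cO_M(U)$ as soon as its restriction to each member of an open cover of $U$ is invertible. Indeed, the local inverses agree on overlaps by uniqueness of inverses in a unital ring, and the sheaf axioms then glue them to a global two-sided inverse of $F$. Hence I cover $U$ by $\Z_2^n$-chart domains $V$, over which $\cO_M(V)\simeq\Ci(\zW)[[\xi]]$ for some open $\zW\subset\R^p$; since the base projection commutes with restriction, $f_0:=\ze_V(F|_V)=f|_V$ is again nowhere vanishing, hence a unit of $\Ci(\zW)$.

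The heart of the argument is then to invert a power series whose constant term is a unit. On $V$ I would write $F|_V=f_0+R$, where $R=\sum_{|\alpha|\ge 1}f_\alpha\,\xi^\alpha$ lies in the augmentation ideal $\cJ:=\ker\ze_V$. Since $f_0\in\Ci(\zW)$ has degree zero it is central, and being nowhere vanishing it is invertible, so $F|_V=f_0\,(1+s)$ with $s:=f_0^{-1}R\in\cJ$; thus it suffices to invert $1+s$ by the geometric series $\sum_{k\ge 0}(-s)^k$. The point — and the main obstacle compared with ordinary supergeometry — is that the parameters of even nonzero degree are \emph{not} nilpotent, so $\cJ$ is not a nilpotent ideal and the series does not truncate. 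I would instead use that $s^k\in\cJ^k$ contributes only to monomials $\xi^\beta$ with $|\beta|\ge k$; hence for each fixed multi-index $\alpha$ only the finitely many powers $s^0,\dots,s^{|\alpha|}$ affect the coefficient of $\xi^\alpha$, so $\sum_{k\ge 0}(-s)^k$ is a well-defined element of $\Ci(\zW)[[\xi]]$. The telescoping identity $(1+s)\sum_k(-s)^k=\sum_k(-s)^k\,(1+s)=1$, which uses only that $s$ commutes with its own powers, shows it is a two-sided inverse of $1+s$, whence $G_V:=f_0^{-1}\sum_k(-s)^k$ is a two-sided inverse of $F|_V$.

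Gluing the $G_V$ over the chart cover of $U$ as explained above produces the desired inverse $G\in\cO_M(U)$, completing the converse. The only genuinely delicate step is the finiteness argument guaranteeing convergence of the geometric series, which is forced by the non-nilpotency of the $\Z_2^n$-parameters; everything else is formal.
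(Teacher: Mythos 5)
Your proof is correct and follows essentially the same route as the paper's: the forward implication is immediate because $\ze_U$ is a unital algebra morphism, and the converse reduces to a cover by $\Z_2^n$-chart domains, inverts locally, and glues the local inverses via uniqueness of inverses in a unital ring together with the sheaf axioms. The only difference is that you spell out the local step — inverting $1+s$ by the coefficientwise-finite geometric series, justified by the observation that $s^k$ only contributes to monomials $\xi^\beta$ with $|\beta|\ge k$ — whereas the paper simply asserts that an element of $\Ci(V_i)[[\xi]]$ with invertible degree-zero term has an inverse; your added detail is correct and harmless.
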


\begin{proof}
It is obvious that $f$ is invertible if $F$ is. Assume now that there exists $f^{-1}\in\Ci(U)$ and consider a cover of $U$ by $\Zn$-chart domains $V_i$. For any $i$, we have $f^{-1}|_{V_i}=(f|_{V_i})^{-1}$, i.e., the base function $\ze_{V_i}(F|_{V_i})=\ze_U(F)|_{V_i}$ is invertible in $\Ci(V_i)$, so that $F|_{V_i}\in\cO(V_i)\simeq\Ci(V_i)[[\xi]]$ has an inverse $G_{V_i}\in\cO(V_i)$. It follows that, for any $V_i$ and $V_j$ with intersection $V_{ij}$, $$G_{V_i}|_{V_{ij}}=(F|_{V_{ij}})^{-1}=G_{V_j}|_{V_{ij}}\;.$$ Hence, there is a unique $\Zn$-function $G\in\cO(U)$, such that $G|_{V_i}=G_{V_i}$. It is clear that $G$ is the inverse of $F$.
\end{proof}

Reconstructions of a sheaf from its global sections have been thoroughly studied in algebraic and differential geometry. A survey on such results can be found in \cite{BPPf} and \cite{BPPi}. The probably best known example is the construction of the structure sheaf $\cO_X$ of an affine scheme $X=\op{Spec}R$ from its global sections commutative unital ring $\cO_X(X)=R$. In this case, the ring $\cO_X(V_f)$ of functions on a Zariski open subset $V_f$, $f\in R$, is defined as a localization of $\cO_X(X)$. In the case of a $\Zn$-manifold $(M,\cO_M)$, we reconstruct $\cO_M(U)$ as the localization of $\cO_M(M)$ with respect to the multiplicative subset $S_U=\{F\in\cO_M(M):(\ze_M F)|_U\;\text{is invertible}\;\}\,.$ The chosen localization comes with a morphism that sends global sections with invertible projection in $\Ci_M(U)$ to invertible sections in $\cO_M(U)$, see Proposition \ref{Invertibility}.\medskip

We briefly address localization in the $\Zn$-commutative setting. Let $R$ be a $\Z_2^n$-commutative associative unital $\R$-algebra and let $S$ be a multiplicative subset of $R$, whose elements are homogeneous even vectors. In the following, we consider right fractions $rs^{-1}\in RS^{-1}$ (left ones would work as well) and denote the degree of any element by the same symbol as the element itself. We define the equivalence $rs^{-1}\sim r's'^{-1}$ of two fractions by requiring the existence of some $\zs\in S$ such that \be\label{Equiv}(rs'-(-1)^{\langle s,s'\rangle}r's)\,\zs=0\;.\ee It can be checked that the relation $\sim$ is an equivalence. The addition of fractions is defined by \be\label{a}rs^{-1}+r's'^{-1}=\left(rs'+(-1)^{\langle s,s'\rangle}r's\right)(ss')^{-1}\;.\ee The definition is independent of the chosen representatives. The multiplication of fractions is given by \be\label{m}rs^{-1}\cdot r's'^{-1}=(-1)^{\langle r'+s',s\rangle}\,rr'(ss')^{-1}\;,\ee provided $r'$ is homogeneous. Again the result is invariant under a change of representatives. If $r'$ is not homogeneous, it uniquely reads $r'=\sum_{\zg} r'^\zg$ ($\zg\in\Zn$). In view of \eqref{a}, we get $r's'^{-1}=\sum_\zg(r'^\zg s'^{-1})$, so that we can extend the definition \eqref{m} by linearity. Finally, the scalar multiplication by $\frak r\in\R$ is \be\label{s}{\frak r}(r s^{-1})=({\frak r}\,r)s^{-1}\;.\ee

For our purpose, it will actually be sufficient to consider the multiplicative subset $$S_U=\{F\in\cO^{\,0}_M(M):(\ze_M F)|_U\;\text{is invertible}\;\}\subset\cO_M(M)\;$$ of the $\Zn$-commutative associative unital $\R$-algebra $\cO_M(M)$. Since the $\Zn$-functions of this subset are not only even, but of degree 0, the signs in the equivalence \eqref{Equiv} and the operations \eqref{a} and \eqref{m} disappear.

\begin{prop} The operations \eqref{a}, \eqref{m}, and \eqref{s} $(\,$without signs$\,)$ endow the localization $\cO_M(M)\cdot S_U^{-1}$ with a $\Zn$-commutative associative unital $\R$-algebra structure, whose grading is naturally induced by the grading of $\cO_M(M)$ $(\,$for a homogeneous $r$, the degree of $r s^{-1}$ is the degree of $r$$\,)$, and whose zero $(\,$resp., unit$\,)$ is represented by $0\, 1^{-1}$ $(\,$resp., $1\, 1^{-1}$$\,)$.\end{prop}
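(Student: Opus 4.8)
The plan is to exploit the single structural feature that makes this localization tame: the denominators are drawn from the degree-zero part $\cO_M^{\,0}(M)$, and are therefore \emph{central}. Indeed, for $s\in S_U$ and any homogeneous $r\in\cO_M(M)$ the sign rule \eqref{SignRule} gives $sr=(-1)^{\langle 0,\deg r\rangle}rs=rs$. Hence $S_U$ lies in the center of $\cO_M(M)$, the right fractions $rs^{-1}$ coincide with the left ones, the Ore condition is automatic, and all signs in \eqref{Equiv}, \eqref{a} and \eqref{m} collapse exactly as claimed. Since the excerpt already records that $\sim$ is an equivalence and that \eqref{a}, \eqref{m}, \eqref{s} do not depend on the chosen representatives, the verification of the associative unital $\R$-algebra axioms becomes routine and I would only sketch it: commutativity and associativity of the addition \eqref{a}, together with the additive inverse $(-r)s^{-1}$, follow from the corresponding identities in $\cO_M(M)$ after clearing to a common denominator; associativity of the sign-free multiplication \eqref{m} and the distributive law follow likewise, using centrality to slide denominators freely past numerators; and \eqref{s} makes the additive group a real vector space compatible with the product. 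Substitution shows that $0\,1^{-1}$ is neutral for \eqref{a} and $1\,1^{-1}$ is neutral for \eqref{m}.

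The genuinely substantial point, and the step I expect to be the main obstacle, is that the $\Zn$-grading is well defined, i.e.\ that $\cO_M(M)\cdot S_U^{-1}$ is the \emph{direct} sum of its homogeneous pieces. For each $\gamma\in\Zn$ I would set the $\gamma$-component to consist of the fractions $rs^{-1}$ with $r$ homogeneous of degree $\gamma$. Writing $r=\sum_\gamma r^\gamma$ (a finite sum, as $\Zn$ has $2^n$ elements) and applying \eqref{a} yields $rs^{-1}=\sum_\gamma r^\gamma s^{-1}$, so every fraction is a sum of homogeneous ones. Directness is the crux. Suppose a finite sum of homogeneous fractions vanishes; bringing it to a common denominator $s$ rewrites it as $\bigl(\sum_\gamma \tilde r^\gamma\bigr)s^{-1}=0\,1^{-1}$, where each $\tilde r^\gamma$ is still homogeneous of degree $\gamma$ (the extra factors are degree-zero denominators). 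By \eqref{Equiv} there is $\sigma\in S_U$ with $\bigl(\sum_\gamma \tilde r^\gamma\bigr)\sigma=0$; since $\sigma$ has degree $0$, the products $\tilde r^\gamma\sigma$ are homogeneous of pairwise distinct degrees, so each $\tilde r^\gamma\sigma=0$, whence each summand $\tilde r^\gamma s^{-1}=0\,1^{-1}$. This simultaneously proves directness and shows that the degree of a nonzero fraction $rs^{-1}$ with $r$ homogeneous is well defined and equal to $\deg r$, as stated.

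Finally, the $\Zn$-commutativity rule transfers at once from $\cO_M(M)$. For homogeneous numerators, using the sign-free multiplication \eqref{m}, the commutativity of the degree-zero denominators, and the identity $rr'=(-1)^{\langle\deg r,\deg r'\rangle}r'r$ in $\cO_M(M)$, one computes
$$rs^{-1}\cdot r's'^{-1}=rr'(ss')^{-1}=(-1)^{\langle\deg r,\deg r'\rangle}r'r(s's)^{-1}=(-1)^{\langle\deg(rs^{-1}),\,\deg(r's'^{-1})\rangle}\,r's'^{-1}\cdot rs^{-1},$$
which is precisely \eqref{SignRule} on the localization. Thus, apart from the direct-sum argument for the grading, every assertion of the proposition is inherited from $\cO_M(M)$ once the centrality of $S_U$ has been put to use.
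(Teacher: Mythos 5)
Your proposal is correct: the paper's own proof is just the single line ``Straightforward verification,'' and your argument supplies exactly the details being waved at --- the centrality of the degree-zero denominators, the routine ring axioms, and (the only genuinely non-trivial point, which you handle properly) the directness of the $\Z_2^n$-grading via a common-denominator argument and the fact that multiplication by a degree-zero $\zs\in S_U$ preserves homogeneous components. Nothing to object to; your writeup is a faithful, and more informative, expansion of what the paper declares routine.
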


\begin{proof}
Straightforward verification.
\end{proof}

We thus get a presheaf
$${\cal L}_M:{\tt Open}(M)\ni U \mapsto \cO_M(M)\cdot S_U^{-1}\in{\Zn}{\tt UAlg}$$
on $M$ valued in the category of $\Z_2^n$-commutative associative unital $\R$-algebras. Indeed, if $V\subset U$ is open, the obvious inclusion $\iota_{V}^U: S_U \hookrightarrow {S}_V$ provides a natural well-defined restriction
$$
r_V^U : {\cal L}_M(U)\ni F s^{-1}\mapsto F \, (\iota_V^U s)^{-1}\in{\cal L}_M(V)\;,
$$
and these restrictions satisfy the usual cocycle condition.\medskip

As indicated above, we will show (in several steps) that the presheaf ${\cal L}_M$ coincides with the structure sheaf $\cO_M$.\medskip

First, since it follows from Proposition \ref{Invertibility} that, for any $s\in S_U$, the restriction $s|_U$ is invertible in $\cO_M(U)$, we have a map
$$\zl_U: {\cal L}_M(U)\ni F s^{-1} \mapsto F|_U (s|_U)^{-1}\in\cO_M(U)\;.$$ This map is well-defined. Indeed, if $F s^{-1}=F' s'^{-1}$, there is $\zs\in S_U$, such that $(F|_Us'|_U-F'|_Us|_U)\,\zs|_U=0$. Since the restrictions $s|_U$, $s'|_U$, and $\zs|_U$ are invertible in $\cO_M(U)$, the claim follows. Further, it can be straightforwardly checked that $\zl_U$ is a morphism of $\Z_2^n$-graded unital $\R$-algebras.\medskip

In fact:

\begin{prop}\label{Thm:localisationamp}
For any open $U\subset M$, the localisation map $\zl_U:{\cal L}_M(U)\to\cO_M(U)$ is a $\Z_2^n$-graded unital $\R$-algebra isomorphism.
\end{prop}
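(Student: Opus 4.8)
The map $\zl_U$ is already known to be a morphism of $\Z_2^n$-graded unital $\R$-algebras, so the plan is simply to prove that it is bijective, handling injectivity and surjectivity separately. A useful preliminary simplification is that $S_U$ consists of homogeneous degree-$0$ elements, so the relation \eqref{Equiv} reduces to its sign-free form: $Fs^{-1}\sim F's'^{-1}$ iff $(Fs'-F's)\,\zs=0$ for some $\zs\in S_U$, and likewise for \eqref{a}--\eqref{m}. As a sanity check on the statement, for $U=M$ one has $S_M=\{F\in\cO^{\,0}_M(M):\ze_M F\ \text{invertible}\}$, which by Proposition \ref{Invertibility} is exactly the set of invertible degree-$0$ elements of $\cO_M(M)$; localizing at units changes nothing and $\zl_M$ is visibly the resulting identification. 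The real content is therefore the case of a proper open $U$.

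For \emph{injectivity}, since $\zl_U$ is an algebra map it suffices to show its kernel is trivial. Suppose $\zl_U(Fs^{-1})=F|_U(s|_U)^{-1}=0$; as $s|_U$ is invertible in $\cO_M(U)$ by Proposition \ref{Invertibility}, this forces $F|_U=0$, so the support of $F$ is a closed subset of $M\setminus U$. I would then choose a smooth $\beta\colon M\to[0,\infty)$ whose strictly positive locus is exactly $U$ (such a $\beta$ exists on any second-countable, hence paracompact, smooth manifold) and set $s:=\beta\cdot 1_{\cO_M(M)}$. Then $\ze_M(s)=\beta$ is nonvanishing on $U$, so $s\in S_U$; and writing $F$ locally as $\sum_\alpha F_\alpha\,\xi^\alpha$, every product $\beta F_\alpha$ vanishes pointwise (on $U$ because $F_\alpha=0$ there, off $U$ because $\beta=0$ there). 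Hence $Fs=0$ in $\cO_M(M)$, so $Fs^{-1}\sim 0$ and $Fs^{-1}=0$. This step is elementary.

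\emph{Surjectivity} is the heart of the matter. Given $G\in\cO_M(U)$ I want a global $F\in\cO_M(M)$ and $s\in S_U$ with $F|_U=s|_U\,G$, for then $\zl_U(Fs^{-1})=F|_U(s|_U)^{-1}=G$. The natural choice is again $s=\beta\cdot 1_{\cO_M(M)}$ for a suitable smooth $\beta$ with $\beta>0$ on $U$ and $\beta=0$ off $U$, arranged so that the section $\beta|_U\,G$ extends by zero to a global section. Locally $G=\sum_\alpha G_\alpha\,\xi^\alpha$ with $G_\alpha\in\Ci$, and $\beta|_U\,G$ extends by zero across $\partial U$ precisely when each coefficient $\beta G_\alpha$ does, i.e.\ when $\beta$ and all its derivatives decay near $\partial U$ faster than the $G_\alpha$ and their derivatives can blow up. Once this holds for all $\alpha$, the extended coefficients are smooth on $M$, they assemble in each chart to an element of $\Ci(V)[[\xi]]$, and the local pieces patch to the desired $F\in\cO_M(M)$ with $F|_U=\beta|_U\,G$; since $\beta>0$ on $U$ gives $s\in S_U$ and $s|_U=\beta|_U$ invertible in $\cO_M(U)$, one gets $G=\zl_U(Fs^{-1})$. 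Homogeneous $G$ are sent to homogeneous fractions of the same degree, so graded surjectivity follows, and together with injectivity this yields the asserted graded isomorphism.

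The main obstacle is exactly the simultaneous taming of the coefficients. Unlike ordinary supergeometry, where the nilpotent odd coordinates force $G$ to be a polynomial in $\xi$ with \emph{finitely many} coefficients, here the non-nilpotent even nonzero-degree parameters make $G$ a genuine formal power series, so $\{G_\alpha\}_{\alpha\in\N^{|\mathbf{q}|}}$ is an infinite family. I therefore expect the crux to be the construction of a single $\beta$ dominating all of them: I would exhaust $U$ by compacts $K_1\subset K_2\subset\cdots$ with $\bigcup_j K_j=U$, enumerate the countably many coefficient functions, and build $\beta$ shell by shell so that on the $j$-th shell near $\partial U$ it is forced small enough to control the finitely many coefficients indexed below $j$; a diagonal argument then guarantees that $\beta G_\alpha$ extends by zero for every fixed $\alpha$. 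Carrying this out uniformly over a countable atlas covering $\partial U$ (legitimate by second countability) is the technical point on which the whole proposition rests.
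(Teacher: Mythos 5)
Your injectivity argument is correct and is in substance the paper's own: the paper manufactures your $\beta$ as $\zs=\sum_i\zs_i\psi_i$ from a partition of unity subordinate to chart domains, and concludes $F\zs=0$ chart by chart exactly as you do. The gap is in surjectivity, and it sits precisely where you yourself locate it. For $\beta G_\alpha$ to extend smoothly by zero across $\partial U$ you need every derivative $D^\gamma(\beta G_\alpha)$ to tend to $0$ at $\partial U$, so by the Leibniz rule you must control \emph{all derivatives of $\beta$} against all derivatives of all the countably many, chart-dependent coefficients $G_\alpha$. Your shell-by-shell prescription only forces $\beta$ to be ``small enough'' on the $j$-th shell: smallness of a positive smooth function on a shell says nothing about its derivatives there, and patching shellwise-prescribed values into a single smooth positive $\beta$ creates derivative growth at the shell interfaces that must itself be fed back into the estimates. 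The lemma you are implicitly invoking --- given countably many smooth functions $h_k$ on $U$, there is one $\beta\in\Ci(M)$, positive exactly on $U$, such that every $\beta h_k$ extends smoothly by zero --- is true, but its proof requires writing $\beta$ as a locally finite sum $\sum_n\delta_n\chi_n$ of bump functions supported in $U$ with $\delta_n$ chosen small against the $C^n$-norms of the relevant data on $\op{supp}\chi_n$. As written, your proof defers its entire analytic content to this unproved construction, and the diagonal bookkeeping over the triple index (chart, coefficient $\alpha$, derivative order) is nowhere carried out.

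The paper sidesteps the flat-function construction altogether, and you should compare. It chooses bump functions $\zg_n$ with $\zg_n|_{U_n}=1$ and $\op{supp}\zg_n\subset U$, so that each $\zg_n f$ is \emph{already} a global section --- no extension-by-zero analysis is needed term by term --- and sets $$F=\sum_{n=0}^\infty\frac{1}{2^n}\,\frac{\zg_nf}{1+p_n(\zg_n)+p_n(\zg_nf)}\,,\qquad s=\sum_{n=0}^\infty\frac{1}{2^n}\,\frac{\zg_n}{1+p_n(\zg_n)+p_n(\zg_nf)}\,,$$ where $(p_n)_n$ is an increasing family of seminorms defining the Fr\'echet topology of $\cO_M(M)$. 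Convergence is immediate from the normalization, the identity $F|_U=f\,s|_U$ follows from continuity of restriction and multiplication, and $s\in S_U$ because $\ze_Ms>0$ on each $U_n$. Note that the paper's $s$ is not of the form $\beta\cdot 1$ fixed in advance; positivity of $\ze_Ms$ on $U$ is a by-product of the series. If you insist on your single-$\beta$ formulation, the weighted bump-sum that makes it rigorous is essentially this series in disguise, so the honest options are either to prove the flat-multiplier lemma in full or to adopt the seminorm-normalized series directly.
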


The proof of this result uses a method that can be found in various works, see for instance \cite{BBH}, \cite{CCF}, and \cite{NS}. We give this proof for completeness, as well as to show that it goes through in our $\Zn$-graded stetting.

\begin{proof}
It suffices to explain why $\zl_U$ is bijective.
\begin{enumerate}
\item \emph{Injectivity:} Assume that $F|_U(s|_U)^{-1}=0$, i.e., that $F|_U=0$, and show that $F s^{-1}\sim 0 1^{-1}$, i.e., that there is $\zs\in S_U$, such that $F\zs=0$. Let $(V_i,\psi_i)$ be a partition of unity of $\cM$, such that the $V_i$ are $\Zn$-chart domains, so that $\cO_M|_{V_i}\simeq\Ci_M|_{V_i}[[\xi]]$. For any $i$, we have $$F|_{U\cap V_i}(x,\xi)=\sum_\za F_\za|_{U\cap V_i}(x)\xi^\za=0,\quad{i.e.},\quad F_\za|_{U\cap V_i}=0,\;\forall \za\;.$$ Let $$\zs_i\in\Ci_M(V_i)\subset\cO^{\,0}_M(V_i),\quad\text{such that}\quad\zs_i|_{U\cap V_i}>0\quad\text{and}\quad \zs_i|_{V_i\setminus(U\cap V_i)}=0\;.$$ It follows that $F|_{V_i}\zs_i=0$. The $\Zn$-function $\zs=\sum_i\zs_i\psi_i\in\cO_M^{\,0}(M)$ has the required properties. Indeed, the open subsets $V_i$ and $\zW_i=M\setminus\op{supp}\psi_i$ cover $M$ and $F\zs_i\psi_i$ vanishes on both, $V_i$ and $\zW_i$, so that $F\zs=\sum_iF\zs_i\psi_i=0$. In addition, for any $m\in U$, we have $(\ze\psi_i)(m)\ge 0$, for all $i$, and there is at least one $j$, such that $(\ze\psi_j)(m)>0$. Since $(\ze\psi_j)|_{\zW_j}=0$, we get $m\in U\cap V_j$, so that $(\zs_j\,\ze\psi_j)(m)>0$, $$(\ze\zs)(m)=\sum_i(\zs_i\,\ze\psi_i)(m)>0\;,$$ and $\zs\in S_U$.

\item \emph{Surjectivity:} We must express an arbitrary $f \in \cO_M(U)$ as a product $f = F|_U(s|_U)^{-1}$, with $F\in\cO_M(M)$ and $s\in S_U$. To construct the global sections $F$ and $s$, consider an increasing countable family of seminorms $p_n$ that implements the locally convex topology of the Fr\'echet space $\cO_M(M)$. Take also a countable open cover $U_n$ of $U$, such that $\bar U_n\subset U$, as well as bump functions $\zg_n\in\cO_M^{\,0}(M)$, which satisfy $\zg_n|_{U_n}=1$ (which implies that $(\ze_M\zg_n)|_{U_n}=1$), $\op{supp}\zg_n\subset U$, and $\ze_M\zg_n\ge 0$. The following series converge in $\cO_M(M)$ and provide us with the required global sections:
\begin{align*}
& F := \sum_{n=0}^\infty \frac{1}{2^n} \frac{\zg_n \: f}{1+ p_n(\zg_n) + p_n(\zg_n \: f)}\quad \text{and}\quad s := \sum_{n=0}^\infty \frac{1}{2^n} \frac{\zg_n}{1+ p_n(\zg_n) + p_n(\zg_n \: f)}\;.
\end{align*}
\vspace{1cm}Indeed, convergence follows, if we can show that the series are Cauchy, i.e., if they are Cauchy with respect to each $p_m$. If $r,s\to\infty$, we get $r\ge m$, and, since the seminorms are increasing, we have
$$p_m \left( \sum_{n=r}^s \frac{1}{2^n} \frac{\zg_n \: f}{1+ p_n(\zg_n) + p_n(\zg_n \: f)}\right) \leq  \sum _{n=r}^s \frac{1}{2^n} \frac{p_m(\zg_n \: f)}{1+ p_n(\zg_n) + p_n(\zg_n \: f)} < \sum_{n=r}^s \frac{1}{2^n}\to 0,$$ whether the factor $f$ is present or not. On the other hand, as restrictions are continuous, it is clear that $F|_U = f s|_U$, so that $f=F|_U(s|_U)^{-1}$, provided we show that $s\in\cO_M^{\,0}(M)$ belongs to $S_U$, i.e., that $(\ze_M s)(m)\neq 0$, for all $m\in U$. Too see this, remark that $(\id,\ze):(M,\Ci_M)\to (M,\cO_M)$ is a morphism of $\Zn$-manifolds, so that $\ze_M:\cO_M(M)\to\Ci_M(M)$ is continuous, see \cite[Theorem 19]{BP1}. For any $U_m$ of the cover of $U$, we thus get $$(\ze_M s)|_{U_m} := \sum_{n=0}^\infty \frac{1}{2^n} \frac{(\ze_M\zg_n)|_{U_m}}{1+ p_n(\zg_n) + p_n(\zg_n \: f)}>0\;,$$ in view of the properties of $\zg_n$.\end{enumerate}\end{proof}
\begin{theorem}\label{Thm:localisationofsheaf}
The $\Z_2^n$-commutative associative unital $\R$-algebra $\cO_M(M)$ of global sections of the structure sheaf of a $\Zn$-manifold $(M, \cO_M)$ fully determines this sheaf. More precisely, there is a presheaf isomorphism $\zl:{\cal L}_M\to \cO_M$, so that the presheaf ${\cal L}_M$, which is obtained from $\cO_M(M)$, is actually a sheaf, which is isomorphic to the structure sheaf $\cO_M$.
\end{theorem}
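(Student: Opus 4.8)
The plan is to deduce the theorem from Proposition \ref{Thm:localisationamp} in three formal steps: assemble the component isomorphisms $\zl_U$ into a morphism of presheaves, observe that it is then a presheaf isomorphism, and transport the sheaf axioms from $\cO_M$ to ${\cal L}_M$ across it. First I would verify that the family $(\zl_U)_{U\in{\tt Open}(M)}$ is natural, i.e. defines a presheaf morphism $\zl:{\cal L}_M\to\cO_M$. For open subsets $V\subset U$, I check that the square built from the restriction $r_V^U$ of ${\cal L}_M$, the maps $\zl_U,\zl_V$, and the structure-sheaf restriction $\cO_M(U)\to\cO_M(V)$ commutes. On a fraction $F s^{-1}\in{\cal L}_M(U)$ one route yields $\zl_V\big(r_V^U(F s^{-1})\big)=\zl_V\big(F(\iota_V^U s)^{-1}\big)=F|_V\,(s|_V)^{-1}$, using $(\iota_V^U s)|_V=s|_V$ from the definition of the inclusion $\iota_V^U:S_U\hookrightarrow S_V$; the other route restricts $\zl_U(F s^{-1})=F|_U\,(s|_U)^{-1}$ to $V$ and gives $F|_V\,((s|_U)|_V)^{-1}=F|_V\,(s|_V)^{-1}$, since the restriction of $\Zn$-functions is a unital algebra morphism and hence commutes with inverses. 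The two coincide, so $\zl$ is natural.

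By Proposition \ref{Thm:localisationamp} each $\zl_U$ is a $\Zn$-graded unital $\R$-algebra isomorphism, so $\zl$ is an isomorphism of presheaves, its inverse being $(\zl_U^{-1})_U$ (automatically natural). It remains to transport the sheaf property. Given an open cover $\{U_i\}$ of an open $U\subset M$ and sections $a_i\in{\cal L}_M(U_i)$ agreeing on overlaps, naturality makes the $\zl_{U_i}(a_i)$ agree on overlaps in $\cO_M$; as $\cO_M$ is a sheaf they glue to a unique $t\in\cO_M(U)$, and $\zl_U^{-1}(t)\in{\cal L}_M(U)$ is then the unique section of ${\cal L}_M$ restricting to all the $a_i$. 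Separatedness follows likewise from injectivity of the $\zl_U$. Hence ${\cal L}_M$ satisfies the gluing and separation axioms, so it is a sheaf, isomorphic to $\cO_M$ through $\zl$; this is the standard fact that the sheaf condition is invariant under presheaf isomorphism.

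The substantive difficulty --- surjectivity of $\zl_U$, which needed the explicit Fr\'echet bump-function series --- is already settled in Proposition \ref{Thm:localisationamp}. For the present theorem the only check with genuine content is naturality, which collapses to the remark that restrictions are algebra morphisms and therefore respect inverses; the step from ``isomorphic to a sheaf'' to ``is a sheaf'' is purely formal. I would expect no real obstacle to remain.
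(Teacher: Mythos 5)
Your proposal is correct and follows essentially the same route as the paper: the paper's proof likewise reduces everything to the naturality check $\zl_V(r^U_V(F s^{-1}))=F|_V(s|_V)^{-1}=\zr^U_V(\zl_U(F s^{-1}))$, with the surjectivity already carried by Proposition \ref{Thm:localisationamp}. Your explicit verification that the sheaf axioms transport across a presheaf isomorphism is a step the paper leaves implicit, but it is the same argument.
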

\begin{proof} It suffices to check that the family $\zl_U:{\cal L}_M(U)\to\cO_M(U)$, $U\in{\tt Open}(M)$, of $\Zn$-graded unital $\R$-algebra isomorphisms, commutes with the restrictions $r^U_V$ in ${\cal L}_M$ and $\zr^U_V$ in $\cO_M$ ($V\subset U$, $V\in{\tt Open}(M)$). This is actually obvious: $$\zl_V(r^U_V(F s^{-1}))=F|_V(s|_V)^{-1}=\zr^U_V(\zl_U(F s^{-1}))\;.$$\end{proof}

\subsection{Reconstruction of a $\Zn$-morphism}

In algebraic geometry, any commutative unital ring morphism $\psi:S\to R$ defines a morphism of affine schemes $\Phi=(\phi,\phi^*):(\op{Spec}R,\cO_{\op{Spec}R})\to (\op{Spec}S,\cO_{\op{Spec}S})$, whose continuous base map $\phi$ associates to each prime ideal $\frak p$ the prime ideal $\psi^{-1}({\frak p})$. A similar result exists in the category of $\Zn$-manifolds and $\Zn$-morphisms, with the same definition of the continuous base map.

\begin{theorem}\label{Thm:SpaceAlgebraMorphBijection}
Let $\mathcal{M} = (M, \cO_M)$ and $\,\mathcal{N} = (N, \cO_N)$  be $\Z_2^n$-manifolds. The map
$$\zb:\Hom_{\Z_2^n{\tt Man}}\big(\mathcal{M},\mathcal{N} \big)\ni\Phi=(\phi,\phi^*) \mapsto \phi^*_N\in\Hom_{\Z_2^n{\tt UAlg}}\big ( \cO_N(N), \cO_M(M) \big)$$
is a bijection.
\end{theorem}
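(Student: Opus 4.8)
The plan is to reconstruct the whole morphism $\Phi=(\phi,\phi^*)$ from the single algebra morphism $\phi^*_N$ by feeding it through the machinery already in place: the spectrum bijection of Proposition \ref{Bijections}, the homeomorphism of Proposition \ref{Homeo}, and above all the localization description $\cO_M(U)\simeq\cO_M(M)\cdot S_U^{-1}$ of Theorem \ref{Thm:localisationofsheaf}. The cornerstone, used in both directions, is the naturality identity linking a pullback to the reduction $\ze$: for a genuine $\Z_2^n$-morphism and any $G\in\cO_N(N)$ one has $\ze_M(\phi^*_N(G))=\ze_N(G)\circ\phi$, equivalently $\ze_m\circ\phi^*_N=\ze_{\phi(m)}$ for every $m\in M$. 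I would record this first (it expresses that $\ze$ is natural and that $\phi$ is the underlying reduced base map, and is read off from the definition of a $\Z_2^n$-morphism together with the properties of $\ze$ from \cite{BP1}), since it is what converts the algebraic datum into the geometric base map.

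For injectivity of $\zb$, suppose $\phi^*_N=\psi^*_N$. Composing with $\ze_m$ and using the identity above gives $\ze_{\phi(m)}=\ze_{\psi(m)}$, whence $\phi(m)=\psi(m)$ for all $m$ by Proposition \ref{Bijections}; so the base maps coincide. That the sheaf morphisms coincide as well follows from Theorem \ref{Thm:localisationofsheaf}: every section of $\cO_N(V)$ has the form $G|_V(s|_V)^{-1}$ with $G\in\cO_N(N)$ and $s\in S_V$, and compatibility of $\phi^*$ with restrictions forces $\phi^*_V(G|_V(s|_V)^{-1})=\phi^*_N(G)|_{\phi^{-1}(V)}(\phi^*_N(s)|_{\phi^{-1}(V)})^{-1}$. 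Here $\phi^*_N(s)\in S_{\phi^{-1}(V)}$ by the naturality identity combined with Proposition \ref{Invertibility}. Thus $\phi^*$ on all open sets is determined by $\phi^*_N$, proving injectivity.

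For surjectivity, let $\chi\in\Hom_{\Z_2^n{\tt UAlg}}(\cO_N(N),\cO_M(M))$ be arbitrary. Define the base map $\phi$ by sending $m$ to the unique point $\phi(m)\in N$ with $\ze_{\phi(m)}=\ze_m\circ\chi$ (legitimate by Proposition \ref{Bijections}, since $\ze_m\circ\chi$ is an algebra morphism to $\R$; equivalently $\phi(m)$ corresponds to $\chi^{-1}(\zm_m)$). Continuity of $\phi$ follows from Proposition \ref{Homeo}, because pulling back maximal ideals along $\chi$ is continuous for the Zariski topology (preimages of basic opens are basic opens) and transporting through the homeomorphisms $\frak b$ gives $\phi$ continuous. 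To build $\phi^*$, fix open $V\subset N$ and consider the composite $\cO_N(N)\xrightarrow{\chi}\cO_M(M)\to\cO_M(\phi^{-1}(V))$; the constructed $\phi$ satisfies $\ze_M(\chi(s))=\ze_N(s)\circ\phi$, so each $s\in S_V$ maps to an element with nonvanishing base projection on $\phi^{-1}(V)$, hence invertible there by Proposition \ref{Invertibility}. By the universal property of the localization $\cO_N(N)\cdot S_V^{-1}\simeq\cO_N(V)$, this composite factors uniquely through a graded unital algebra morphism $\phi^*_V:\cO_N(V)\to\cO_M(\phi^{-1}(V))$, and uniqueness makes the family $\{\phi^*_V\}$ commute with restrictions, so $(\phi,\phi^*)$ is a $\Z_2^n$-morphism. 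Taking $V=N$, where $S_N$ consists of invertible global sections and the localization is $\cO_N(N)$ itself, yields $\phi^*_N=\chi$.

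I expect the main obstacle to be the surjectivity step, and specifically the verification that $\chi$ carries the multiplicative set $S_V$ into $S_{\phi^{-1}(V)}$ so that the universal property of localization applies. This is exactly the point where the naturality identity $\ze_M(\chi(s))=\ze_N(s)\circ\phi$, the invertibility criterion of Proposition \ref{Invertibility}, and the very definition of $\phi$ as the point dual to $\ze_m\circ\chi$ must be combined. By comparison, the continuity of the base map and the routine checks that each $\phi^*_V$ is a graded algebra morphism commuting with restrictions are mechanical.
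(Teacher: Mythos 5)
Your proposal is correct and follows essentially the same route as the paper: the base map is recovered/constructed through the spectrum bijections of Propositions \ref{Bijections} and \ref{Homeo} via $\ze_m\circ\chi=\ze_{\phi(m)}$, and the sheaf morphism through the localization description $\cO_N(V)\simeq\cO_N(N)\cdot S_V^{-1}$ together with Proposition \ref{Invertibility}; your appeal to the universal property of localization is just a repackaging of the paper's explicit formula $Fs^{-1}\mapsto\chi(F)\chi(s)^{-1}$.
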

\begin{proof}
To show that $\zb$ is surjective, we consider $\psi\in\h_{\Zn\tt UAlg}(\cO_N(N),\cO_M(M))$ and construct $\Phi\in\h_{\Zn\tt Man}(\cM,\cN)$, such that $\phi^*_N=\psi$.\medskip

Since $M$ (resp., $N$) endowed with its base space topology is homeomorphic to $\op{Spm}(\cO_M(M))$ (resp., $\op{Spm}(\cO_N(N))$) endowed with the Zariski topology, we define $\zf$ by $$\zf:\op{Spm}(\cO_M(M))\ni\ker\ze_m\mapsto \ker(\ze_m\circ\psi)=\ker\ze_n\in\op{Spm}(\cO_N(N))\;,$$ see Propositions \ref{Homeo} and \ref{Bijections}. This map is continuous. Indeed, for any $F\in\cO_N(N)$, the preimage by $\zf$ of the open subset $$V(F)=\{\ker\ze_n\in\op{Spm}(\cO_N(N)): \ze_n(F)\neq 0\}$$ is the subset $$\zf^{-1}(V(F))=\{\ker\ze_m\in\op{Spm}(\cO_M(M)):\ze_m(\psi(F))\neq 0\}=V(\psi(F))\;.$$ 

\newcommand{\cL}{{\cal L}}

To define, for any open $V\subset N$, a $\Zn$-graded unital $\R$-algebra morphism $$\phi^*_V:\cO_N(V)\to (\phi_*\cO_M)(V)\;,$$ we rely on the isomorphism of $\Zn$-graded unital $\R$-algebras $\cO_N(V)\simeq \cL_N(V)$ and the similar isomorphism in $M$. Hence, we define $\phi^*_V$ by $$\phi^*_V:\cL_N(V)\ni F s^{-1}\mapsto \psi(F)\psi(s)^{-1}\in\cL_M(\phi^{-1}(V))\;.$$ This map is actually well-defined. Since $s\notin\ker\ze_n$, for all $n\in V$, we have $s\notin \ker(\ze_m\circ \psi),$ for all $m\in\phi^{-1}(V)$, what means that $\psi(s)\in S_{\zf^{-1}(V)}$. In view of this, it is easy to see that the image is independent of the representative. The map $\phi^*_V$ is a $\Zn$-graded unital $\R$-algebra morphism, because $\psi$ is.\medskip

As the family $\zf^*_V$, $V\in{\tt Open}(N)$, commutes obviously with restrictions, the continuous base map $\zf$ and the family of algebra morphisms $\zf^*_V$, $V\in{\tt Open}(N)$, define a $\Zn$-morphism $\zF:\cM\to\cN\,$. Too see that $\zb(\Phi)=\zf^*_N=\psi$, it suffices to note that $\zl_N:\cL_N(N)\to\cO_N(N)$ sends the fraction $Fs^{-1}$ to the section $Fs^{-1}$ (and similarly for $M$), so that $\zf^*_N$ and $\psi$ coincide.\medskip

It remains to prove that $\zb$ is injective. Let thus $\zF=(\zf,\zf^*)$ and $\Psi=(\psi,\psi^*)$ be two $\Zn$-morphisms from $\cM$ to $\cN$, such that $\zf^*_N=\psi^*_N$. Since the pullbacks by a $\Zn$-morphism commute with the base projections, we get, for any $m\in M$, $$\zf(m)\simeq\ker\ze_{\zf(m)}=\{F\in\cO(N):((\ze_NF)\circ\zf)(m)=0\}=\{F\in\cO(N):(\ze_M(\zf^*_NF))(m)=0\}\;.$$ Hence, the continuous base maps $\zf$ and $\psi$ coincide. Similarly, for any open $V\subset N$, each $F_V\in\cO_N(V)$ reads uniquely $F_V=\zl_V(Fs^{-1})=F|_V(s|_V)^{-1}$, with $F\in\cO_N(N)$ and $s\in S_V\subset\cO_N(N)$, see Proposition \ref{Thm:localisationamp}. As the family of pullbacks $\zf^*$ commutes with restrictions, we obtain $$\zf^*_V(F_V)=(\zf^*_VF|_V)\,(\zf^*_Vs|_V)^{-1}=(\zf^*_NF)|_{\zf^{-1}(V)}\,(\zf^*_Ns)|_{\zf^{-1}(V)}^{-1}\;.$$ Hence $\zf^*_V=\psi^*_V$.\end{proof}

The preceding theorem, which allows us to characterize $\cM$-points $\h_{\Zn\tt Man}(\cM,\cN)$ of a $\Zn$-manifold $\cN$ by algebra morphisms, has some noteworthy corollaries.

\begin{cor}\label{Thm:FunctorManAlg}
The covariant functor
$${\cF} : \Z_2^n{\tt Man} \to \Z_2^n{\tt UAlg}^{\tt op}\,,$$
which is defined on objects by $\cF(\cM)=\cO_M(M)$ and on morphisms by $\cF(\zF)=\zf^*_N$, is fully faithful, so that $\Zn${\tt Man} can be viewed as full subcategory of $\Zn{\tt UAlg}^{\op{op}}$.
\end{cor}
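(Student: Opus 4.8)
The plan is to read off full faithfulness directly from Theorem \ref{Thm:SpaceAlgebraMorphBijection}, after disentangling the opposite category. By definition, $\cF$ is fully faithful precisely when, for every pair of $\Z_2^n$-manifolds $\cM$ and $\cN$, the induced map
$$\cF:\Hom_{\Z_2^n{\tt Man}}(\cM,\cN)\to\Hom_{\Z_2^n{\tt UAlg}^{\tt op}}\big(\cO_M(M),\cO_N(N)\big)$$
is a bijection. First I would unwind the target: a morphism $\cO_M(M)\to\cO_N(N)$ in $\Z_2^n{\tt UAlg}^{\tt op}$ is, by the very definition of the opposite category, a morphism $\cO_N(N)\to\cO_M(M)$ in $\Z_2^n{\tt UAlg}$, so that
$$\Hom_{\Z_2^n{\tt UAlg}^{\tt op}}\big(\cO_M(M),\cO_N(N)\big)=\Hom_{\Z_2^n{\tt UAlg}}\big(\cO_N(N),\cO_M(M)\big).$$
Under this identification, the map induced by $\cF$ sends $\Phi=(\phi,\phi^*)$ to $\phi^*_N$, which is exactly the map $\zb$ of Theorem \ref{Thm:SpaceAlgebraMorphBijection}. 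Since $\zb$ is a bijection, full faithfulness follows at once, provided I have first confirmed that $\cF$ is genuinely a functor.

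Next I would verify functoriality, which is routine. On identities, $\cF(\id_\cM)=(\id_M)^*_M=\id_{\cO_M(M)}$. For composition, given $\Phi:\cM\to\cN$ and $\Psi:\cN\to\cP$, the pullback of the composite on global sections satisfies $(\Psi\circ\Phi)^*_P=\phi^*_N\circ\psi^*_P$, because the base maps compose as $\psi\circ\phi$ with $\psi^{-1}(P)=N$ and $\phi^{-1}(N)=M$, and pullbacks of $\Z_2^n$-ringed space morphisms compose contravariantly while commuting with restriction to global sections. In $\Z_2^n{\tt UAlg}^{\tt op}$ the composite $\cF(\Psi)\circ\cF(\Phi)$ is computed by reversing arrows, i.e., it is $\phi^*_N\circ\psi^*_P$ read in $\Z_2^n{\tt UAlg}$, which matches $\cF(\Psi\circ\Phi)$. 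Hence $\cF$ preserves identities and composition.

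Finally, for the claim that $\Z_2^n{\tt Man}$ can be regarded as a full subcategory of $\Z_2^n{\tt UAlg}^{\tt op}$, I would combine full faithfulness with injectivity on objects up to isomorphism. The reconstruction statements already established---Theorem \ref{Thm:localisationofsheaf}, which recovers $\cO_M$ from $\cO_M(M)$, together with Proposition \ref{Homeo}, which recovers the base space $M$---show that two $\Z_2^n$-manifolds with isomorphic global-section algebras are themselves isomorphic. A fully faithful functor that is injective on isomorphism classes is an equivalence onto its essential image, which is a full subcategory of the target; this is the precise meaning of the assertion.

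I do not expect a genuine obstacle here: all of the analytic and sheaf-theoretic content has already been absorbed into Theorem \ref{Thm:SpaceAlgebraMorphBijection} and the preceding reconstruction results. The only points demanding care are the bookkeeping in the opposite category and the contravariance of pullback under composition, both of which are purely formal.
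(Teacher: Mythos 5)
Your proposal is correct and follows essentially the same route as the paper: the corollary is read off directly from the bijection $\zb$ of Theorem \ref{Thm:SpaceAlgebraMorphBijection} after unwinding the opposite category, with functoriality a routine check. The only (harmless) deviation is that you invoke the reconstruction results to get injectivity on objects up to isomorphism, whereas the paper notes this is automatic for any fully faithful functor and needs no further input.
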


The statement regarding the full subcategory is based on the well-known fact that any fully faithful functor is injective up to isomorphism on objects. This means that the existence of an isomorphism $\cO_M(M)\simeq\cO_N(N)$ of $\Zn$-graded unital $\R$-algebras implies the existence of an isomorphism $\cM\simeq\cN$ of $\Zn$-manifolds.

\begin{cor}\label{Pursell-Shanks}
Let $\mathcal{M}=(M ,\cO_M)$ and $\mathcal{N}=(N , \cO_N)$ be $\Z_2^n$-manifolds. The $\Zn$-manifolds $\mathcal{M}$ and $\mathcal{N}$ are diffeomorphic if and only if their $\Z_2^n$-commutative associative unital $\R$-algebras $\cO_M(M)$ and $\cO_N(N)$ of global $\Zn$-functions are isomorphic.
\end{cor}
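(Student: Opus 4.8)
The plan is to obtain the statement as a formal consequence of Theorem \ref{Thm:SpaceAlgebraMorphBijection} and Corollary \ref{Thm:FunctorManAlg}, the only nontrivial input being the standard categorical fact that a fully faithful functor reflects isomorphisms. Recall that two $\Zn$-manifolds are diffeomorphic precisely when they are isomorphic as objects of $\Zn{\tt Man}$, and that an isomorphism $\cO_M(M)\simeq\cO_N(N)$ in $\Zn{\tt UAlg}$ is the same datum as an isomorphism $\cO_M(M)\simeq\cO_N(N)$ in $\Zn{\tt UAlg}^{\op{op}}$.

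The forward implication is immediate: if $\Phi\in\Hom_{\Zn{\tt Man}}(\cM,\cN)$ is a diffeomorphism, then, since the functor $\cF$ of Corollary \ref{Thm:FunctorManAlg} preserves isomorphisms, $\cF(\Phi)=\phi^*_N$ is an isomorphism in $\Zn{\tt UAlg}^{\op{op}}$, whence $\cO_M(M)$ and $\cO_N(N)$ are isomorphic $\Zn$-graded unital $\R$-algebras.

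For the converse I would fix an algebra isomorphism $\psi:\cO_N(N)\to\cO_M(M)$ with inverse $\psi^{-1}$. By the surjectivity part of Theorem \ref{Thm:SpaceAlgebraMorphBijection}, realize $\psi$ as the top-level pullback of a morphism $\Phi=(\phi,\phi^*):\cM\to\cN$ (so that $\phi^*_N=\psi$) and $\psi^{-1}$ as the top-level pullback of a morphism $\Psi=(\chi,\chi^*):\cN\to\cM$ (so that $\chi^*_M=\psi^{-1}$). It then remains to verify that $\Psi\circ\Phi=\id_\cM$ and $\Phi\circ\Psi=\id_\cN$. Since the top-level pullback of a composite is the reversed composite of the top-level pullbacks, the pullback of $\Psi\circ\Phi$ equals $\phi^*_N\circ\chi^*_M=\psi\circ\psi^{-1}=\id_{\cO_M(M)}$, which coincides with the top-level pullback of $\id_\cM$. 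The injectivity part of Theorem \ref{Thm:SpaceAlgebraMorphBijection}, applied to $\Hom_{\Zn{\tt Man}}(\cM,\cM)$, then forces $\Psi\circ\Phi=\id_\cM$, and symmetrically $\Phi\circ\Psi=\id_\cN$; hence $\Phi$ is a diffeomorphism and $\cM$, $\cN$ are diffeomorphic.

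There is essentially no genuine obstacle here, as the statement merely repackages results already established; the only points demanding care are the contravariant bookkeeping (an algebra isomorphism must be read as an isomorphism in the opposite category, and the pullback of a composite reverses the order of the factors) and the observation that the inverse algebra morphism $\psi^{-1}$ is again realized by a $\Zn$-morphism, which is exactly what the surjectivity in Theorem \ref{Thm:SpaceAlgebraMorphBijection} guarantees. Alternatively, the entire converse can be shortcut by invoking directly the remark following Corollary \ref{Thm:FunctorManAlg}, namely that a fully faithful functor is injective up to isomorphism on objects.
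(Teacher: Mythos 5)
Your proof is correct and follows essentially the same route as the paper: the corollary is obtained from the full faithfulness of the global-sections functor (equivalently, from the bijection $\zb$ of Theorem \ref{Thm:SpaceAlgebraMorphBijection}), a fully faithful functor reflecting isomorphisms. The paper leaves the verification implicit in the remark after Corollary \ref{Thm:FunctorManAlg}; you have merely written out the standard argument, with the contravariant bookkeeping handled correctly.
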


Such Pursell-Shanks type results have been studied extensively by one of the authors of this paper. Algebraic characterizations similar to Corollary \ref{Pursell-Shanks} exist for instance for the Lie algebras of first order differential operators, of differential operators, and symbols of differential operators on a smooth manifold, for the super Lie algebras of vector fields and first order differential operators on a smooth supermanifold, as well as for the Lie algebra of sections of an Atiyah algebroid, see \cite{GP1}, \cite{GP3}, \cite{GP4}, \cite{GP5}.

\begin{cor}\label{IniTer}
The $\Zn$-manifold $\cE=(\emptyset,0)$ $(\,$resp., $\R^{0|\mathbf{0}}=(\{\op{pt}\},\R)$$\,)$ is the initial $(\,$resp., terminal$\,)$ object of the category of $\Zn$-manifolds.
\end{cor}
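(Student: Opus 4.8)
The plan is to prove Corollary \ref{IniTer} by directly verifying the two universal properties, using the reconstruction machinery established in Theorem \ref{Thm:SpaceAlgebraMorphBijection} to reduce the geometric statements to algebraic ones. Recall that an initial object is one admitting a unique morphism \emph{to} every object, and a terminal object is one admitting a unique morphism \emph{from} every object.

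\medskip

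First I would treat the initial object $\cE=(\emptyset,0)$. Given any $\Zn$-manifold $\cM=(M,\cO_M)$, a morphism $\cE\to\cM$ consists of a continuous map $\zf:\emptyset\to M$ together with a sheaf morphism. Since $\emptyset$ is the initial object in the category of topological spaces, there is exactly one such continuous map (the empty map), and there is nothing to specify for the pullback since $\cO_\cE$ is the zero sheaf; hence $\Hom_{\Z_2^n{\tt Man}}(\cE,\cM)$ is a singleton. Thus $\cE$ is initial. (Alternatively, one can phrase this at the algebraic level: $\cO_\cE(\emptyset)=0$ is the zero algebra, and in $\Zn{\tt UAlg}$ there is no unital algebra morphism \emph{out of} the zero ring into a nonzero ring, but the direction of Corollary \ref{Thm:FunctorManAlg} must be handled with care because $\cE$ lies outside the second-countable nonempty setup; the direct topological argument is cleaner and avoids this subtlety.)

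\medskip

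Next I would treat the terminal object $\R^{0|\mathbf{0}}=(\{\op{pt}\},\R)$, whose structure algebra of global sections is $\cO(\{\op{pt}\})=\R$. By Theorem \ref{Thm:SpaceAlgebraMorphBijection}, for any $\Zn$-manifold $\cM$ the map $\zb$ gives a bijection
$$\Hom_{\Z_2^n{\tt Man}}\big(\cM,\R^{0|\mathbf{0}}\big)\;\simeq\;\Hom_{\Z_2^n{\tt UAlg}}\big(\R,\cO_M(M)\big)\;.$$
It then suffices to observe that $\R$ is the initial object in $\Zn{\tt UAlg}$: any $\Zn$-graded unital $\R$-algebra morphism $\R\to\cO_M(M)$ must send $1\mapsto 1_{\cO_M(M)}$ and is $\R$-linear, hence is forced to be the unit map $\frak r\mapsto\frak r\cdot 1_{\cO_M(M)}$, and is therefore unique. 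Consequently $\Hom_{\Z_2^n{\tt Man}}(\cM,\R^{0|\mathbf{0}})$ is a singleton, so $\R^{0|\mathbf{0}}$ is terminal.

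\medskip

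I expect the main (and only genuine) obstacle to be the status of the empty $\Zn$-manifold $\cE$ relative to the hypotheses used throughout the reconstruction results: Theorem \ref{Thm:SpaceAlgebraMorphBijection} and Proposition \ref{Bijections} implicitly rely on $M$ being nonempty (so that $\op{Spm}(\cO(M))$ is nonempty and the evaluation maps exist). For this reason I would not route the initial-object claim through $\zb$, but argue it by hand at the level of topological spaces and sheaves as above, reserving the clean application of Theorem \ref{Thm:SpaceAlgebraMorphBijection} for the terminal object, where $\cM$ is an arbitrary (possibly empty) $\Zn$-manifold but the relevant $\Hom$-set in $\Zn{\tt UAlg}$ is $\Hom(\R,\cO_M(M))$, which is a singleton even when $\cO_M(M)=0$ is the zero algebra.
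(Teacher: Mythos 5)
Your proof is correct, and your treatment of the terminal object is exactly the paper's: both invoke the bijection $\zb$ of Theorem \ref{Thm:SpaceAlgebraMorphBijection} to identify $\Hom_{\Z_2^n{\tt Man}}(\cM,\R^{0|\mathbf{0}})$ with $\Hom_{\Z_2^n{\tt UAlg}}(\R,\cO_M(M))=\{\frak r\mapsto\frak r\cdot 1\}$. Where you diverge is the initial object: the paper also routes this through $\zb$, writing $\h_{\Zn\tt Man}(\cE,\cM)\simeq\h_{\Zn{\tt UAlg}}(\cO_M(M),0)\simeq\{F\mapsto 0\}$, i.e., it applies the reconstruction theorem with the empty manifold as source, whereas you argue by hand that the empty map is the unique continuous base map and that every component of the pullback is forced to be the zero morphism into the zero algebra. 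Your stated reason for avoiding $\zb$ here --- that Proposition \ref{Bijections} and Theorem \ref{Thm:SpaceAlgebraMorphBijection} are phrased in terms of evaluation maps at points and $\op{Spm}(0)=\emptyset$ --- is a fair observation about an edge case the paper does not comment on; in fact the theorem does degenerate correctly (both sides of $\zb$ are singletons when the source is $\cE$, since the zero map is the unique unital morphism into the zero ring, where $0=1$), so the paper's one-line argument is not wrong, but your direct verification is self-contained and sidesteps any worry about whether the reconstruction machinery was set up to cover the empty base space. The trade-off is purely one of economy versus robustness; both arguments are valid.
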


\begin{proof}
For any $\Zn$-manifold $\cM=(M,\cO_M)$, we have bijections $$\h_{\Zn\tt Man}(\cE,\cM)\simeq \h_{\Zn{\tt UAlg}}(\cO_M(M),0)\simeq \{F\mapsto 0\}\;$$ and $$\h_{\Zn\tt Man}(\cM,\R^{0|\mathbf{0}})\simeq \h_{\Zn{\tt UAlg}}(\R,\cO_M(M))\simeq\{r\mapsto r\cdot 1\}\;.$$\end{proof}

\section{Finite products in the category of $\Zn$-manifolds}

\subsection{Cartesian product of $\Z_2^n$-manifolds}\label{CartProdZ2nMan}

Let $\cM=(M,\cO_M)$ and ${\cN}=(N,\cO_N)$ be two $\Zn$-manifolds of dimension $p|\mathbf{q}$ and $r|\mathbf{s}\,$, respectively. The products $U\times V$, $U\subset M$ and $V\subset N$ open, form a basis $\cB$ of the (second-countable, Hausdorff) product topology of $M\times N$. Better, since the $\Zn$-chart domains $U_i$ in $M$ (resp., $V_j$ in $N$) are a basis of the topology of $M$ (resp., of $N$), the products $U_i\times V_j$ form a basis $\frak B$ of the product topology of $M\times N$. As $\Z_2^n$-chart domains are diffeomorphic to open subsets of some coordinate space $\R^n$, we identify the $U_i$ and the $V_j$ with the diffeomorphic $U_i\subset\R^p$ and $V_j\subset\R^r$. Further, we denote the coordinates of the charts with domains $U_i$ (resp., $V_j$) by $(x_i,\xi_i)$ (resp., $(y_j,\zh_j)$), or, in case we use only two domains $U_i$ (resp., $V_j$), we write also $(x,\xi)$ and $(x',\xi')$ (resp., $(y,\zh)$ and $(y',\zh')$).\medskip

\begin{defi}\label{ProdZnMan} Let $\cM=(M,\cO_M)$ and ${\cN}=(N,\cO_N)$ be two $\Zn$-manifolds of dimension $p|\mathbf{q}$ and $r|\mathbf{s}\,$, respectively. The \emph{product $\Zn$-manifold} $\cM\times\cN$, of dimension $p+r|\mathbf{q}+\mathbf{s}$, is the locally $\Zn$-ringed space $(M\times N,\cO_{M\times N})$, where $M\times N$ is the product topological space and where the sheaf $\cO_{M\times N}$ is glued from the sheaves $\Ci_{U_i\times V_j}(x_i,y_j)[[\xi_i,\zh_j]]$ associated to the basis $\frak B$: \be\cO_{M\times N}|_{U_i\times V_j}\simeq\Ci_{U_i\times V_j}(x_i,y_j)[[\xi_i,\zh_j]]\;.\label{DefProdMan}\ee\end{defi}

Recall that sheaves can be glued. More precisely, if $(U_i)_i$ is an open cover of a topological space $M$, if ${\cal F}_i$ is a sheaf on $U_i$, and if $\zvf_{ji}:{\cal F}_i|_{U_i\cap U_j}\to {\cal F}_j|_{U_i\cap U_j}$ is a sheaf isomorphism such that the usual cocycle condition $\zvf_{kj}\,\zvf_{ji}=\zvf_{ki}$ holds, then there is a unique sheaf ${\cal F}$ on $M$ such that ${\cal F}|_{U_i}\simeq{\cal F}_i$. In the following, we set $U_{ij}=U_i\cap U_j$.\medskip

Let now $\Ci_{U_i\times V_j}[[\xi,\zh]]$ be the standard sheaf of $\Zn$-graded $\Zn$-commutative associative unital $\R$-algebras of formal power series in $(\xi,\zh)$ with coefficients in sections of the sheaf $\Ci_{U_i\times V_j}$. The isomorphisms $\zvf_{\frak{ij},ij}$ between the appropriate restrictions of the sheaves of algebras $\Ci_{U_i\times V_j}[[\xi,\zh]]$ on the open cover $(U_i\times V_j)_{i,j}$ of $M\times N$ are induced as follows. Since $\cM$ is a $\Zn$-manifold, there are $\Zn$-isomorphisms $$\Phi_i=(\zf_i,\zf_i^*):(U_i,\cO_M|_{U_i})\to(U_i,\Ci_{U_i}[[\xi]])\;,$$ which induce $\Zn$-isomorphisms or coordinate transformations $$\Psi_{\frak{i}i}=\Phi_{\frak i}\Phi_i^{-1}:(U_{i\frak{i}},\Ci_{U_i}|_{U_{i\frak{i}}}[[\xi]])\to (U_{i\frak{i}},\Ci_{U_{\frak i}}|_{U_{i\frak{i}}}[[\xi']])\;.$$ As we view $U_i$ as both, an open subset of $M$ and an open subset of $\R^p$, we implicitly identify $U_i$ with its diffeomorphic image $\zf_i(U_i)$, so that $\zf_i=\id_{U_i}$. Hence, the coordinate transformations reduce to the isomorphisms \be\label{CocycleTrivial}\psi^*_{{\frak i}i}=(\zf_i^*)^{-1}\zf_{\frak i}^*\ee of sheaves of $\Zn$-commutative $\R$-algebras: \be\label{CTU}\psi^*_{{\frak i}i}:\Ci_{U_{\frak{i}}}|_{U_{i{\frak{i}}}}[[\xi']]\to \Ci_{U_i}|_{U_{i{\frak{i}}}}[[\xi]]\;.\ee Similar coordinate transformations exist for $\cN$: \be\label{CTV}\psi^*_{\frak{j}j}:\Ci_{V_{\frak{j}}}|_{V_{j{\frak{j}}}}[[\zh']]\to \Ci_{V_j}|_{V_{j{\frak{j}}}}[[\zh]]\;.\ee We denote the base coordinates in $U_i$ (resp., $U_{\frak i}$) by $x$ (resp., $x'$) and those in $V_j$ (resp., $V_{\frak j}$) by $y$ (resp., $y'$). The coordinate transformations \eqref{CTU}, $x=x(x',\xi'), \xi=\xi(x',\xi')$, and \eqref{CTV}, $y=y(y',\zh'), \zh=\zh(y',\zh')$, implement coordinate transformations or isomorphisms of sheaves of $\Zn$-commutative $\R$-algebras $$\zvf_{\frak{ij},ij}=\psi^*_{{\frak i}i}\times\psi^*_{\frak{j}j}:\Ci_{U_{\frak i}\times V_{\frak j}}|_{U_{i\frak i}\times V_{j\frak j}}[[\xi',\zh']]\to \Ci_{U_i\times V_j}|_{U_{i\frak i}\times V_{j\frak j}}[[\xi,\zh]]\;.$$ In view of \eqref{CocycleTrivial}, the $\zvf_{\frak{ij},ij}$ satisfy the cocycle condition. We thus get a unique glued sheaf $\cO_{M\times N}$ of $\Zn$-commutative $\R$-algebras over $M\times N$ which restricts on $U_i\times V_j$ to $$\cO_{M\times N}|_{U_i\times V_j}\simeq \Ci_{U_i\times V_j}[[\xi,\zh]]\;,$$ i.e., we obtain a $\Zn$-manifold, which we refer to as the product $\cM\times\cN$ of $\cM$ and $\cN$.

\subsection{Fundamental isomorphisms}

\begin{theo}\label{FundaIsom} Let $\,\R^{p|\mathbf{q}}$ $(\,$resp., $\R^{r|\mathbf{s}}$$\,)$ be the usual $\Zn$-domain $(\R^p,\Ci_{\R^p}[[\xi]])$ $(\,$resp., $(\R^r,\Ci_{\R^r}$ $[[\zh]])$$\,)$, and let $\zW'\subset\R^p$ and $\zW''\subset\R^r$ be open. There is an isomorphism of topological algebras \be\label{CompProdZ2n} \Ci_{\R^p}(\zW')[[\xi]]\,\widehat\0\;\Ci_{\R^r}(\zW'')[[\zh]]\simeq\Ci_{\R^p\times\R^r}(\zW'\times \zW'')[[\xi,\zh]]\;,\ee where the completion is taken with respect to any locally convex topology on the algebraic tensor product $\Ci_{\R^p}(\zW')[[\xi]]\0\Ci_{\R^r}(\zW'')[[\zh]]$. \end{theo}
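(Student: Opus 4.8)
The plan is to realise each of the three spaces as a countable product of nuclear Fréchet spaces and then to reduce \eqref{CompProdZ2n} to the classical Schwartz--Grothendieck isomorphism $\Ci(\zW')\,\widehat\0\,\Ci(\zW'')\simeq\Ci(\zW'\times\zW'')$, using that the completed tensor product of nuclear Fréchet spaces commutes with countable products. First I would fix the topologies. Writing $\cI$ (resp.\ $\cJ$) for the countable set of monomials $\xi^{\alpha}$ (resp.\ $\zh^{\beta}$) --- the exponent of an odd parameter being restricted to $\{0,1\}$ by nilpotency, while that of an even nonzero-degree parameter ranges over all of $\N$ --- the coefficientwise topology identifies
\[
\Ci_{\R^p}(\zW')[[\xi]]\cong\prod_{\alpha\in\cI}\Ci(\zW'),\quad \Ci_{\R^r}(\zW'')[[\zh]]\cong\prod_{\beta\in\cJ}\Ci(\zW''),\quad \Ci(\zW'\times\zW'')[[\xi,\zh]]\cong\prod_{(\alpha,\beta)}\Ci(\zW'\times\zW'').
\]
Since $\Ci$ of an open subset of a Euclidean space is nuclear Fréchet and countable products of such are again nuclear Fréchet, all three spaces are nuclear Fréchet. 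Nuclearity is exactly what makes the clause ``any locally convex topology'' legitimate: on the algebraic tensor product the projective and injective tensor topologies coincide, every compatible locally convex tensor topology is squeezed between $\epsilon$ and $\pi$, and hence they all yield one and the same completion.

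Next I would write down the natural comparison map $\Theta$ on the algebraic tensor product,
\[
\Theta\Big(\big(\textstyle\sum_{\alpha} f_{\alpha}\,\xi^{\alpha}\big)\0\big(\sum_{\beta} g_{\beta}\,\zh^{\beta}\big)\Big)=\sum_{\alpha,\beta}(f_{\alpha}\boxtimes g_{\beta})\,\xi^{\alpha}\zh^{\beta},
\]
where $(f_{\alpha}\boxtimes g_{\beta})(x,y)=f_{\alpha}(x)\,g_{\beta}(y)$, and verify that it is a morphism of $\Zn$-graded unital algebras. The only non-formal point is multiplicativity: the product on the left is the graded tensor-product multiplication $(F_1\0 G_1)(F_2\0 G_2)=(-1)^{\langle G_1,F_2\rangle}(F_1F_2)\0(G_1G_2)$, and one checks on monomials that the Koszul sign produced by commuting $G_1$ past $F_2$ is precisely the sign dictated by the $\Zn$-commutation rule \eqref{SignRule} for the monomial $\xi^{\alpha}\zh^{\beta}$ in the target; this is routine sign bookkeeping.

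Then I would show that $\Theta$ extends to the asserted topological isomorphism by assembling the chain
\[
\Big(\prod_{\alpha}\Ci(\zW')\Big)\,\widehat\0\,\Big(\prod_{\beta}\Ci(\zW'')\Big)\;\cong\;\prod_{\alpha,\beta}\big(\Ci(\zW')\,\widehat\0\,\Ci(\zW'')\big)\;\cong\;\prod_{\alpha,\beta}\Ci(\zW'\times\zW''),
\]
in which the first isomorphism is the commutation of $\widehat\0$ with countable products and the second is the factorwise application of the classical scalar isomorphism. On an elementary tensor $(f_{\alpha})_{\alpha}\0(g_{\beta})_{\beta}$ this composite returns the family $(f_{\alpha}\boxtimes g_{\beta})_{\alpha,\beta}$, i.e.\ it agrees with $\Theta$; hence, by density of the algebraic tensor product and continuity, the extension of $\Theta$ to the completion is exactly this isomorphism. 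The target, being a countable product of Fréchet spaces, is complete, so nothing is lost in passing to the completion.

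The hard part will be the first isomorphism in that chain, namely that $\widehat\0$ commutes with countable products: this fails for general locally convex spaces and is precisely where the passage from polynomials (a direct sum, with which $\0$ trivially commutes) to genuine formal power series (an infinite product) bites. I would establish it via the $\epsilon$-product: for nuclear spaces $E\,\widehat\0\,F$ coincides with Schwartz's $\epsilon$-product $E\,\epsilon\,F=L_{\epsilon}(E'_c,F)$, and the functor $L_{\epsilon}(E'_c,-)$ carries a countable product to the product of its values because convergence in a product is coordinatewise; since $\Ci(\zW')$ is nuclear and has the approximation property, this applies verbatim. (Equivalently, one invokes that the completed projective tensor product of Fréchet spaces commutes with countable reduced projective limits, countable products being a special case.) This commutation statement is the one genuinely new analytic ingredient, and I would record it in the appendix alongside the remaining facts on locally convex topological algebras.
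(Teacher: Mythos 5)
Your proposal is correct, but it takes a genuinely different route from the paper's. Both arguments start from the same identification of the three power-series algebras with countable products of the nuclear Fr\'echet spaces $\Ci(\zW')$, $\Ci(\zW'')$ and $\Ci(\zW'\times\zW'')$, and both reduce everything to the classical isomorphism $\Ci(\zW')\,\widehat\0\,\Ci(\zW'')\simeq\Ci(\zW'\times\zW'')$; the divergence is in how the infinite products are handled. You package the whole difficulty into one abstract lemma --- the commutation of $\widehat\0$ with countable products --- and justify it via Schwartz's $\epsilon$-product, using that $E\,\widehat\0\,F=E\,\epsilon\,F=L_{\epsilon}(E'_c,F)$ for complete nuclear spaces and that $L_{\epsilon}(E'_c,-)$ visibly commutes with products; this yields the topological isomorphism in both directions at once. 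The paper instead works by hand: it uses Goodearl's theorem on distributing tensor products over direct products (over a von Neumann regular ring) only to obtain an \emph{injective} linear map of the algebraic tensor product into $\Ci(\zW'\times\zW'')[[\xi,\zh]]$, equips the source with the initial topology so that it becomes a topological vector subspace, passes to completions, and then proves surjectivity of the extension $\hat\imath$ by an explicit diagonal construction, reassembling the approximations of each coefficient $F_{\za\zb}$ by finite sums of elementary tensors (via a bijection $\cA\times\cB\to\cI\subset\N$ and the auxiliary families $\zf^j_{\za i}$, $\psi^j_{i\zb}$) into a Cauchy sequence of the algebraic tensor product converging to the given series. Your route is shorter and more conceptual, and concentrates the analytic content in a single citable fact; its price is reliance on the $\epsilon$-product formalism and the approximation property (both legitimate here, since nuclear Fr\'echet spaces are complete and have the approximation property), which the paper deliberately avoids in favour of a longer but elementary and self-contained surjectivity argument. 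The multiplicativity check you defer to ``routine sign bookkeeping'' is precisely the step the paper also leaves to the reader, so no gap there.
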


\begin{proof} Let $R$ be a commutative von Neumann regular ring. For any families $(M_\za)_\za$ and $(N_\zb)_\zb$ of free $R$-modules, the natural $R$-linear map $$(\prod_\za M_\za)\0_R(\prod_\zb N_\zb)\to \prod_{\za\zb}(M_\za\0_R N_\zb)$$ is injective, if and only if $R$ is injective as a module over itself \cite{Good}. Since any field is von Neumann regular, the regularity and injective module conditions are satisfied for $R=\R$. Hence, the linear map $$(\prod_\za \Ci(\zW'))\0(\prod_\zb \Ci(\zW''))\to \prod_{\za\zb}(\Ci(\zW')\0 \Ci(\zW''))$$ is injective. Further, in view of \cite[Corollary 17]{BP1}, the map \be\label{TVSIsom2}\Ci(\zW')[[\xi]]\ni\sum_{\za\in \cA} f_\za(x)\xi^\za\mapsto (f_\za)_{\za\in \cA}\in\prod_{\za\in \cA}\Ci(\zW')\ee is a {\small TVS}-isomorphism between the source and the target equipped with the standard topology and the product topology of the standard topologies, respectively. In the sequence of canonical maps $$\Ci(\zW')[[\xi]]\0\Ci(\zW'')[[\zh]]\simeq (\prod_{\za\in \cA} \Ci(\zW'))\0(\prod_{\zb\in\cB}\Ci(\zW''))\to\prod_{\za\in\cA\,,\,\zb\in\cB}(\Ci(\zW')\0 \Ci(\zW''))\to$$ \be\label{SEA}\prod_{\za\in\cA\,,\,\zb\in\cB}\Ci(\zW'\times \zW'')\simeq\Ci(\zW'\times \zW'')[[\xi,\zh]]\;,\ee the first $\simeq$ is a linear bijection, the first $\to$ is a linear injection, and the second $\simeq$ is a {\small TVS}-isomorphism for the topologies used in \eqref{TVSIsom2}. In \be\label{Bas}\Ci(\zW')\0 \Ci(\zW'')\to \Ci(\zW')\widehat\0\, \Ci(\zW'')\simeq\Ci(\zW'\times \zW'')\;,\ee the isomorphism $\simeq$ is the well-known {\small TVS}-isomorphism \cite{Gro} (the target is endowed with its standard topology and the source with the topology of the completion with respect to any ($\Ci(\zW')$ is nuclear) locally convex topology on $\Ci(\zW')\0\Ci(C)$ -- we will not specify the latter topology), and the arrow $\to$ is the continuous linear inclusion (any {\small TVS} is a topological vector subspace ({\small TVSS}) of its completion, see Proposition \ref{Completion}). This $\to$ induces the second $\to$ in \eqref{SEA}, which is the inclusion of the source vector subspace into the target vector space. The source becomes a {\small TVSS} of the target when endowed with the induced topology (the induced topology is coarser than the product topology of the induced topologies). Finally, we equip the first space in \eqref{SEA} with the initial topology with respect to the first $\to\,,$ so that the first space gets promoted to a {\small TVSS} of the second, see Proposition \ref{InjLinIniTop}, and the first $\to$ becomes the continuous linear inclusion. The composite $$\imath:\Ci(\zW')[[\xi]]\0\Ci(\zW'')[[\zh]]\ni\sum_\za f_\za\,\xi^\za\0\sum_\zb g_\zb\,\zh^\zb\mapsto \sum_{\za\zb}f_\za\0 g_\zb\,\xi^\za\zh^\zb\in\Ci(\zW'\times \zW'')[[\xi,\zh]]\;$$ of the maps of \eqref{SEA} is now the inclusion of a the source {\small TVSS} into the target {\small TVS}.\medskip

Note that, since the target is a {\small LCTVS}, see \cite[Lemma 16]{BP1}, the source {\small TVSS} is also a {\small LCTVS}, see Proposition \ref{IniTopLC} (since $\Ci(\zW')[[\xi]]$ is nuclear, see \cite[Lemma 16]{BP1}, the completion of the source is independent of the chosen locally convex topology). In view of Proposition \ref{CompletionTVSS}, the completion of the source is a {\small TVSS} of the completion of the target, which, as the target is complete, see \cite[Lemma 16]{BP1}, can be identified with the target due to Remark \ref{CompletionComplete}. In other words, the continuous extension \be\label{ContExt}\hat{\imath}:\Ci(\zW')[[\xi]]\widehat{\0}\,\Ci(\zW'')[[\zh]]\to\Ci(\zW'\times \zW'')[[\xi,\zh]]\;\ee of the inclusion $\imath$ is an injective continuous linear map, see text above Proposition \ref{CompletionTVSS}. We will now prove that this map is surjective.\medskip

Let $$S=\sum_{\za\zb}F_{\za\zb}\,\xi^\za\zh^\zb\in\Ci(\zW'\times \zW'')[[\xi,\zh]]$$ be a formal series in the target space. In view of \eqref{Bas}, we have \cite{Gro}, for any $(\za,\zb)\in\cA\times\cB$, $$F_{\za\zb}=\lim_{N\to+\infty}\sum_{j=0}^N f_{\za\zb}^j\0\,g_{\za\zb}^j\;,$$ where $f_{\za\zb}^j\in\Ci(\zW')$ and $g_{\za\zb}^j\in\Ci(\zW'')$, and where the limit is taken in $\Ci(\zW'\times\zW'')$. Recall that $\cA=\N^{\times |\mathbf{q'}|}\times\Z_2^{\times |\mathbf{q''}|}$, and similarly for $\cB$. The product $\cA\times\cB$ is countable, since it is a finite product of countable sets. Let $I:\cA\times\cB\to\N$ be an injective map valued in $\N$. The map $J:\cA\times\cB\to \cI$, with $\cI=I(\cA\times\cB)$, is thus a 1:1 correspondence. We identify $\cA\times\cB$ with $\cI$ via $J$. For any $j\in\N$, we set,
\begin{itemize}
\item for any $\za\in\cA$ and any $i\in\cI$, $$\Ci(\zW')\ni\zf_{\za i}^j=\begin{cases}0,\;\text{if}\;i\simeq(\zg,\zd)\neq(\za,\zd)\;,\\ f_{\za\zd}^j,\;\text{if}\;i\simeq(\zg,\zd)=(\za,\zd)\;,\end{cases}\quad\text{and}\,,$$
\item for any $\zb\in\cB$ and any $i\in\cI$, $$\Ci(\zW'')\ni\psi_{i\zb}^j=\begin{cases}0,\;\text{if}\;i\simeq(\zg,\zd)\neq(\zg,\zb)\;,\\ g_{\zg\zb}^j,\;\text{if}\;i\simeq(\zg,\zd)=(\zg,\zb)\;.\end{cases}$$
\end{itemize}
Note that $\cI$ is a finite set $\{0,1,\ldots,L\}$, $L\in\N$, (resp., is $\N$), if $\cA\times\cB$ is finite (resp., if $\cA\times\cB$ is countably infinite). For all $j\in\N$ and all $(\za,\zb)\in\cA\times\cB$, we get $$\sum_{i=0}^M\zf^j_{\za i}\0\psi^j_{i\zb}=f^j_{\za\zb}\0 g^j_{\za\zb}\;,$$ when $M\in\cI\,\cap\, [J(\za,\zb),+\infty[$. Indeed, if $i\simeq(\zg,\zd)\neq (\za,\zb)$, then, either $\zg\neq\za$ and $\zf^j_{\za i}=0$, or $\zd\neq\zb$ and $\psi^j_{i\zb}=0$. However, if $i\simeq(\zg,\zd)=(\za,\zb)$, then $\zf^j_{\za i}=f^j_{\za\zb}$ and $\psi^j_{i\zb}=g^j_{\za\zb}$, so that the announced result follows. Hence, for any $j\in\N$ and any $(\za,\zb)\in\cA\times\cB$, we have $$\lim_{M\to+\infty}\sum_{i=0}^M\zf^j_{\za i}\0\psi^j_{i\zb}= f^j_{\za\zb}\0 g^j_{\za\zb}\;,$$ where the sequence is constant for $M\ge J(\za,\zb)$ and where the limit is computed in the topology of $\Ci(\zW'\times\zW'')$. If a finite number of sequences of a {\small TVS} do converge, then their sum converges to the sum of the limits. It follows that, for any $(\za,\zb)\in\cA\times\cB$ and any $N\in\N$,
$$\lim_{M\to+\infty}\sum_{j=0}^N\sum_{i=0}^M\zf^j_{\za i}\0\psi^j_{i\zb}=\sum_{j=0}^Nf^j_{\za\zb}\0 g^j_{\za\zb}\;,$$ so that, for all $(\za,\zb)\in\cA\times\cB$,  $$\lim_{N\to+\infty}\lim_{M\to+\infty}\sum_{j=0}^N\sum_{i=0}^M\zf^j_{\za i}\0\psi^j_{i\zb}=\lim_{N\to+\infty}\sum_{j=0}^Nf^j_{\za\zb}\0 g^j_{\za\zb}=F_{\za\zb}\;$$ in $\Ci(\zW'\times\zW'')$, and
\be\label{TargetConv}\lim_{N\to+\infty}\lim_{M\to+\infty}\left(\sum_{j=0}^N\sum_{i=0}^M\zf^j_{\za i}\0\psi^j_{i\zb}\right)_{(\za,\zb)\in\cA\times\cB}=\left(F_{\za\zb}\right)_{(\za,\zb)\in\cA\times\cB}\;\ee in the product topology of $\prod_{\za\zb}\Ci(\zW'\times\zW'')$, i.e., in the topology of the {\small TVS} $\Ci(\zW'\times\zW'')[[\xi,\zh]]$. Therefore, the sequence
$$
\left(\sum_{j=0}^N\sum_{i=0}^M\zf^j_{\za i}\0\psi^j_{i\zb}\right)_{(\za,\zb)}=\sum_{j=0}^N\sum_{i=0}^M\left(\sum_\za\zf^j_{\za i}\,\xi^\za\0\sum_\zb \psi^j_{i\zb}\,\zh^\zb\right)\in\Ci(\zW')[[\xi]]\0\Ci(\zW'')[[\zh]]
$$
is a Cauchy sequence in $\Ci(\zW'\times\zW'')[[\xi,\zh]]$, so a Cauchy sequence in the {\small TVSS} $\Ci(\zW')[[\xi]]\0\,\Ci(\zW'')[[\zh]]$, and also in the topological vector supspace $\Ci(\zW')[[\xi]]\widehat\0\,\Ci(\zW'')[[\zh]]$. Since this completion is sequentially complete, the Cauchy sequence considered converges in this space: $$\lim_{N\to+\infty}\lim_{M\to+\infty}\sum_{j=0}^N\sum_{i=0}^M\left(\sum_\za\zf^j_{\za i}\,\xi^\za\0\sum_\zb \psi^j_{i\zb}\,\zh^\zb\right)\in\Ci(\zW')[[\xi]]\widehat\0\,\Ci(\zW'')[[\zh]]\;,$$ where the limit is taken in the topology of $\Ci(\zW')[[\xi]]\widehat\0\,\Ci(\zW'')[[\zh]]$. Since the inclusion $\hat\imath$, see Equation \eqref{ContExt}, is sequentially continuous, we get $$\hat\imath\;\lim_{N\to+\infty}\lim_{M\to+\infty}\sum_{j=0}^N\sum_{i=0}^M\left(\sum_\za\zf^j_{\za i}\,\xi^\za\0\sum_\zb \psi^j_{i\zb}\,\zh^\zb\right)=$$ $$\lim_{N\to+\infty}\lim_{M\to+\infty}\hat\imath\;\sum_{j=0}^N\sum_{i=0}^M\left(\sum_\za\zf^j_{\za i}\,\xi^\za\0\sum_\zb \psi^j_{i\zb}\,\zh^\zb\right)=$$
$$\lim_{N\to+\infty}\lim_{M\to+\infty}\sum_{\za\zb}\,\sum_{j=0}^N\sum_{i=0}^M\zf^j_{\za i}\0\psi^j_{i\zb}\,\xi^\za\zh^\zb=\sum_{\za\zb}F_{\za\zb}\,\xi^\za\zh^\zb=S\;,$$ in view of \eqref{TargetConv}. This shows that the continuous linear inclusion $$\hat\imath:\Ci(\zW')[[\xi]]\widehat\0\,\Ci(\zW'')[[\zh]]\to\Ci(\zW'\times\zW'')[[\xi,\zh]]$$ is bijective, so that the source {\small TVSS} of the target coincides with the target as {\small TVS}.\medskip

Since the completed tensor product of two nuclear Fr\'echet algebras is again a nuclear Fr\'echet algebra \cite[Lemma 1.2.13]{Emerton}, the source and target are actually topological algebras. We leave it to the reader to check that the preceding identification respects the multiplications.\end{proof}

\begin{rem} If $p|\mathbf{q}=p|\mathbf{0}$ and $r|\mathbf{s}=0|\mathbf{s}$, it follows from Theorem \ref{FundaIsom} that $$\Ci(\zW)\widehat\0_\R\,\R[[\xi]]\simeq\Ci(\zW)[[\xi]]\;,$$ and, if $p|\mathbf{q}=0|\mathbf{q}$ and $r|\mathbf{s}=0|\mathbf{s}$, we get $$\R[[\xi]]\widehat\0_\R\,\R[[\zh]]\simeq\R[[\xi,\zh]]\;.$$ Conversely, the general isomorphism of Theorem \ref{FundaIsom} is a consequence of the preceding particular cases and the fact that the category of complete nuclear spaces is a symmetric monoidal category with respect to the completed tensor product \emph{\cite{Cos}}.\end{rem}

\begin{theo}\label{IsomExtShfifi} There is an isomorphism of sheaves of $\Zn$-commutative $\R$-algebras $$\cO_{M\times N}\simeq (\cO_M\widehat\0\,\cO_N)^{-\,+}\;$$ between the structure sheaf of a product $\Zn$-manifold and the sheafification of the standard extension of the $\cB$-presheaf $$\cO_M\widehat\0\,\cO_N:U\times V\mapsto\cO_M(U)\widehat\0\,\cO_N(V)\;.$$  \end{theo}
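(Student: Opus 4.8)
The plan is to reduce the asserted isomorphism of sheaves on $M\times N$ to a statement over the basis $\frak B$ of chart-domain products, where Theorem \ref{FundaIsom} applies, and then to invoke the uniqueness of the sheaf extension of a $\frak B$-sheaf. Write $\cF:=\cO_M\widehat\0\,\cO_N$ for the $\cB$-presheaf, $\bar\cF:=\cF^{-}$ for its standard extension, and $\cE:=\cF|_{\frak B}$, so that $\cE(u\times v)=\cO_M(u)\widehat\0\,\cO_N(v)$ for chart domains $u\subset M$ and $v\subset N$. Since the standard extension of a $\cB$-presheaf restricts to the given presheaf on $\cB\supseteq\frak B$, one has $\bar\cF|_{\frak B}=\cE$, while Definition \ref{ProdZnMan} gives $\cO_{M\times N}|_{\frak B}\simeq\cE$. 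Thus both $\bar\cF^{+}=(\cO_M\widehat\0\,\cO_N)^{-+}$ and $\cO_{M\times N}$ are sheaves whose restriction to $\frak B$ is $\cE$, and it suffices to produce an isomorphism between these two extensions.

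First I would upgrade the pointwise content of Theorem \ref{FundaIsom} into a natural statement. Using the local models $\cO_M|_u\simeq\Ci_u[[\xi]]$ and $\cO_N|_v\simeq\Ci_v[[\zh]]$, the fundamental isomorphism yields $\cE(u\times v)\simeq\Ci(u\times v)[[\xi,\zh]]\simeq\cO_{M\times N}(u\times v)$ for every $u\times v\in\frak B$. The point to verify is that these identifications commute with the restriction maps of the $\frak B$-presheaves; this is clear from the explicit description of the isomorphism as the continuous extension of the coefficient-wise inclusion $\sum_\za f_\za\,\xi^\za\0\sum_\zb g_\zb\,\zh^\zb\mapsto\sum_{\za\zb}f_\za\0 g_\zb\,\xi^\za\zh^\zb$, since restriction of sections acts separately on each coefficient $f_\za$ and $g_\zb$. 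We thus obtain an isomorphism $\cE\simeq\cO_{M\times N}|_{\frak B}$ of $\frak B$-presheaves. Because $\cO_{M\times N}$ is by construction a genuine sheaf, $\cO_{M\times N}|_{\frak B}$ is a $\frak B$-sheaf, and hence so is $\cE$ — the decisive point being that, although $\cF$ fails to be a $\cB$-sheaf (completed tensor products do not glue), its restriction to the smaller basis $\frak B$ does satisfy the gluing axiom, precisely because there it agrees with the structure sheaf.

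Next I would construct the comparison morphism and identify it as an isomorphism. Since $\frak B$ is a basis and $\cO_{M\times N}$ is a sheaf, the isomorphisms of the previous step assemble, by gluing over covers of an arbitrary open $W\subset M\times N$ by $\frak B$-elements, into a presheaf morphism $\theta:\bar\cF\to\cO_{M\times N}$ that reduces on each $u\times v\in\frak B$ to the fundamental isomorphism; separatedness of $\cO_{M\times N}$ guarantees that the glued images are well defined and independent of the chosen cover. By the universal property of sheafification, $\theta$ factors through a morphism of sheaves $\hat\theta:\bar\cF^{+}\to\cO_{M\times N}$. To see that $\hat\theta$ is an isomorphism I would pass to stalks: sheafification preserves stalks, and the elements of $\frak B$ containing a point $(m,n)$ are cofinal among its neighbourhoods, so $(\bar\cF^{+})_{(m,n)}=(\bar\cF)_{(m,n)}=\varinjlim_{u\times v\ni(m,n)}\cE(u\times v)$, and the induced map to $(\cO_{M\times N})_{(m,n)}=\varinjlim_{u\times v\ni(m,n)}\cO_{M\times N}(u\times v)$ is the colimit of the fundamental isomorphisms, hence bijective. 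A morphism of sheaves inducing bijections on all stalks is an isomorphism, so $\hat\theta$ is an isomorphism, and since every identification above respects the $\Zn$-commutative $\R$-algebra structure by Theorem \ref{FundaIsom}, it is an isomorphism of sheaves of $\Zn$-commutative $\R$-algebras.

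I expect the main obstacle to lie in the interaction with the topological-algebra structure. First, sheafification, the universal property used to factor $\theta$, and the stalk colimits must all be carried out in the category of locally convex topological algebras rather than of plain sets or modules, so that the relevant morphisms and colimits respect the topologies and the completed products $\widehat\0$; this is exactly what the results of Subsections \ref{SheafiPShTA} and \ref{SheafiDirIm} are designed to supply, and I would invoke them at each such step. Second, the naturality of the fundamental isomorphism, though transparent from the coefficient-wise formula, has to be reconciled with the completions hidden in $\widehat\0$. Once these two points are secured, the reduction to the basis $\frak B$ together with the uniqueness of the sheaf extension of the $\frak B$-sheaf $\cE$ renders the conclusion immediate.
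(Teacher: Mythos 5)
Your proposal is correct and follows essentially the same route as the paper: reduce everything to the basis $\frak B$ of chart-domain products, where Theorem \ref{FundaIsom} identifies $\cO_M(u)\widehat\0\,\cO_N(v)$ with $\cO_{M\times N}(u\times v)$ compatibly with restrictions, and then conclude by sheaf-theoretic generalities (sheafification preserving stalks, uniqueness of extensions from a basis), with the topological-algebra issues delegated to the appendix exactly as the paper does. The only cosmetic difference is that you build the comparison morphism $\theta:\bar\cF\to\cO_{M\times N}$ globally by gluing and then check stalks, whereas the paper first shows $\bar\cF|_{U_i\times V_j}$ is already a sheaf isomorphic to $\cO_{M\times N}|_{U_i\times V_j}$ and then extends the resulting $\frak B$-sheaf isomorphism $\frak b\circ\zvf^{-1}$; both variants rest on the same two pillars.
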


\begin{proof} Recall that $\cB$ (resp., $\frak B$) is the basis of the product topology of $M\times N$ made of the rectangular subsets $U\times V$, where $U\subset M$ and $V\subset N$ are open (resp., of the rectangular subsets $U_i\times V_j$, where $U_i\subset M$ and $V_j\subset N$ are $\Zn$-chart domains). Let \be\label{LocIsom1}\cF(U\times V):=\cO_M(U)\widehat\0\,\cO_N(V)\ee be the completed tensor product of the nuclear $\Zn$-graded Fr\'echet algebras $\cO_M(U)$ and $\cO_N(V)$ (with respect to any (reasonable) locally convex topology, e.g., the projective one). If $U'\times V'\subset U\times V$, the restrictions $\zr^U_{U'}:\cO_M(U)\to\cO_M(U')$ and $\zr^V_{V'}:\cO_N(V)\to\cO_N(V')$ of the Fr\'echet sheaves $\cO_M$ and $\cO_N$ are continuous linear maps. The continuous extension of the continuous linear map $\zr^U_{U'}\0\zr^V_{V'}$ is a continuous linear map \cite{Gro} $$\zr^U_{U'}\widehat\0\,\zr^V_{V'}:\cO_M(U)\widehat\0\,\cO_N(V)\to \cO_M(U')\widehat\0\,\cO_N(V')\;,$$ which we denote by $$\zr^{U\times V}_{U'\times V'}:\cF(U\times V)\to\cF(U'\times V')\;.$$ Since the $\zr^U_{U'}$ and the $\zr^V_{V'}$ satisfy the standard presheaf conditions and the linear maps $\zr^{U\times V}_{U'\times V'}$ are continuous, it is clear that the latter satisfy these conditions as well. Hence, the pair $(\cF,\zr)$ is a $\tt Set$-valued $\cB$-presheaf.\medskip

This $\cB$-presheaf can be extended to a $\tt Set$-valued presheaf $(\bar\cF,\bar\zr)$. Indeed, set, for any open $\zW\subset M\times N$, \be\label{LocSheaf1}\bar\cF(\zW):=\{(f_{ab})_{ab}:f_{ab}\in\cF(U_a\times V_b), U_a\times V_b\subset\zW\;,\text{such that}\;\zr^{U_a\times V_b}_{U_{a\frak a}\times V_{b\frak b}}(f_{ab})=\zr^{U_{\frak a}\times V_{\frak b}}_{U_{a\frak a}\times V_{b\frak b}}(f_{\frak{ab}})\}\;,\ee and consider, for any $\zW'\subset\zW$, the map $$\bar\zr\,^\zW_{\zW'}:\bar\cF(\zW)\to\bar\cF(\zW')$$ which sends any element of $\bar\cF(\zW)$ to the element of $\bar\cF(\zW')$ that we obtain by suppressing the $f_{ab}$ for which $U_a\times V_b$ is not a subset of $\zW'$. The $\bar\zr\,^\zW_{\zW'}$ satisfy of course the standard presheaf conditions. Further, the presheaf $(\bar\cF,\bar\zr)$ extends the $\cB$-presheaf $(\cF,\zr)$. Indeed, for $\zW=U\times V$, any $f\in\cF(\zW)$ provides a unique family $f_{ab}=\zr^{U\times V}_{U_a\times V_b}(f)$ in $\bar\cF(\zW)$, thus defining a map $\flat_\zW:\cF(\zW)\to \bar\cF(\zW)$. If $\flat_\zW(f)=\flat_\zW(g)$, then, in particular, $$f=\zr^{U\times V}_{U\times V}(f)=\zr^{U\times V}_{U\times V}(g)=g\;.$$ In fact $\flat_\zW$ is a 1:1 correspondence. Indeed, any family $f_{ab}$ in $\bar\cF(\zW)$ contains $f\in\cF(\zW)$, and $f_{ab}=\zr^{U\times V}_{U_a\times V_b}(f)$, so that $\flat_\zW(f)=(f_{ab})_{ab}$. Hence, \be\label{LocIsom2}\flat_{U\times V}:\cF(U\times V)\stackrel{\sim}{\longrightarrow}\bar\cF(U\times V)\;.\ee Moreover, if $\zW'=U'\times V'\subset\zW=U\times V$, and if $f\in\cF(U\times V)$, then $$\bar\zr\,^{U\times V}_{U'\times V'}(\flat_{U\times V}(f))=\flat_{U'\times V'}(\zr^{U\times V}_{U'\times V'}(f))\;,$$ since both sides are made of the family of restrictions $\zr^{U\times V}_{U_a\times V_b}(f)$, for all $U_a\times V_b\subset U'\times V'$.\medskip

Let now $U_i\times V_j\in\frak B$ be a Cartesian product of $\Zn$-chart domains, and let $\zW\subset U_i\times V_j$ be any open subset. Recall Definition \eqref{LocSheaf1}. Since $U_a\subset U_i$ and $V_b\subset V_j$ are $\Zn$-chart domains, we get in view of \eqref{CompProdZ2n} and of \eqref{DefProdMan},
$$\cF(U_a\times V_b)=\cO_M(U_a)\widehat\0\,\cO_N(V_b)=\Ci(U_a)[[\xi_i]]\widehat\0\,\Ci(V_b)[[\zh_j]]\simeq$$
\be\label{LocSheaf2}\Ci(U_a\times V_b)[[\xi_i,\zh_j]]=\cO_{M\times N}(U_a\times V_b)\;.\ee
Due to the continuity and linearity of the restrictions $\zr^{U\times V}_{U'\times V'}$ of $\cF$ and the restrictions $\mathrm{P}^\zW_{\zW'}$ of $\cO_{M\times N}$, the restrictions in \eqref{LocSheaf1} coincide with the corresponding restrictions $\mathrm P$ of the structure sheaf of the product $\Zn$-manifold, as both reduce to the same restrictions of classical functions. It follows from \eqref{LocSheaf1} and \eqref{LocSheaf2} that any family $f_{ab}\in\bar\cF(\zW)$ is made of $\Zn$-functions $f_{ab}\in\cO_{M\times N}(U_a\times V_b)$, which are defined on the cover of $\zW$ by all the $U_a\times V_b\subset\zW$, and whose $\mathrm P$-restrictions coincide on all intersections. Hence, any family $f_{ab}\in\bar\cF(\zW)$ can be glued in the sheaf $\cO_{M\times N}$ and thus provides a unique $\Zn$-function $f\in\cO_{M\times N}(\zW)$ such that $\mathrm{P}^{\zW}_{U_a\times V_b}(f)=f_{ab}$. The resulting map $\frak{b}_\zW:\bar\cF(\zW)\to\cO_{M\times N}(\zW)$ is clearly injective. It is also surjective, since the restrictions $f_{ab}:=\mathrm{P}^{\zW}_{U_a\times V_b}(f)$ of any $f\in\cO_{M\times N}(\zW)$ define a family $f_{ab}\in\bar\cF(\zW)$ whose image by $\frak{b}_\zW$ is $f$. Therefore, if $\zW\subset U_i\times V_j$, we have a 1:1 correspondence \be\label{LocIsom3}\frak{b}_\zW:\bar\cF(\zW)\stackrel{\sim}{\longrightarrow}\cO_{M\times N}(\zW)\;.\ee Moreover, \be\label{LocSheaf3}\mathrm{P}^\zW_{\zW'}\circ\frak{b}_\zW=\frak{b}_{\zW'}\circ\bar\zr\,^\zW_{\zW'}\;.\ee Indeed, if $(f_{ab})_{ab}\in\bar\cF(\zW)$, the {\small LHS} map sends this family $f_{ab}$, $U_a\times V_b\subset\zW$, first to the unique $f\in\cO_{M\times N}(\zW)$ such that $\mathrm{P}^{\zW}_{U_a\times V_b}(f)=f_{ab}$, and then to the restriction $\mathrm{P}^\zW_{\zW'}(f)$. The {\small RHS} map sends this family first to the subfamily $f_{\za\zb}$, $U_\za\times V_\zb\subset\zW'$, then to the unique $g\in\cO_{M\times N}(\zW')$ such that $\mathrm{P}^{\zW'}_{U_\za\times V_\zb}(g)=f_{\za\zb}$. It is clear that $g=\mathrm{P}^\zW_{\zW'}(f)$. Hence, $$\frak{b}:\bar\cF|_{U_i\times V_j}\stackrel{\sim}\longrightarrow\cO_{M\times N}|_{U_i\times V_j}$$ is a presheaf isomorphism, so that $\bar\cF|_{U_i\times V_j}$ is a sheaf.\medskip

We denote the sheafification of the presheaf $\bar\cF$ by $\zvf:\bar\cF\to\bar\cF^+$ ($\bar\zr^+$ refers to the restrictions of $\bar\cF^+$). Recall that any presheaf and its sheafification have the same stalks, i.e., that the maps $$\zvf_{m,n}:\bar\cF_{m,n}\stackrel{\sim}\longrightarrow\bar\cF^+_{m,n}\;,$$ $(m,n)\in M\times N$, induced on stalks by the presheaf morphism $\zvf$ are isomorphisms. Therefore, the sheaf morphism $\zvf|_{U_i\times V_j}:\bar\cF|_{U_i\times V_j}\to\bar\cF^+|_{U_i\times V_j}$ is a sheaf isomorphism. This means that $$\zvf_{\zW}:\bar\cF(\zW)\stackrel{\sim}{\longrightarrow}\bar\cF^+(\zW)$$ is an isomorphism, or, here, a 1:1 correspondence, for any open $\zW\subset U_i\times V_j$, so that $$\iota_\zW:=\frak{b}_\zW\circ\zvf_\zW^{-1}:\bar\cF^+(\zW)\stackrel{\sim}\longrightarrow\cO_{M\times N}(\zW)\;$$ is also 1:1. In particular, for any $U_i\times V_j\in\frak{B}$, we have \be\label{GlobIsom1}\iota_{U_i\times V_j}:\bar\cF^+(U_i\times V_j)\stackrel{\sim}\longrightarrow\cO_{M\times N}(U_i\times V_j)\;.\ee Since $\zvf$ commutes with restrictions, we get, for any $U_{\frak i}\times V_{\frak j}\subset U_i\times U_j$, $$\bar\zr\,^{U_i\times V_j}_{U_{\frak i}\times V_{\frak j}}\circ\zvf^{-1}_{U_i\times U_j}=\zvf^{-1}_{U_{\frak i}\times V_{\frak j}}\circ \left.\bar\zr^+\right.^{U_i\times U_j}_{U_{\frak i}\times V_{\frak j}}\;,$$ so that, when taking also \eqref{LocSheaf3} into account, we obtain \be\label{GlobIsom2}\mathrm{P}^{U_i\times V_j}_{U_{\frak i}\times V_{\frak j}}\circ \iota_{U_i\times V_j}=\iota_{U_{\frak i}\times V_{\frak j}}\circ \left.\bar\zr^+\right.^{U_i\times U_j}_{U_{\frak i}\times V_{\frak j}}\;.\ee Due to \eqref{GlobIsom1} and \eqref{GlobIsom2}, the map $\iota$ is a $\frak B$-sheaf isomorphism between $\bar\cF^+$ and $\cO_{M\times N}$ viewed as $\frak B$-sheaves. Since a $\frak B$-sheaf morphism extends to a unique sheaf morphism, there exists a sheaf isomorphism $$\mathrm{I}:\bar\cF^+\stackrel{\sim}{\longrightarrow}\cO_{M\times N}\;.$$ 

The morphism $\mathrm{I}$ is actually an isomorphism of sheaves of $\Zn$-commutative $\R$-algebras. It suffices to show that $\imath$ is an isomorphism of $\frak B$-sheaves of such algebras, i.e., that $\imath_{U_i\times V_j}$ is a morphism of $\Zn$-graded unital $\R$-algebras. We will prove that $\imath_\zW$, $\zW\subset U_i\times V_j$, is an algebra morphism, leaving the remaining checks to the reader. The space $\cF(U\times V)$ is a nuclear Fr\'echet algebra, because it is the completed tensor product of nuclear Fr\'echet algebras \cite[Lemma 1.2.13]{Emerton}. Its multiplication $\bullet_\cF$ is continuous. It is given by $$\sum_{i=0}^\infty f_i\0 g_i\bullet_\cF\sum_{j=0}^\infty h_j\0 k_j=\sum_{i=0}^\infty\sum_{j=0}^\infty(-1)^{\langle g_i,h_j\rangle}(f_ih_j)\0 (g_ik_j)\;.$$ The multiplication $\bullet_\cF$ induces a multiplication $\bullet_{\bar\cF}$ on $\bar\cF(\zW)$, $\zW\subset M\times N$, which is defined by $$(f_{ab})_{ab}\bullet_{\bar\cF}(g_{ab})_{ab}=(f_{ab}\bullet_\cF g_{ab})_{ab}\;.$$ Addition and scalar multiplication on $\bar\cF(\zW)$ are defined similarly. As $\bar\cF$ is thus a presheaf of algebras, its sheafification $\bar\cF^+$ is a sheaf of algebras and the $\zvf_\zW:\bar\cF(\zW)\to\bar\cF^+(\zW)$ are algebra morphisms, see Subsection \ref{SheafiPShTA} of the Appendix. For $\zW\subset U_i\times V_j$, this morphism $\zvf_\zW$ is an algebra isomorphism and so is $\zvf_\zW^{-1}$. The map ${\frak b}_\zW:\bar\cF(\zW)\to \cO_{M\times N}(\zW)$ associates to each $(f_{ab})_{ab}$ the $f$ such that $\mathrm{P}^\zW_{U_a\times V_b}(f)=f_{ab}$. Hence, the image by ${\frak b}_\zW$ of a product $(f_{ab})_{ab}\bullet_{\bar\cF}(g_{ab})_{ab}$ is the function $h$ such that $\mathrm{P}^\zW_{U_a\times V_b}(h)=f_{ab}\bullet_\cF g_{ab}$. On the other hand, the product $f\cdot g$ in $\cO_{M\times N}(\zW)$ of the images by ${\frak b}_\zW$ satisfies $$\mathrm{P}^{\zW}_{U_a\times V_b}(f\cdot g)=\mathrm{P}^{\zW}_{U_a\times V_b}(f)\cdot \mathrm{P}^{\zW}_{U_a\times V_b}(g)=f_{ab}\cdot g_{ab}=f_{ab}\bullet_\cF g_{ab}\;,$$ see Theorem \ref{FundaIsom}. It follows that $h=f\cdot g$. The map ${\frak b}_\zW$ is in fact an algebra morphism. Finally $\imath_\zW={\frak b}_\zW\circ\zvf_{\zW}^{-1}$ is a morphism of algebras, as needed.
\end{proof}

\begin{rem}\label{ProblRem} In view of \eqref{LocIsom3}, \eqref{LocIsom2}, and \eqref{LocIsom1} $(\,$as well as in view of \eqref{DefProdMan} and \eqref{CompProdZ2n}$\,)$, it is clear that, for any $U_i\times V_j\in\frak{B}$, we have $$\label{FundaProbl}\cO_{M\times N}(U_i\times V_j)=\cO_M(U_i)\widehat\0\,\cO_N(V_j)\;,$$ but we were unable to convince ourselves that the same holds true for any $U\times V\in\cB$.\end{rem}

Indeed, it is well-known that tensor products of sheaves (and in particular completed tensor products of function sheaves) require a sheafification (see \cite[Section 3]{Vily}). However, section spaces of the sheafification of a presheaf do not agree with the corresponding section spaces of the presheaf.\medskip

On the other hand, attempts to get rid of the problem in Remark \ref{ProblRem} using the reconstruction results from Section \ref{ReconstructionSection} below, are not really promising.\medskip

Further, although $\cO_{M\times N}$ and $\bar\cF$ are two presheaves that extend the $\frak B$-presheaf $\cO_M\,\widehat\0\,\cO_N$, they do not necessarily coincide: $\frak B$-sheaves have unique extensions, but $\frak B$-presheaves do not. Indeed, to show that $\cO_{M\times N}\simeq\bar\cF$, we would have to decompose sections of $\cO_{M\times N}$ into sections of the $\frak B$-presheaf and then reglue them in $\bar\cF$, which is impossible, since $\bar\cF$ is only a presheaf. \medskip

There is actually a condition for the presheaf $\bar\cF$ that extends the $\cB$-presheaf $\cF=\cO_M\,\widehat\0\,\cO_N$ to be a sheaf.\medskip

The explanation of this result needs some preparation.\medskip

For any open $U\times V\subset M\times N$, we set $\cO^M_{M\times N}(U\times V):=\cO_M(U)$. Similarly, for any open $U'\times V'\subset U\times V$, we define $$r^{U\times V}_{U'\times V'}:\cO^M_{M\times N}(U\times V)\to\cO^M_{M\times N}(U'\times V')$$ to be $\zr^U_{U'}:\cO_M(U)\to\cO_M(U')$. It is straightforwardly checked that $\cO^M_{M\times N}$ is a nuclear Fr\'echet sheaf of algebras, hence, in particular a nuclear locally convex topological sheaf of algebras. The assignment $$\cF:U\times V\mapsto \cO^M_{M\times N}(U\times V)\,\widehat\0\,\cO^N_{M\times N}(U\times V)=\cO_M(U)\,\widehat\0\,\cO_N(V)$$ defines a presheaf $\bar\cF$ on $M\times N$. Applying \cite[Equation (2.2)]{Mal}, we {\it would} get \be\label{False}\cO_{M\times N}(U\times V)\simeq\bar\cF^+(U\times V)\simeq \cO_M(U)\,\widehat\0\,\cO_N(V)\;,\ee for any open $U\times V\subset M\times N$, {\it if} $\cO^M_{M\times N}$ or $\cO^N_{M\times N}$ determined a topologically dual weakly flabby precosheaf.\medskip

Just as a presheaf $\cG$ on a topological space $T$ with values in a concrete category $\tt C$ is a contravariant functor $\cG:{\tt Open}(T)^{\op{op}}\to\tt C$, a {\it precosheaf} $\cH$ on $T$ with values in $\tt C$ is a covariant functor $\cH:{\tt Open}(T)\to\tt C$. The point is that, for open subsets $V\subset U$, there is a $\tt C$-morphism $e^V_U:\cH(V)\to\cH(U)$, which we refer to as {\it extension morphism}. The relevant example for our purpose is the topologically dual precosheaf $\cV'$ of vector spaces of a topological sheaf $\cV$ of vector spaces on a Hausdorff space $T$. This precosheaf is defined, for any open $U\subset T$, by $$\cV'(U)=\h_{\tt TVS}(\cV(U),\R)\;,$$ and, for any open subsets $V\subset U$, by $$e^V_U=\,^t\!R^U_V:\cV'(V)\ni\ell\mapsto [(^t\!R^U_V)\ell:\cV(U)\ni v\mapsto \ell(R^U_Vv)\in\R]\in\cV'(U)\;,$$ where $R^U_V$ denotes the restriction in $\cV$. The precosheaf $\cV'$ is {\it weakly flabby} if, for any open $U\subset T$, the morphism $e^U_T:\cV'(U)\to\cV'(T)$ is surjective.\medskip

We should prove that the topologically dual precosheaf of $\cV=\cO^M_{M\times N}$ is weakly flabby, i.e., that, for any open $U\subset M$ and for any $L\in\h_{\tt TVS}(\cO_M(M),\R)$, there exists $\ell\in\h_{\tt TVS}(\cO_M(U),\R)$, such that $^t\!\zr^M_U\ell = L$. It turns out that this condition is not satisfied, so that we cannot conclude that \eqref{False} holds. Indeed, assume that the condition is satisfied, so that it is in particular valid for $\cM=(\R,\Ci_\R)$. Choose now any $x\in\R$ and any open interval $I\subset\R$ that does not contain $x$. The evaluation map $$\ze_x:\Ci_\R(\R)\ni f\mapsto f(x)\in\R$$ is linear. It is also continuous, since there exists a compact $C\subset\R$ that contains $x$ and since, for any $f\in\Ci_\R(\R)$, we have $|f(x)|\le\sup_C|f|$. In view of our assumption, there exists $\ell\in\h_{\tt TVS}(\Ci_\R(I),\R),$ such that, for any $f\in\Ci_\R(\R)$, we have $\ell(f|_I)=f(x)$. If we take now two functions $f,g\in\Ci_\R(\R)$ that coincide in $I$ and have different values at $x$, we get the contradiction $$f(x)=\ell(f|_I)=\ell(g|_I)=g(x)\;.$$

\subsection{Categorical products of $\Zn$-manifolds}

We recommend to first read Subsections \ref{SheafiPShTA} and \ref{SheafiDirIm} of the Appendix.

\begin{lem}
Let $\cM_1,\cM_2\in\Zn\tt Man$. The presheaf $\bar\cF$ considered in Theorem \ref{IsomExtShfifi} is an object of the category ${\tt PSh}(M_1\times M_2,\tt LCTAlg)$ of presheaves of locally convex topological algebras over $M_1\times M_2$.
\end{lem}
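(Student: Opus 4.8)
The plan is to realise each section $\bar\cF(\zW)$, for $\zW\subset M_1\times M_2$ open, as a topological subalgebra of a (possibly infinite) product of the building blocks $\cF(U_a\times V_b)=\cO_{M_1}(U_a)\,\widehat\0\,\cO_{M_2}(V_b)$, and then to read off continuity of the restriction maps from the universal property of the product topology. The only input I need about the blocks is already contained in the proof of Theorem \ref{IsomExtShfifi}: each $\cF(U_a\times V_b)$ is a nuclear Fr\'echet algebra, hence in particular a locally convex topological algebra, and the restrictions $\zr^{U\times V}_{U'\times V'}=\zr^U_{U'}\,\widehat\0\,\zr^V_{V'}$ of the $\cB$-presheaf $\cF$ are continuous $\Zn$-graded unital $\R$-algebra morphisms, being completed tensor products of the continuous graded algebra morphisms $\zr^U_{U'}$ and $\zr^V_{V'}$.

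First I would equip the product $P_\zW:=\prod_{U_a\times V_b\subset\zW}\cF(U_a\times V_b)$ with the product topology and the componentwise operations. An arbitrary product of locally convex spaces is locally convex, and, since the product topology is the initial topology with respect to the coordinate projections $\pi_{ab}$, the componentwise addition, scalar multiplication, and multiplication are continuous: composing any of them with $\pi_{ab}$ factors through the corresponding continuous operation of the single factor $\cF(U_a\times V_b)$. Thus $P_\zW$ is a locally convex topological algebra. By Definition \eqref{LocSheaf1}, $\bar\cF(\zW)$ is the subset of $P_\zW$ cut out by the compatibility equations $\zr^{U_a\times V_b}_{U_{a\frak a}\times V_{b\frak b}}(f_{ab})=\zr^{U_{\frak a}\times V_{\frak b}}_{U_{a\frak a}\times V_{b\frak b}}(f_{\frak{ab}})$. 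Since the $\zr$ of $\cF$ are algebra morphisms, these equations are preserved by the componentwise operations, so $\bar\cF(\zW)$ is a subalgebra of $P_\zW$; giving it the subspace topology therefore makes it a locally convex topological algebra, and the induced operations are exactly the multiplication $\bullet_{\bar\cF}$ and linear structure recorded in the proof of Theorem \ref{IsomExtShfifi}. (As the $\zr$ are in addition continuous, the compatibility conditions are closed, so $\bar\cF(\zW)$ is even a closed subalgebra, though this is not needed here.)

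It remains to check that each restriction $\bar\zr^\zW_{\zW'}$, for $\zW'\subset\zW$, is a morphism of locally convex topological algebras. The index set $\{U_a\times V_b\subset\zW'\}$ is a subset of $\{U_a\times V_b\subset\zW\}$, and $\bar\zr^\zW_{\zW'}$ is precisely the corestriction to $\bar\cF(\zW')$ of the coordinate projection $P_\zW\to P_{\zW'}$ that forgets the components indexed outside $\zW'$. This projection is continuous for the product topologies and acts componentwise, hence is a unital algebra morphism; restricting it to the subspaces $\bar\cF(\zW)$ and $\bar\cF(\zW')$ preserves both properties, and the cocycle conditions were already noted in the proof of Theorem \ref{IsomExtShfifi}. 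The one point I expect to require genuine attention is the continuity of the induced multiplication $\bullet_{\bar\cF}$; but I anticipate no real obstacle, since it is forced by the universal property of the product topology together with the continuity of each $\bullet_\cF$, the latter guaranteed by Theorem \ref{IsomExtShfifi}.
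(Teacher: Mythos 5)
Your proposal is correct and follows essentially the same route as the paper: realise $\bar\cF(\zW)$ as a subalgebra of the product $\prod_{U_a\times V_b\subset\zW}\cF(U_a\times V_b)$ with the induced topology, get continuity of the componentwise multiplication from the universal property of the product topology (this is exactly the content of the paper's Lemma~\ref{PreRes1}, which you re-derive inline), and identify the restrictions $\bar\zr^\zW_{\zW'}$ with corestrictions of the continuous coordinate projections. The only cosmetic difference is that the paper delegates the continuity of the componentwise operations to its preparatory Lemma~\ref{PreRes1} rather than arguing it directly.
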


\begin{proof}
In the proof of Theorem \ref{IsomExtShfifi} we showed that $\bar\cF\in{\tt PSh}(M_1\times M_2,\tt Alg)$ for the obvious restrictions and algebra operations.\medskip

Recall that, for any open $\zW\subset M_1\times M_2$, the algebra $\bar\cF(\zW)$ is given by $$\bar\cF(\zW):=\{(f_{ab})_{ab}:f_{ab}\in\cF(U_a\times V_b), U_a\times V_b\subset\zW\;,\text{such that}\;\zr^{U_a\times V_b}_{U_{a\frak a}\times V_{b\frak b}}(f_{ab})=\zr^{U_{\frak a}\times V_{\frak b}}_{U_{a\frak a}\times V_{b\frak b}}(f_{\frak{ab}})\}\;$$ and is thus a subalgebra of $$\prod_{U_a\times V_b\subset\,\zW}\cF(U_a\times V_b)=\prod_{U_a\times V_b\subset\,\zW}\cO_{M_1}(U_a)\,\widehat\0\;\cO_{M_2}(V_b)\;.$$ We equip $\bar\cF(\zW)$ with the topology induced by the product of the topologies of the locally convex topological algebras ({\small LCTA}-s) $\cO_{M_1}(U_a)\,\widehat\0\;\cO_{M_2}(V_b)$. Since a product of {\small LCTVS}-s and a subspace of a {\small LCTVS} are themselves {\small LCTVS}-s, the algebra $\bar\cF(\zW)$ is a {\small LCTVS}. Its multiplication $$(f_{ab})_{ab}\bullet_{\bar\cF}(g_{ab})_{ab}=(f_{ab}\bullet_\cF g_{ab})_{ab}$$ is continuous, since the multiplication $\bullet_\cF$ is, see Proof of Theorem \ref{IsomExtShfifi} and Lemma \ref{PreRes1}. Hence, the space $\bar\cF(\zW)$ is a {\small LCTA}.\medskip

A restriction $\bar\zr^{\zW}_{\zW'}$ ($\zW'\subset\zW$) -- it sends any family $(f_{ab})_{ab}$ of $\bar\cF(\zW)$ indexed by the $U_a\times V_b\subset\zW$ to the family $(f_{ab})_{ab}$ of $\bar\cF(\zW')$ indexed by the $U_a\times V_b\subset\zW'$ -- is known to be an algebra morphism. It is continuous, since it is continuous as a map $$\bar\zr^{\zW}_{\zW'}:\prod_{U_a\times V_b\subset\,\zW}\cF(U_a\times V_b)\to \prod_{U_a\times V_b\subset\,\zW'}\cF(U_a\times V_b)\;,$$ in view of the definition of the product topology.\end{proof}

\begin{theo}\label{FinProds} The category $\tt \Zn Man$ has all finite products.\end{theo}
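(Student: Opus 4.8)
The plan is to use the standard fact that a category has all finite products as soon as it has a terminal object and all binary products. Since $\R^{0|\mathbf{0}}$ is terminal by Corollary \ref{IniTer}, it suffices to verify that, for any two $\Zn$-manifolds $\cM=(M,\cO_M)$ and $\cN=(N,\cO_N)$, the product $\Zn$-manifold $\cM\times\cN$ of Definition \ref{ProdZnMan} is a categorical product. First I would exhibit the two projection $\Zn$-morphisms $\zp_\cM:\cM\times\cN\to\cM$ and $\zp_\cN:\cM\times\cN\to\cN$. Their base maps are the topological projections, and their pullbacks are obtained, on the basis $\frak B$ of products of chart domains, from the canonical inclusions $a\mapsto a\,\widehat\0\,1$ and $b\mapsto 1\,\widehat\0\,b$ into $\cO_M(U_i)\,\widehat\0\,\cO_N(V_j)\simeq\cO_{M\times N}(U_i\times V_j)$, see \eqref{CompProdZ2n} and Remark \ref{ProblRem}; these glue to genuine sheaf morphisms precisely because $\cO_{M\times N}\simeq\bar\cF^{+}$ by Theorem \ref{IsomExtShfifi}.

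Next, given a test $\Zn$-manifold $\cP=(P,\cO_P)$ together with $\Zn$-morphisms $f=(f_0,f^{\sharp}):\cP\to\cM$ and $g=(g_0,g^{\sharp}):\cP\to\cN$, I would construct the mediating morphism $h=(\zf_0,\zf^{*}):\cP\to\cM\times\cN$. The base map is $\zf_0=(f_0,g_0):P\to M\times N$, continuous by the universal property of the product topology. For the pullback, the key observation is that over a basis element $U\times V\in\cB$ one has $\zf_0^{-1}(U\times V)=W:=f_0^{-1}(U)\cap g_0^{-1}(V)$, so that restricting $f^{\sharp}_U$ and $g^{\sharp}_V$ to $W$ yields two continuous $\Zn$-graded unital algebra morphisms $\za:\cO_M(U)\to\cO_P(W)$ and $\zb:\cO_N(V)\to\cO_P(W)$ into one and the same $\Zn$-commutative target. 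Because $\cO_P(W)$ is $\Zn$-commutative, the images of $\za$ and $\zb$ super-commute, so the universal property of $\,\widehat\0\,$ as coproduct in the category of $\Zn$-commutative locally convex topological algebras furnishes a unique continuous morphism $\Theta_{U\times V}:\cF(U\times V)=\cO_M(U)\,\widehat\0\,\cO_N(V)\to\cO_P(W)$ with $\Theta_{U\times V}(a\,\widehat\0\,b)=\za(a)\cdot\zb(b)$.

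These $\Theta_{U\times V}$ commute with the restrictions of $\cF$, since $f^{\sharp}$ and $g^{\sharp}$ do, and hence assemble into a morphism of $\cB$-presheaves and then, by passage to the standard extension, into a morphism of presheaves $\bar\cF\to(\zf_0)_*\cO_P$ of locally convex topological algebras; here I would invoke the Lemma above together with the fact, established in Subsection \ref{SheafiDirIm}, that the direct image $(\zf_0)_*\cO_P$ is again such a sheaf. Since the target is a sheaf, the universal property of sheafification from Subsection \ref{SheafiPShTA} lets this morphism factor uniquely through $\bar\cF^{+}\simeq\cO_{M\times N}$, which produces the pullback $\zf^{*}$. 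The identities $\zp_\cM\circ h=f$ and $\zp_\cN\circ h=g$ are then checked on the basis, where by construction $\Theta_{U\times V}(a\,\widehat\0\,1)=\za(a)$ and $\Theta_{U\times V}(1\,\widehat\0\,b)=\zb(b)$; equality of the full sheaf morphisms follows because they agree on a basis of the topology.

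For uniqueness I would pass to the finer basis $\frak B$ of products of chart domains, over which $\cO_{M\times N}(U_i\times V_j)=\cO_M(U_i)\,\widehat\0\,\cO_N(V_j)$ genuinely holds, see Remark \ref{ProblRem}. Any competing mediating morphism must, by the two commutativity conditions, send $a\,\widehat\0\,1$ to $\za(a)$ and $1\,\widehat\0\,b$ to $\zb(b)$; as such elements generate a dense subalgebra of the completed tensor product and the pullback is continuous, the morphism is forced on $\frak B$, hence on all of $\cO_{M\times N}$, while its base map is forced by the product property of the underlying spaces. The step I expect to be the main obstacle is exactly this passage through the sheafification: because $\bar\cF$ is only a presheaf and need not be a sheaf (the content of Remark \ref{ProblRem} and of the failure of weak flabbiness discussed thereafter), one cannot argue on global sections directly and must instead transport the full topological-algebra structure through both direct image and sheafification, which is precisely what the appendix results of Subsections \ref{SheafiPShTA} and \ref{SheafiDirIm} are designed to make rigorous.
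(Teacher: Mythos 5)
Your proposal is correct and follows essentially the same route as the paper: reduce to binary products via the terminal object, build the projections from $a\mapsto a\,\widehat\0\,1$, build the mediating pullback from $a\,\widehat\0\,b\mapsto \za(a)\cdot\zb(b)$ at the level of the $\cB$-presheaf $\cF$, transport it through the standard extension, sheafification and the direct-image comparison morphism of the appendix, and settle uniqueness by density of the algebraic tensor product together with continuity of pullbacks. The only cosmetic difference is that you invoke the coproduct property of $\widehat\0$ in the category of $\Zn$-commutative locally convex topological algebras, whereas the paper constructs the same map explicitly as $\widehat m_V\circ(\zr^{V_1}_V\widehat\0\,\zr^{V_2}_{V})\circ(\zf^*_{1,U_1}\widehat\0\,\zf^*_{2,U_2})$ and justifies it via Propositions \ref{ComplTensCoAlgMor} and \ref{ExtMult}.
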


\begin{proof}
Since $\tt \Zn Man$ has a terminal object (see Corollary \ref{IniTer}), it suffices to prove that it has binary products. Let $\cM_1,\cM_2\in\tt\Zn Man$. We will show that the product $\Zn$-manifold $\cM_1\times\cM_2$ (see Definition \ref{ProdZnMan}) is the categorical binary product of $\cM_1$ and $\cM_2$.\medskip

We first define $\Zn$-morphisms $$\zP_i=(\zp_i,\zp_i^*):\cM_1\times\cM_2\to\cM_i,\quad i\in\{1,2\}\;.$$ The base maps $\zp_i:M_1\times M_2\to M_i$ are the canonical smooth projections. In the following, we consider the case $i=1$ and use the notation introduced above.\medskip

The maps $$T_U:\cO_{M_1}(U)\ni f\mapsto f\0 1\in (\zp_{1,*}\bar\cF)(U)\;,$$ $U\in{\tt Open}(M_1)$, define a morphism $T:\cO_{M_1}\to \zp_{1,*}\bar\cF$ in ${\tt PSh}(M_1,\tt LCTAlg)$. Indeed, we have $\cO_{M_1},\,\zp_{1,*}\bar\cF\in{\tt PSh}(M_1,{\tt LCTAlg})$, see \cite[Theorem 14]{BP1} and Proof of Proposition \ref{DirImShfifi}. It is easy to see that the maps $T_U$ commute with restrictions and are algebra morphisms. To show that $T_U$ is continuous, it suffices to check that the linear map \be\label{CompA}T_U:\cO_{M_1}(U)\ni f\mapsto f\0 1\in \cO_{M_1}(U)\0\cO_{M_2}(M_2)\ee is continuous for the projective tensor topology on the target. We apply Theorem \ref{ProjTopSN} and Proposition \ref{CoSN}. Let $p_{C,D}\0 p_{K,\zD}$ be any of the seminorms that induce the projective tensor topology. Recall that $C\subset U$ and $K\subset M_2$ are compact subsets, and that $D$ and $\zD$ are differential operators acting on $\cO_{M_1}(U)$ and $\cO_{M_2}(M_2)$, respectively. We must prove that there is a finite number of seminorms $p_{C_k,D_k}$ on the source and a constant $C>0$, such that $$p_{C,D}(f)\cdot p_{K,\zD}(1)\le C\max_kp_{C_k,D_k}(f)\;,$$ for any $f\in\cO_{M_1}(U)$. It suffices to use a single source seminorm $p_{C_k,D_k}$, namely $p_{C,D}$. Indeed, if ${\frak C}=p_{K,\zD}(1)=0$, the previous condition is satisfied with $C=1$, and if ${\frak C}>0$, it is satisfied with $C=\frak C$.\medskip

It follows from Proposition \ref{LCTAShfifi} that $T^+:\cO_{M_1}^+\to (\zp_{1,*}\bar\cF)^+$ is a morphism in ${\tt Sh}(M_1,\tt LCTAlg)$. Moreover, in view of Proposition \ref{DirImShfifi}, there is an ${\tt Sh}(M_1,{\tt LCTAlg})$-morphism $\iota: (\zp_{1,*}\bar\cF)^+\to \zp_{1,*}\,\bar\cF^+$. The composite $\iota\circ T^+:\cO_{M_1}^+\to \zp_{1,*}\bar\cF^+$ is a morphism in ${\tt Sh}(M_1,\tt LCTAlg)$. Proposition \ref{LCTAShfifi} and Theorem \ref{IsomExtShfifi} allow us to interpret it as a morphism \be\label{CompB}\zp_{1}^*=\iota\circ T^+:\cO_{M_1}\to\zp_{1,*}\,\cO_{M_1\times M_2}\ee in ${\tt Sh}(M_1,\tt Alg)$. The map $\zp_1^*$ is actually a morphism of sheaves of $\Zn$-commutative associative unital algebras, so that the map $\zP_1=(\zp_1,\zp_1^*):\cM_1\times\cM_2\to\cM_1$ is a morphism in $\Zn\tt Man$ (the results of the appendix we use in this proof extend obviously to the graded unital setting: when speaking in the rest of the proof about algebras, we actually mean $\Zn$-commutative associative unital algebras).\medskip

It remains to check the universality of our construction. Let ${\cN}\in\Zn\tt Man$ and let $\Phi_i=(\zf_i,\zf_i^*):{\cN}\to{\cM}_{i}$ be morphisms in $\Zn{\tt Man}$. We will prove that there exists a unique $\Zn{\tt Man}$-morphism $\Psi=(\psi,\psi^*):{\cN}\to\cM_1\times\cM_2$, such that $\zP_i\circ \Psi=\zF_i$, for both $i$.\medskip

We set $\psi=(\zf_1,\zf_2):N\to M_1\times M_2\,$. As for $\psi^*$, observe that, for any open $U_i\subset M_i$, the map $\zf^*_{i,U_i}:\cO_{M_i}(U_i)\to\cO_N(\zf_i^{-1}(U_i))$ is a continuous algebra ($\Zn$-graded unital algebra) morphism, see \cite[Theorem 19]{BP1}. Denote now by $V=V_1\cap V_2$ the open subset $$V=V_1\cap V_2=\zf_1^{-1}(U_1\,)\cap\zf_2^{-1}(U_2)=\psi^{-1}(U_1\times U_2)\subset N\;,$$ and denote by $m_V=-\cdot-$ the multiplication of $\cO_N(V)$. In view of \cite[Theorem 14]{BP1} and Propositions \ref{ComplTensCoAlgMor} and \ref{ExtMult}, the map $${\frak p}_{U_1\times U_2}=\widehat m_V\circ(\zr^{V_1}_V\widehat\0\,\zr^{V_2}_{V})\circ(\zf^*_{1,U_1}\widehat\0\,\zf^*_{2,U_2}):\cO_{M_1}(U_1)\widehat\0\,\cO_{M_2}(U_2)\to \cO_N(V)\;,$$ where $\zr^{V_1}_V$ and $\zr^{V_2}_V$ are restrictions in $\cO_N$, is a continuous algebra morphism between nuclear Fr\'echet algebras. The maps \be\label{CompC}{\frak p}_{U_1\times U_2}:\cF(U_1\times U_2)\ni\sum_{j=0}^\infty f_j\0 g_j\mapsto \sum_{j=0}^\infty\,\zr^{V_1}_V\zf^*_{1,U_1}\,f_j\,\cdot\,\zr^{V_2}_V\zf^*_{2,U_2}\,g_j\in (\psi_*\cO_N)(U_1\times U_2)\ee define a morphism of $\cB$-presheaves of locally convex topological algebras. The latter extends to a morphism $\bar{\frak p}:\bar\cF\to\psi_*\cO_N$ of presheaves of locally convex topological algebras over $M_1\times M_2\,$.\medskip

In view of Propositions \ref{LCTAShfifi} and \ref{DirImShfifi}, there are morphisms $$\bar{\frak p}^+:\bar\cF^+\to (\psi_*\cO_N)^+\quad\text{and}\quad\iota: (\psi_*\cO_N)^+\to\psi_*\cO_N^+$$ in ${\tt Sh}(M_1\times M_2,{\tt LCTAlg})$. As above, we can view their composite as a morphism \be\label{CompD}\psi^*=\iota\circ\bar{\frak p}^+:\cO_{M_1\times M_2}\to \psi_*\cO_N\ee of sheaves of algebras, so that $\Psi=(\psi,\psi^*):{\cN}\to{\cM}_1\times{\cM}_2$ is a morphism in $\Zn\tt Man$.\medskip

To prove that $\zP_i\circ\Psi=\zF_i$, it suffices to show that $$\psi^*_{M_1\times M_2}\circ\zp^*_{i,M_i}=\zb(\zP_i\circ\Psi)=\zb(\zF_i)=\zf_{i,M_i}^*\;,$$ see Theorem \ref{Thm:SpaceAlgebraMorphBijection}. The latter is a straightforward consequence of Equations \eqref{CompA}, \eqref{CompB}, \eqref{CompC}, \eqref{CompD}, \eqref{CompE}, and \eqref{CompF}.\medskip

Let now $X=(\chi,\chi^*):{\cN}\to\cM_1\times\cM_2$ be another $\Zn\tt Man$-morphism that satisfies $\zP_i\circ X=\zF_i$. As the category of smooth manifolds has finite products, we get $\chi=\psi$. We will check that $\zb(X)=\zb(\Psi)$, i.e., that, for all $\zs\in\bar\cF^+(M_1\times M_2)$, we have $$\chi^*_{M_1\times M_2}\zs=\psi^*_{M_1\times M_2}\zs\in\cO_N(N)\;.$$ It suffices to show that these sections coincide in a neighborhood of an arbitrary point $n_0\in N$. We use the compact notation $m\in M$ instead of $(m_1,m_2)\in M_1\times M_2$. Recall that $$\zs=([s]_m)_{m\in M}\;,$$ where $s\in\cF(U)$ ($U=U_1\times U_2\ni m$) reads $s=\sum_{j=0}^\infty f_j\0 g_j\,$ ($f_j\in\cO_{M_1}(U_1)$, $g_j\in\cO_{M_2}(U_2)$).\medskip

In view of \eqref{CompC}, \eqref{CompD}, \eqref{CompE}, and \eqref{CompF}, \be\label{PullBacksUniv}\psi^*_{M_1\times M_2}\zs=([\sum_{j=0}^\infty\,\zr^{V_1}_V\zf^*_{1,U_1}\,f_j\,\cdot\,\zr^{V_2}_V\zf^*_{2,U_2}\,g_j]_n)_{n\in N}\;,\ee where we take the germ at $n$ of the section induced by the representative $s$ of the germ at $\psi(n)\in M$. Let $m_0=\psi(n_0)\in M$. Since $s$ is constant in a neighborhood $U_0=U_{1,0}\times U_{2,0}$ of $m_0$, the representative in the {\small RHS} of the preceding equation is constant in the neighborhood $$V_0=V_{1,0}\cap V_{2,0}=\zf_1^{-1}(U_{1,0})\cap\zf_2^{-1}(U_{2,0})=\psi^{-1}(U_0)\;$$ of $n_0\,$. Hence, $$\zr^{N}_{V_0}\,\psi^*_{M_1\times M_2}\zs = \sum_{j=0}^\infty\,\zr^{V_{1,0}}_{V_0}\zf^*_{1,U_{1,0}}\,f_j\,\cdot\,\zr^{V_{2,0}}_{V_0}\zf^*_{2,U_{2,0}}\,g_j\in\cO_N(V_0)\;.$$

On the other hand, since $\zP_i\circ X=\zF_i$, we have, for any open $U_1\subset M_1$ and any $f\in\cO_{M_1}(U_1)$, $$\chi^*_{U_1\times M_2}([f\0 1]_m)_{m\in U_1\times M_2}=\zf^*_{1,U_1}f\;,$$ due to \eqref{CompA}, \eqref{CompB}, \eqref{CompE}, and \eqref{CompF}. For any open $U_2\subset M_2$, we get \be\label{Info}\chi^*_{U}([f\0 1]_m)_{m\in U}=\zr_{\psi^{-1}(U)}^{\psi^{-1}(U_1\times M_2)}\chi^*_{U_1\times M_2}([f\0 1]_m)_{m\in U_1\times M_2}=\zr^{V_1}_V\zf^*_{1,U_1}f\;.\ee An analogous result holds for $i=2$. Observe now that $$\zs=([\sum_{j=0}^\infty f_j\0 g_j]_m)_{m\in M}=(\zp^m_{U}\lim_{N}\sum_{j=0}^N f_j\0 g_j)_{m\in M}=$$ \be\label{SigmaSheafifi}(\lim_{N}\sum_{j=0}^N [f_j\0 g_j]_m)_{m\in M}=\sum_{j=0}^\infty([f_j\0 g_j]_m)_{m\in M}\;,\ee see Proof of Proposition \ref{LCTAShfifi}. As the pullbacks $\chi^*$ and the restriction $\zr^N_{V_0}$ are continuous algebra morphisms, we get $$\zr^N_{V_0}\,\chi^*_{M_1\times M_2}\zs=\sum_{j=0}^\infty\zr^{\psi^{-1}(M)}_{\psi^{-1}(U_0)}\,\chi^*_{M}([f_j\0 g_j]_m)_{m\in M}=$$ $$\sum_{j=0}^\infty\,\chi^*_{U_0}([f_j\0 1]_m)_{m\in U_0}\cdot \chi^*_{U_0}([1\0 g_j]_m)_{m\in U_0}=\sum_{j=0}^\infty\zr^{V_{1,0}}_{V_0}\,\zf_{1,U_{1,0}}^*f_j\cdot \zr^{V_{2,0}}_{V_0}\,\zf_{2,U_{2,0}}^*g_j\;,$$ due to \eqref{Info}.
\end{proof}

\subsection{Products of $\Zn$-morphisms}

We use again abbreviations of the type $n=(n_1,n_2)\in N=N_1\times N_2$ (which we introduced in the proof of Theorem \ref{FinProds}).

\begin{prop}\label{PropProdZ2nMor}
Let $\Psi_i:{\cM}_i\to{\cN}_i$, $i\in\{1,2\}$, be a $\Zn$-morphism with base map $\psi_i$ and pullback sheaf morphism $\psi_i^*$. Due to the universality of the product of $\Zn$-manifolds, there is a canonical $\Zn$-morphism $$\Psi=\Psi_1\times\Psi_2:{\cM_1\times\cM_2}\to{\cN}_1\times{\cN}_2\;.$$ Its base map is $\psi=\psi_1\times\psi_2$ and its pullback sheaf morphism $\psi^*=(\psi_1\times\psi_2)^*$ is given, for each open subset $\zW\subset N_1\times N_2,$ by $$\psi_\zW^*=\left(\prod_{m\in\psi^{-1}(\zW)}\, (\psi_1^*\widehat\0\,\psi_2^*)_{\psi(m)}\right)\circ (p_{\psi(m)})_{m\in\psi^{-1}(\zW)}\;.$$ The first map in the {\small RHS} is the product of the morphisms between stalks induced by the morphism $$\psi_1^*\widehat \0\,\psi_2^*:\bar\cF_N\to \psi_*\bar\cF_M$$ (of presheaves of locally convex topological algebras), where $\bar\cF_{N}$ is the presheaf defined by $\cO_{N_1}\widehat\0\,\cO_{N_2}$, and similarly for $\bar\cF_M$. The second map in the {\small RHS} is the tuple of morphisms $$p_{\psi(m)}:\prod_{n\in\zW}\bar\cF_{N,n}\to \bar\cF_{N,\psi(m)}\;.$$
\end{prop}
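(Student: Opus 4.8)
The plan is to obtain $\Psi$ from the universal property established in Theorem~\ref{FinProds} and then to unwind the sheafification defining its pullback into the announced germ-wise formula. Write $\zP_i^M=(\zp_i^M,(\zp_i^M)^*):\cM_1\times\cM_2\to\cM_i$ and $\zP_i^N:\cN_1\times\cN_2\to\cN_i$ for the canonical projections of the two product $\Zn$-manifolds. Since, by Theorem~\ref{FinProds}, $\cN_1\times\cN_2$ is the categorical product of $\cN_1$ and $\cN_2$, the two composites $\Psi_i\circ\zP_i^M:\cM_1\times\cM_2\to\cN_i$ induce a unique $\Zn\tt Man$-morphism $\Psi=(\psi,\psi^*):\cM_1\times\cM_2\to\cN_1\times\cN_2$ with $\zP_i^N\circ\Psi=\Psi_i\circ\zP_i^M$; this morphism is by definition $\Psi_1\times\Psi_2$. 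As the underlying smooth manifolds admit finite products and the base maps of the $\zP_i^N\circ\Psi$ are the $\psi_i\circ\zp_i^M$, the base map is forced to be $\psi=\psi_1\times\psi_2$, and it only remains to describe $\psi^*$ explicitly.

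For the pullback I would first build a presheaf morphism. On a basis rectangle $V_1\times V_2\subset N_1\times N_2$ with $V_i$ a $\Zn$-chart domain, each $\psi_{i,V_i}^*:\cO_{N_i}(V_i)\to\cO_{M_i}(\psi_i^{-1}(V_i))$ is a continuous morphism of nuclear Fréchet algebras, see \cite[Theorem~19]{BP1}; by the Appendix result on completed tensor products of continuous algebra morphisms (Proposition~\ref{ComplTensCoAlgMor}), their completed tensor product
$$\psi_{1,V_1}^*\widehat\0\,\psi_{2,V_2}^*:\cO_{N_1}(V_1)\widehat\0\,\cO_{N_2}(V_2)\to\cO_{M_1}(\psi_1^{-1}(V_1))\widehat\0\,\cO_{M_2}(\psi_2^{-1}(V_2))$$
is again a continuous algebra morphism. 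Since $\psi^{-1}(V_1\times V_2)=\psi_1^{-1}(V_1)\times\psi_2^{-1}(V_2)$, the target is $(\psi_*\bar\cF_M)(V_1\times V_2)$ (via \eqref{LocIsom2}), and compatibility with restrictions — inherited from the $\psi_i^*$ — shows, exactly as the maps ${\frak p}$ were assembled in the proof of Theorem~\ref{FinProds}, that these assemble into a morphism $\psi_1^*\widehat\0\,\psi_2^*:\bar\cF_N\to\psi_*\bar\cF_M$ of presheaves of locally convex topological algebras.

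Next I would sheafify. Applying Proposition~\ref{LCTAShfifi} and then the morphism $\iota$ of Proposition~\ref{DirImShfifi} yields $(\psi_1^*\widehat\0\,\psi_2^*)^+:\bar\cF_N^+\to(\psi_*\bar\cF_M)^+$ followed by $\iota:(\psi_*\bar\cF_M)^+\to\psi_*\bar\cF_M^+$. Under the identifications $\bar\cF_N^+\simeq\cO_{N_1\times N_2}$ and $\bar\cF_M^+\simeq\cO_{M_1\times M_2}$ of Theorem~\ref{IsomExtShfifi}, the composite $\iota\circ(\psi_1^*\widehat\0\,\psi_2^*)^+$ is a morphism $\cO_{N_1\times N_2}\to\psi_*\cO_{M_1\times M_2}$ of sheaves of $\Zn$-commutative unital algebras. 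Recalling that a section of a sheafification is a compatible family of germs and that sheafification acts germ-wise, reading off the action of this composite on $\zs=([s]_n)_{n\in\zW}\in\bar\cF_N^+(\zW)$ gives precisely the stated formula: the tuple $(p_{\psi(m)})_{m\in\psi^{-1}(\zW)}$ extracts the germs $([s]_{\psi(m)})_m$, and the product of the induced stalk morphisms $(\psi_1^*\widehat\0\,\psi_2^*)_{\psi(m)}:\bar\cF_{N,\psi(m)}\to\bar\cF_{M,m}$ — the map $\iota$ being absorbed into the identification of $(\psi_*\bar\cF_M)_{\psi(m)}$ with $\bar\cF_{M,m}$ — sends these to a family $([t]_m)_m$. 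That this family is again a genuine section of $\bar\cF_M^+(\psi^{-1}(\zW))$, rather than a mere element of $\prod_m\bar\cF_{M,m}$, follows from $\psi_1^*\widehat\0\,\psi_2^*$ being a presheaf morphism and hence respecting germ-compatibility.

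It remains to identify this constructed morphism with the canonical $\Psi_1\times\Psi_2$, and I expect this to be the main obstacle. By Theorem~\ref{Thm:SpaceAlgebraMorphBijection} it suffices to verify the universal-property equations $\zP_i^N\circ\Psi=\Psi_i\circ\zP_i^M$ at the level of global pullbacks. The delicate point is that, by the construction in Theorem~\ref{FinProds}, the projection pullbacks $(\zp_i^N)^*$ are the sheafifications of the embeddings $g\mapsto g\0 1$ (resp.\ $g\mapsto 1\0 g$), so one must track how these embeddings compose with the germ-wise tensor-product morphism through the two sheafifications and the maps $\iota$. I would carry this out on germs over basis rectangles, where the computation reduces to the elementary identity $(\psi_{1,V_1}^*\widehat\0\,\psi_{2,V_2}^*)(f\0 1)=(\psi_{1,V_1}^*f)\0 1$ and its $i=2$ analogue; commutativity of sheafification with such computations — as already exploited at the end of the proof of Theorem~\ref{FinProds} — then yields $\zP_i^N\circ\Psi=\Psi_i\circ\zP_i^M$, and the uniqueness in the universal property identifies $\Psi$ with $\Psi_1\times\Psi_2$.
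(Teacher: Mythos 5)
Your construction of $\Psi_1\times\Psi_2$ and of its base map agrees with the paper, but the two proofs then diverge. The paper takes the morphism already produced by the universal property in the proof of Theorem \ref{FinProds} --- whose pullback is described explicitly by Equation \eqref{PullBacksUniv} --- and \emph{unwinds} that formula: it rewrites each factor $\zr^{V_1}_V(\zp^{S}_1)^*_{W_1}\psi^*_{1,U_1}f_j$ as the germ family of $\psi^*_{1,U_1}f_j\0 1$, multiplies, resums via \eqref{SigmaSheafifi}, and lands on the stated stalk-wise expression. You instead \emph{construct} a candidate pullback as the sheafification of $\psi_1^*\widehat\0\,\psi_2^*$ (for which the announced formula holds essentially by definition of $\zvf^+$ and $\iota$) and then identify it with $(\Psi_1\times\Psi_2)^*$ by verifying $\zP_i^N\circ\Psi=\Psi_i\circ\zP_i^M$ and invoking uniqueness. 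Both routes are legitimate and rest on the same elementary tensor identities; the paper's version gets the universal-property equations for free but must work to reach the tensor formula, while yours gets the formula for free and defers the work to the verification, which is indeed feasible as you sketch it since only elementary tensors $f\0 1$ and $1\0 g$ have to be pushed through.

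One step of your construction needs repair. To apply Propositions \ref{LCTAShfifi} and \ref{DirImShfifi} verbatim you need $\psi_1^*\widehat\0\,\psi_2^*$ as a morphism of presheaves $\bar\cF_N\to\psi_*\bar\cF_M$ over \emph{all} open subsets, and you obtain it by ``assembling'' the maps defined on rectangles. But an element of $(\psi_*\bar\cF_M)(\zW)=\bar\cF_M(\psi^{-1}(\zW))$ is a compatible family indexed by \emph{all} rectangles $W_c\times W_d\subset\psi^{-1}(\zW)$, and such a rectangle need not be contained in any $\psi_1^{-1}(U_a)\times\psi_2^{-1}(V_b)$ with $U_a\times V_b\subset\zW$ (take $\psi_1$ constant and $\zW$ non-rectangular); producing the corresponding component would then require gluing inside $\cF_M$, which is precisely what Remark \ref{ProblRem} says cannot be done. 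This is why, in the proof of Theorem \ref{FinProds}, the analogous extension $\bar{\frak p}$ is taken with values in the \emph{sheaf} $\psi_*\cO_N$ rather than in a presheaf of completed tensor products. The gap does not affect your strategy in substance: the final formula only uses the stalk morphisms $(\psi_1^*\widehat\0\,\psi_2^*)_{\psi(m)}$, and these are well defined from the rectangle-level maps alone, rectangles being cofinal in the neighbourhood filter of every point. So you should define the sheaf morphism directly on compatible families of germs (checking local implementability on the rectangles $\psi_1^{-1}(V_1)\times\psi_2^{-1}(V_2)$), rather than through a presheaf morphism on arbitrary opens.
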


To understand this claim, recall that $$\psi_\zW^*:\cO_N(\zW)\simeq\bar\cF_N^+(\zW)\subset\prod_{n\in\zW}\bar\cF_{N,n}\to \cO_{M}(\psi^{-1}(\zW))\simeq\bar\cF_M^+(\psi^{-1}(\zW))\subset \prod_{m\in\psi^{-1}(\zW)}\bar\cF_{M,m}\;.$$ Note now that $$(p_{\psi(m)})_{m\in\psi^{-1}(\zW)}:\cO_N(\zW)\to\prod_{m\in\psi^{-1}(\zW)}\bar\cF_{N,\psi(m)}$$ and that $$\prod_{m\in\psi^{-1}(\zW)}\, (\psi_1^*\widehat\0\,\psi_2^*)_{\psi(m)}:\prod_{m\in\psi^{-1}(\zW)}\bar\cF_{N,\psi(m)}\to \prod_{m\in\psi^{-1}(\zW)}\bar\cF_{M,m}\;,$$ so that the composite of these maps may coincide with $\psi^*_\zW$.

\begin{proof}
For $i\in\{1,2\}$, we denote by $\zP_i^S$ (resp., $\zP_i^T$) the $\Zn$-morphism $\zP_i^S:\cM_1\times\cM_2\to\cM_i$ (resp., $\zP_i^T:\cN_1\times\cN_2\to\cN_i$). The composite $\zF_i=\Psi_i\circ\zP_i^S$ is a $\Zn$-morphism $\zF_i:\cM_1\times\cM_2\to{\cN}_i$. In view of the universality of the product $\cN_1\times\cN_2$, there is a unique $\Zn$-morphism $\Psi:\cM_1\times\cM_2\to\cN_1\times\cN_2$, such that $\zP_i^T\circ\Psi=\zF_i$. We denote this morphism $\Psi$ by $\Psi_1\times\Psi_2$ and refer to it as the product of the $\Zn$-morphisms $\Psi_i$. We showed in the proof of Theorem \ref{FinProds} that the base map of $\Psi$ is $$\psi=(\psi_1\circ\zp^S_1,\psi_2\circ\zp^S_2)=\psi_1\times\psi_2\;.$$ We investigate now the pullback morphisms $\psi^*$. Let $\zW\subset N$ be open, so that $\psi^*_\zW:\cO_{N}(\zW)\to \cO_{M}(\psi^{-1}(\zW))$, and let $\zs\in $ $\cO_{N}(\zW)\simeq\bar\cF_N^+(\zW)$. Recall once again that $\zs=([s]_n)_{n\in\zW}$, where $s\in\cF_N(U)$ ($U=U_1\times U_2\ni n$) reads $$s=\sum_{j=0}^\infty f_j\0 g_j\quad (f_j\in\cO_{N_1}(U_1), g_j\in\cO_{N_2}(U_2))\;.$$ It follows from Equation \eqref{PullBacksUniv} that \be\label{IntermedPsisigma}\psi^*_\zW\,\zs=([\sum_{j=0}^\infty\,\zr^{V_1}_V(\zp^{S}_1)^*_{W_1}\psi^*_{1,U_1}\,f_j\,\cdot\,\zr^{V_2}_V(\zp^{S}_2)^*_{W_2}\psi^*_{2,U_2}\,g_j]_m)_{m\in \psi^{-1}(\zW)}\;,\ee where $$V=V_1\cap V_2,\quad V_i=(\zp_i^S)^{-1}(W_i),\quad\text{and}\quad W_i=\psi_i^{-1}(U_i)\quad\quad (\text{obviously}\;V=W_1\times W_2)\;.$$ We interpret $\psi^*_{1,U_1}\,f_j\in\cO_{M_1}(W_1)$ as $$([\psi^*_{1,U_1}\,f_j]_{m_1})_{m_1\in W_1}\in\cO^+_{M_1}(W_1)\;,$$ so that $$\zr^{V_1}_V(\zp^{S}_1)^*_{W_1}\psi^*_{1,U_1}\,f_j=([\psi^*_{1,U_1}\,f_j\0 1]_{\zm})_{\zm\in V}\;,$$ due to Equations \eqref{CompA} and \eqref{CompB}, as well as to the observation that $m_1\in W_1$ is equivalent to $m_1=\zp^S_1(m)\;(m\in V_1).$ A similar result holds for the second factor in the {\small RHS} of \eqref{IntermedPsisigma}, so that, in view of \eqref{SigmaSheafifi}, we obtain $$\psi^*_\zW\,\zs=\left(\left[\sum_{j=0}^\infty\,([\psi^*_{1,U_1}\,f_j\0 \psi^*_{2,U_2}\,g_j]_{\zm})_{\zm\in V}\right]_m\right)_{m\in \psi^{-1}(\zW)}=$$ $$\left(\left[\,([\psi^*_{1,U_1}\widehat\0\,\psi^*_{2,U_2}\sum_{j=0}^\infty\,f_j\0\,g_j]_{\zm})_{\zm\in V}\right]_m\right)_{m\in \psi^{-1}(\zW)}\;.$$ This image is a family (indexed by $m$) of elements of $\bar\cF_{M,m}^+\simeq\bar\cF_{M,m}$. The isomorphism between a stalk of a presheaf and the corresponding stalk of its sheafification is described in the proof of Lemma 6.17.2 of the Stacks Project. The description shows that $$\psi^*_\zW\,\zs=\left(\left[\,\psi^*_{1,U_1}\widehat\0\,\psi^*_{2,U_2}\sum_{j=0}^\infty\,f_j\0\,g_j\right]_m\right)_{m\in \psi^{-1}(\zW)}=$$ $$\left(\prod_{m\in\psi^{-1}(\zW)}\, (\psi_1^*\widehat\0\,\psi_2^*)_{\psi(m)}\right) \left((p_{\psi(m)})_{m\in\psi^{-1}(\zW)}\;\zs\right)\;.$$\end{proof}

\section{Appendix}

We prove and recall results on topological vector spaces and on topological algebras.

\subsection{Topological vector subspaces}

\begin{defi} A \emph{topological vector subspace} $S$ of a {\small TVS} $V$ ({\small TVSS} for short) is a subset $S\subset V$ which is a {\small TVS} for the linear operations and the topology of $V$.\end{defi}

\begin{prop} A subset $S\subset V$ of a {\small TVS} $V$ is a {\small TVSS} of $V$ if and only if $S$ is a linear subspace of $V$ and is endowed with the {\sl topology induced} by the topology of $V$.\end{prop}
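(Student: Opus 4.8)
The plan is to prove the two implications separately, noting that the definition of a {\small TVSS} already pins the linear operations and the topology down to those of $V$, so that the real content lies in translating ``being a {\small TVS}'' into the two stated conditions. For the forward implication, suppose $S$ is a {\small TVSS}, i.e.\ a {\small TVS} for the linear operations and the topology of $V$. In order for the addition and scalar multiplication of $V$ to restrict to honest operations $S\times S\to S$ and $\K\times S\to S$ making $S$ a vector space, the subset $S$ must be stable under these operations, hence a linear subspace of $V$. Moreover, by definition the topology carried by the subset $S$ is the one it inherits from $V$, which is exactly the induced topology. This direction is thus immediate from unwinding the definition.

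For the backward implication, suppose $S\subset V$ is a linear subspace equipped with the induced topology; I must check that $S$ is a {\small TVS}, i.e.\ that the maps $$a\colon S\times S\to S,\quad (x,y)\mapsto x+y\qquad\text{and}\qquad m\colon\K\times S\to S,\quad (\lambda,x)\mapsto\lambda x$$ are continuous. First I would record the standard point-set fact that for $A\subset X$ and $B\subset Y$ the subspace topology on $A\times B\subset X\times Y$ coincides with the product of the subspace topologies; applied to $A=B=S$ (and to $\K\times S$), this identifies the domain $S\times S$ carrying the product of induced topologies with the topological subspace $S\times S\subset V\times V$. Next I would invoke that the restriction of a continuous map to a subspace is continuous, and that corestriction to a subspace containing the image is continuous for the induced topology. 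Since addition $V\times V\to V$ and scalar multiplication $\K\times V\to V$ are continuous and carry $S\times S$ and $\K\times S$ into $S$ (as $S$ is a subspace), their (co)restrictions $a$ and $m$ are continuous, which is precisely what is needed.

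The only genuinely delicate point, and hence the step I would flag, is the compatibility of the product and subspace topologies used to identify the domains of $a$ and $m$: the {\small TVS} axioms are continuity statements for maps out of a product, so one must know that equipping $S\times S$ with the product of the induced topologies is the same as viewing it as a subspace of $V\times V$. Everything else is a routine application of the behaviour of continuous maps under restriction and corestriction, and the conclusion that $S$ is a {\small TVS} for the induced topology follows at once.
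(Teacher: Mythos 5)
Your proof is correct and follows essentially the same route as the paper, whose entire argument is the one-line observation that the restrictions to $S$ of the continuous addition and scalar multiplication of $V$ are continuous for the induced topology. The extra care you take with the forward (definitional) implication and with the compatibility of subspace and product topologies on $S\times S$ is sound but just fills in details the paper leaves implicit.
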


\begin{proof} The restrictions to $S$ of the continuous addition and scalar multiplication in $V$ are continuous in the topology induced on $S$ by $V$.\end{proof}

Note also that, if $V$ is a {\small TVS}, and if $S\subset V$ is a {\small TVSS}, then $S$ is a {\small TVS} and the inclusion $\imath:S\ni s\mapsto s\in V$ is an injective continuous linear map. Conversely, if $S\subset V$ is a {\small TVS} and the inclusion $\imath$ is an injective continuous linear map, then the linear structure on $S\subset V$ is the same as in $V$, but $S$ can have a topology that is finer than the induced one, so that $S$ is not necessarily a {\small TVSS}.

\begin{prop}\label{TVSS} Let $\imath:V\to W$ be an injective linear map between {\small TVS}-s. When equipped with the induced topology, the linear subspace $\imath(V)$ is a {\small TVSS} of $W$. The bijective linear map $\tilde \imath:V\to \imath(V)$ is a {\small TVS}-isomorphism, i.e., a linear homeomorphism, if and only if the topology of $V$ is the {\sl initial topology} of $\imath:V\to W$. In this case, the space $V$ can be viewed as a {\small TVSS} $V\simeq \imath(V)$ of $W$.\end{prop}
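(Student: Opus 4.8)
The first assertion is immediate from the preceding proposition: since $\imath$ is linear, $\imath(V)$ is a linear subspace of $W$, and a linear subspace equipped with the topology induced by $W$ is a {\small TVSS} of $W$. The plan for the equivalence is to compare, on the underlying set $V$, the given topology $\tau$ with the initial topology $\tau_{\mathrm{ini}}$ of $\imath$, i.e.\ the coarsest topology making $\imath$ continuous, whose open sets are exactly the preimages $\imath^{-1}(O)$ with $O$ open in $W$. Throughout I would use that $\tilde\imath$ coincides with $\imath$ as a map and that, by injectivity, $\imath(\imath^{-1}(O))=\imath(V)\cap O$ for every $O\subset W$.

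First I would record that the open subsets of the {\small TVSS} $\imath(V)$ are precisely the sets $\imath(V)\cap O$, $O$ open in $W$. The map $\tilde\imath:(V,\tau)\to\imath(V)$ is then continuous if and only if each $\tilde\imath^{-1}(\imath(V)\cap O)=\imath^{-1}(O)$ is $\tau$-open, i.e.\ if and only if $\tau$ is finer than or equal to $\tau_{\mathrm{ini}}$. Conversely, $\tilde\imath^{-1}$ is continuous, equivalently $\tilde\imath$ is open, if and only if $\tilde\imath(U)=\imath(V)\cap O$ is open in $\imath(V)$ for every $\tau$-open $U$; applying $\imath^{-1}$ and using injectivity forces $U=\imath^{-1}(O)$, so that every $\tau$-open set is already $\tau_{\mathrm{ini}}$-open, i.e.\ $\tau$ is coarser than or equal to $\tau_{\mathrm{ini}}$. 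Combining the two one-sided comparisons, $\tilde\imath$ is a linear homeomorphism exactly when $\tau=\tau_{\mathrm{ini}}$.

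For the converse direction one may also argue in a single stroke: if $\tau=\tau_{\mathrm{ini}}$, then $\tilde\imath$ carries the basis $\{\imath^{-1}(O)\}_O$ of $\tau$ bijectively onto the basis $\{\imath(V)\cap O\}_O$ of the subspace topology via $\imath^{-1}(O)\mapsto\imath(V)\cap O$, so it is a continuous open bijection, hence a {\small TVS}-isomorphism. The final sentence is then merely the identification $V\simeq\imath(V)$ furnished by this isomorphism together with the first assertion that $\imath(V)$ is a {\small TVSS} of $W$. The argument is essentially point-set topological; the only place requiring care -- and the step I would treat as the crux -- is the systematic use of injectivity to pass between $\imath^{-1}(O)$ and $\imath(V)\cap O$, which is exactly what makes the two inclusions of topologies align to yield the stated equivalence.
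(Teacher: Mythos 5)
Your proposal is correct and follows essentially the same route as the paper: both arguments reduce the equivalence to the identity $\imath^{-1}(\imath(V)\cap O)=\imath^{-1}(O)$ furnished by injectivity, and match the $\tau$-open sets of $V$ against the preimages $\imath^{-1}(O)$ that constitute the initial topology. Your splitting of the homeomorphism condition into the two one-sided comparisons (continuity of $\tilde\imath$ versus openness of $\tilde\imath$) is a slightly more systematic packaging of the same computation the paper performs.
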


\begin{proof} It suffices to prove the second claim. If $\tilde\imath$ is an isomorphism, then the topology $\cT(V)$ of $V$ is given by $$\cT(V)=\{\tilde\imath^{-1}(\imath(V)\cap U_W)=\imath^{-1}(\imath(V)\cap U_W)=\imath^{-1}(\imath(V))\cap \imath^{-1}(U_W)=\imath^{-1}(U_W):U_W\in\cT(W)\}\;,$$ hence, it is the initial topology of $\imath:V\to W$. Conversely, if $\cT(V)$ is the initial topology of $\imath$, the maps $\tilde\imath$ and $\tilde\imath^{-1}$ are continuous. Indeed, as just explained, we have $\tilde\imath^{-1}(\imath(V)\cap U_W)=\imath^{-1}(U_W)\in\cT(V)$, so that $\tilde\imath$ is continuous (but it would still be continuous for a finer topology on $V$). For $\tilde\imath^{-1}$ we have $\tilde\imath(\imath^{-1}(U_W))=\imath(V)\cap U_W\in\cT(\imath(V))$. \end{proof}

\begin{prop}\label{InjLinIniTop} Let $\imath:V\to W$ be an injective linear map from a vector space $V$ to a {\small TVS} $W$. When equipped with the initial topology of $\imath$, the linear space $V$ is a {\small TVS} and can be viewed as a {\small TVSS} of $W$.\end{prop}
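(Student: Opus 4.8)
The plan is to read off the two assertions from the single description of the initial topology: on $V$ it is the coarsest topology rendering $\imath$ continuous, so that its open sets are exactly the preimages $\imath^{-1}(U_W)$ with $U_W$ open in $W$. What must be checked is, first, that the addition and scalar multiplication of $V$ are continuous for this topology, so that $V$ is genuinely a {\small TVS}, and second, that $V$ is then a {\small TVSS} of $W$ via $\imath$. I would obtain the first point directly from the universal property of the initial topology and deduce the second from Proposition \ref{TVSS}, which becomes applicable the moment $V$ is known to be a {\small TVS}.

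To show that $V$ is a {\small TVS}, I would invoke the universal property of the initial topology: for any topological space $X$, a map $g:X\to V$ is continuous if and only if $\imath\circ g:X\to W$ is continuous. Applying this with $X=V\times V$ (product topology) and $g=a$ the addition of $V$, linearity of $\imath$ gives $\imath\circ a=(+_W)\circ(\imath\times\imath)$, where $+_W$ is the (continuous) addition of $W$. Since $\imath$ is continuous by construction, the map $\imath\times\imath:V\times V\to W\times W$ is continuous for the product topologies (each component $\imath\circ\mathrm{pr}_k$ is), so composing with $+_W$ shows $\imath\circ a$ is continuous; hence $a$ is continuous. The identical argument applied to scalar multiplication, using $\imath(\lambda\,v)=\lambda\,\imath(v)$ for $\lambda\in\K$, shows it is continuous as well. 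Thus $V$ with the initial topology is a {\small TVS}.

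At this stage $\imath:V\to W$ is an injective linear map between {\small TVS}-s whose source carries precisely the initial topology of $\imath$. Proposition \ref{TVSS} then applies verbatim: the corestriction $\tilde\imath:V\to\imath(V)$ is a {\small TVS}-isomorphism onto the linear subspace $\imath(V)$ endowed with the topology induced from $W$, and consequently $V\simeq\imath(V)$ is a {\small TVSS} of $W$, as claimed.

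I do not expect a genuine obstacle here; the only point requiring care is the bookkeeping with the universal property, namely that continuity of the operations of $V$ is tested only after composition with $\imath$, together with the observation that $\imath\times\imath$ is continuous for the product topologies. Equivalently, and avoiding the appeal to Proposition \ref{TVSS}, one could argue concretely that $\tilde\imath$ is a homeomorphism by noting $\tilde\imath^{-1}(\imath(V)\cap U_W)=\imath^{-1}(U_W)$, so that the inverse images of the induced-topology opens are exactly the members of the initial topology, and then transport the continuous operations of the {\small TVSS} $\imath(V)$ back to $V$ along this homeomorphism.
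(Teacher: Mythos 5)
Your proposal is correct and follows essentially the same route as the paper: the paper's computation $(+_V)^{-1}(\imath^{-1}(U_W))=(\imath\times\imath)^{-1}((+_W)^{-1}(U_W))$ is precisely the concrete unwinding of your universal-property argument via $\imath\circ a=(+_W)\circ(\imath\times\imath)$, and both proofs then conclude with Proposition \ref{TVSS}. No gaps.
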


\begin{proof} The linear operations on $V$ are continuous when $V$ has the initial topology. Indeed, denote by $+_V$ (resp., $+_W$) the addition in $V$ (resp., $W$), and let $\imath^{-1}(U_W)$, $U_W\in\cT(W)$, be an arbitrary open subset in $\cT(V)$. Then, $$(+_V)^{-1}(\imath^{-1}(U_W))=\{(v,v')\in V\times V:\imath(v)+_W\;\imath(v')\in U_W\}=$$ $$\{(v,v')\in V\times V:(\imath(v),\imath(v'))\in (+_W)^{-1}(U_W)\}=(\imath\times\imath)^{-1}((+_W)^{-1}(U_W))\;$$ is open in $V\times V$, since $\imath$ and $+_W$ are continuous. The case of the scalar multiplication is similar. The second claim follows now from Proposition \ref{TVSS}.\end{proof}

\begin{rem} When passing above from topological vector subspace structures on included subsets to topological vector subspace structures on injected subsets, we replaced the induced topology by the initial topology. Of course, if the injection is the inclusion, the initial topology with respect to it coincides with the induced topology.\end{rem}

\subsection{Completions of topological vector spaces}

We recall now well-known properties of the completion of a {\small TVS} \cite{Nagy}.\medskip

For any {\small TVS} $V$, there is a complete {\small TVS} $\widehat V$, and an injective linear map $\imath:V\to \widehat V$, such that the linear subspace $\imath(V)\subset\widehat V$ is dense in $\widehat V$. Moreover, when endowed with the induced topology, the image $\imath(V)\subset \widehat V$ becomes a {\small TVSS} of $\widehat V$, and the map $\tilde\imath:V\to\imath(V)$ is promoted to a linear homeomorphism, or, equivalently, to a {\small TVS}-isomorphism. It follows from Proposition \ref{TVSS} that the topology of $V$ is the initial topology of $\imath:V\to \widehat V$, and that $V\simeq\imath(V)$ is a {\small TVSS} of $\widehat V$. In short, the complete {\small TVS} $\widehat V$, which we refer to as the {\it completion} of the {\small TVS} $V$, contains $V$ as a dense {\small TVSS}:

\begin{prop}\label{Completion} The completion $\widehat V$ of $V$ contains $V$ as a dense subset, and it induces on $V$ the original topology and original linear structure.\end{prop}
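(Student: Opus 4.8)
The statement is essentially a summary of the narrative immediately preceding it, so my plan is to assemble the standard completion construction and then read off the three asserted properties, the last of which reduces to a direct appeal to Proposition \ref{TVSS}. First I would obtain $\widehat V$ from the general theory of completions of uniform spaces \cite{Nagy}. I would equip $V$ with its canonical translation-invariant uniformity (a base of entourages being the sets $\{(v,w):v-w\in U\}$ for $U$ a neighbourhood of $0$), and recall that every Hausdorff uniform space admits a Hausdorff completion $\widehat V$ together with a uniformly continuous map $\imath:V\to\widehat V$ which is a uniform isomorphism onto its image $\imath(V)$, the latter being dense in $\widehat V$ and carrying exactly the uniformity induced from $\widehat V$. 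Since $V$ is Hausdorff, $\imath$ is injective. (If one does not wish to assume $V$ Hausdorff, one first passes to the Hausdorff {\small TVS} $V/\overline{\{0\}}$.)

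Next I would transport the linear structure to $\widehat V$. Addition $+:V\times V\to V$ is uniformly continuous, and the completion of a finite product is the product of completions, so $\imath\circ +$ extends uniquely to a continuous map $\widehat +:\widehat V\times\widehat V\to\widehat V$; scalar multiplication $\K\times V\to V$ is treated the same way after noting that it is uniformly continuous on $B\times V$ for each bounded $B\subset\K$, which is enough to extend it continuously to a map $\K\times\widehat V\to\widehat V$. The vector-space axioms all hold on the dense subset $\imath(V)\times\imath(V)$ (resp. $\K\times\imath(V)$); since the two sides of each identity are continuous and $\imath(V)$ is dense, the axioms propagate to all of $\widehat V$. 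Hence $\widehat V$ is a {\small TVS}, it is complete by construction, and $\imath$ is linear because $\widehat +$ and the extended scalar multiplication restrict to $+$ and $\cdot$ on $\imath(V)$.

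It then remains to read off the asserted properties. Density of $\imath(V)$ and completeness of $\widehat V$ are part of the completion construction; injectivity of $\imath$ gives the inclusion $V\hookrightarrow\widehat V$ of sets; and the fact that $\imath$ is a uniform, hence topological, isomorphism onto $\imath(V)$ endowed with the induced topology says precisely that $\widehat V$ induces on $V$ its original topology, while linearity of $\imath$ says the original linear structure is induced as well. Finally, since $\tilde\imath:V\to\imath(V)$ is a {\small TVS}-isomorphism, Proposition \ref{TVSS} identifies the topology of $V$ with the initial topology of $\imath:V\to\widehat V$ and exhibits $V\simeq\imath(V)$ as a dense {\small TVSS} of $\widehat V$, which is exactly the content of the statement.

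The step I expect to be the main obstacle is the extension of scalar multiplication: unlike addition, it is not globally uniformly continuous, so the product-completion argument does not apply verbatim and one must restrict to bounded sets of scalars (or supply a $3\varepsilon$-type estimate exploiting joint continuity at $0$) in order to obtain a well-defined continuous extension to $\K\times\widehat V$. Everything else — injectivity, density, completeness, and the final topology claim — is either immediate from the uniform-completion construction or a direct consequence of Proposition \ref{TVSS}.
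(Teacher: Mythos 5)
Your proposal is correct and follows the same route as the paper: the paper gives no separate proof, since the proposition merely summarizes the preceding paragraph, which recalls from \cite{Nagy} the existence of a complete $\widehat V$ with an injective linear $\imath:V\to\widehat V$ having dense image such that $\tilde\imath:V\to\imath(V)$ is a {\small TVS}-isomorphism, and then invokes Proposition \ref{TVSS}. You additionally reconstruct the uniform-space completion and the extension of the linear operations (correctly flagging the non-uniform continuity of scalar multiplication), which the paper simply outsources to the reference.
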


\begin{rem}\label{CompletionComplete} If $V$ is already a complete {\small TVS}, then $\imath:V\to\widehat V$ is a {\small TVS}-isomorphism.\end{rem}

Further, let $S\subset V$ be a {\small TVSS} of $V$ and let $\imath:S\to V$ be the injective continuous linear inclusion. The continuous extension of $\imath$, which we denote by $\hat \imath:\widehat S\to \widehat V$, is an injective continuous linear map. If $\hat\imath(\widehat S)$ carries the induced topology, the map $\tilde{\hat{\imath}}:\widehat S\to \hat\imath(\widehat S)$ is a {\small TVS}-isomorphism. In view of Proposition \ref{TVSS} this means that the topology of $\widehat S$ is the initial topology of $\hat\imath:\widehat S\to\widehat V$, and that $\widehat S\simeq \hat \imath(\widehat S)$ is a {\small TVSS} of $\widehat V$. In short:

\begin{prop}\label{CompletionTVSS} The completion of a {\small TVSS} $S\subset V$ is a {\small TVSS} $\widehat S\subset\widehat V$ of the completion.\end{prop}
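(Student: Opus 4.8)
The plan is to realize the completion $\widehat S$ concretely as the closure of $S$ inside $\widehat V$, which makes the desired embedding transparent and sidesteps any delicate injectivity argument for an abstract extension. First I would record that, by Proposition \ref{Completion}, the space $V$ is (identified with) a dense {\small TVSS} $\imath(V)\subset\widehat V$ carrying the induced topology. Since induced topologies are transitive, the subspace $S\subset V\subset\widehat V$ therefore also carries the topology induced by $\widehat V$; that is, $S$ is itself a {\small TVSS} of $\widehat V$, and this subspace topology coincides with its original topology as a {\small TVSS} of $V$.

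Next I would set $\overline S$ to be the closure of $S$ in $\widehat V$ and check that it is a completion of $S$. Being the closure of a linear subspace, $\overline S$ is a linear subspace of $\widehat V$; being closed in the complete space $\widehat V$, it is itself complete. By construction $S$ is dense in $\overline S$, and by the previous paragraph $S$ carries in $\overline S$ exactly its original topology. Thus $\overline S$ is a complete {\small TVS} containing $S$ as a dense {\small TVSS}, so by the uniqueness of completions there is a {\small TVS}-isomorphism $\widehat S\simeq\overline S$ restricting to the identity on $S$.

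Finally I would conclude. A closed linear subspace of $\widehat V$ equipped with the induced topology is, by definition, a {\small TVSS} of $\widehat V$; hence $\overline S$ is a {\small TVSS} of $\widehat V$. Transporting along $\widehat S\simeq\overline S$ exhibits $\widehat S$ as a {\small TVSS} $\widehat S\simeq\overline S=\hat\imath(\widehat S)\subset\widehat V$, which is the assertion. This also matches the criterion of Proposition \ref{TVSS}: the continuous extension $\hat\imath:\widehat S\to\widehat V$ of the inclusion has image $\overline S$, is injective, and $\widehat S$ carries the initial topology of $\hat\imath$.

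The main obstacle is precisely the injectivity of the extended inclusion $\hat\imath:\widehat S\to\widehat V$ --- equivalently, that $\widehat S$ genuinely embeds in $\widehat V$ rather than partially collapsing. This is where the {\small TVSS} hypothesis (the \emph{induced}, not merely some finer, topology on $S$) is indispensable, since the completion of an arbitrary injective continuous linear map need not be injective. Identifying $\widehat S$ with the closure $\overline S\subset\widehat V$ dissolves this obstacle, as the inclusion $\overline S\hookrightarrow\widehat V$ is injective by construction; the remaining verifications (completeness of $\overline S$ and transitivity of the induced topology) are routine.
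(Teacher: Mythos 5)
Your proposal is correct. The paper does not really prove this proposition: the whole subsection is announced as a recollection of well-known facts from the cited notes on completions, and the text immediately preceding the statement simply asserts that the continuous extension $\hat\imath:\widehat S\to\widehat V$ of the inclusion is injective and that $\tilde{\hat\imath}:\widehat S\to\hat\imath(\widehat S)$ is a {\small TVS}-isomorphism onto its image with the induced topology, then invokes Proposition \ref{TVSS}. Your route is genuinely different in emphasis: instead of starting from the abstract completion $\widehat S$ and arguing that $\hat\imath$ embeds it, you build a concrete completion of $S$ inside $\widehat V$, namely the closure $\overline S$, verify the three defining properties (linear subspace, complete because closed in a complete space, contains $S$ densely with its original topology by transitivity of induced topologies), and then transport along the uniqueness of completions. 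What this buys is exactly what you identify as the crux: the injectivity of $\hat\imath$, which the paper leaves unproved and which genuinely requires the {\small TVSS} hypothesis, becomes automatic because the inclusion $\overline S\hookrightarrow\widehat V$ is injective by construction and agrees with $\hat\imath$ by uniqueness of continuous extensions off the dense subspace $S$. The paper's formulation buys brevity and keeps the map $\hat\imath$ explicit for later use (it is the map $\hat\imath$ of Theorem \ref{FundaIsom}), but your argument supplies the missing justification; the two are reconciled by the identity $\hat\imath(\widehat S)=\overline S$, which you note at the end. The only standing hypothesis worth making explicit is Hausdorffness of the spaces involved (needed for uniqueness of continuous extensions and of completions), which is harmless here since every space in the paper is locally convex Hausdorff, indeed nuclear Fr\'echet.
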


\subsection{Locally convex spaces}

\begin{prop}\label{IniTopLC} The initial topology of a linear map $\ell:V\to W$ from a vector space $V$ to a {\small LCTVS} $W$ endows $V$ with a {\small LCTVS} structure. In particular, the induced topology on a vector subspace $S\subset W$ of a {\small LCTVS} endows $S$ with a {\small LCTVS} structure.\end{prop}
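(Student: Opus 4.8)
The plan is to produce an explicit generating family of seminorms for the initial topology, since a topology defined by a family of seminorms on a vector space is automatically a locally convex vector-space topology. First I would use that $W$, being an {\small LCTVS}, carries a topology generated by some family of seminorms $(p_i)_{i\in I}$, a base of neighborhoods of $0$ being the finite intersections of the sets $\{w\in W: p_i(w)<\zve\}$, $\zve>0$. The candidate generating family on $V$ is then $q_i:=p_i\circ\ell$.

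Next I would check that each $q_i$ is a seminorm on $V$; this is immediate from the linearity of $\ell$ together with the seminorm axioms for $p_i$, since for $v,v'\in V$ and $\zl\in\R$,
$$q_i(v+v')=p_i(\ell(v)+\ell(v'))\le p_i(\ell(v))+p_i(\ell(v'))=q_i(v)+q_i(v')\;,$$
while $q_i(\zl v)=p_i(\zl\,\ell(v))=|\zl|\,p_i(\ell(v))=|\zl|\,q_i(v)$. I would then argue that the seminorm topology generated by $(q_i)_{i\in I}$ coincides with the initial topology of $\ell$, because the two topologies have the same subbase of neighborhoods of $0$: indeed $\{v: q_i(v)<\zve\}=\ell^{-1}(\{w: p_i(w)<\zve\})$, and the sets on the right are precisely the preimages of a subbase of neighborhoods of $0$ in $W$, which generate the initial topology. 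As any topology given by a family of seminorms is a locally convex vector topology, this shows at once that $V$ is a {\small TVS} and that it is locally convex, i.e., an {\small LCTVS}.

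For the final assertion, I would note that the induced topology on a linear subspace $S\subset W$ is exactly the initial topology of the inclusion $\imath:S\hookrightarrow W$, so the first part applies verbatim with $\ell=\imath$. There is no real obstacle: the entire argument rests on the two elementary facts that a seminorm pulls back to a seminorm along a linear map and that a seminorm-defined topology is by construction locally convex. The only point worth flagging is that, in contrast with Proposition \ref{InjLinIniTop}, no injectivity of $\ell$ is required, since the computation above never uses it; equivalently, in the convex-set formulation the key observation is simply that the preimage $\ell^{-1}(U)$ of a convex (balanced, absorbing) neighborhood $U$ of $0$ in $W$ is again convex (balanced, absorbing), so that the initial topology admits a base of convex neighborhoods of the origin.
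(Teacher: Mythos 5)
Your proof is correct, but it takes a slightly different route from the paper. The paper works directly with a convex basis $\frak B$ of the topology of $W$: since preimages of convex sets under a linear map are convex, $\ell^{-1}(\frak B)$ is a convex basis of the initial topology, and the fact that this topology makes the linear operations on $V$ continuous is imported from the proof of Proposition \ref{InjLinIniTop}. You instead pull back a generating family of seminorms, $q_i=p_i\circ\ell$, verify that these are seminorms, and identify the resulting seminorm topology with the initial topology; local convexity \emph{and} the continuity of the vector operations then come for free from the general fact that a seminorm-defined topology is a locally convex vector topology, so you do not need to invoke Proposition \ref{InjLinIniTop} at all. The two arguments are of course the two standard equivalent descriptions of local convexity, and you even record the convex-neighborhood version (which is the paper's) in your closing remark. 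One small point to make fully explicit in your identification of the two topologies: the initial topology is generated by preimages of \emph{all} open sets of $W$, not only of neighborhoods of $0$, so matching subbases at the origin suffices only because both topologies are translation-invariant — which holds for the initial topology of a linear map into a {\small TVS} and is immediate for a seminorm topology. Your observation that injectivity of $\ell$ plays no role is accurate and consistent with the statement of the proposition.
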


\begin{proof} The topology of $W$ has a convex basis $\frak B$. As the preimage of a convex set by a linear map is convex, the family $\ell^{-1}(\frak B)$ is a convex basis of the initial topology. In view of the proof of Proposition \ref{InjLinIniTop}, the initial topology endows $V$ with a {\small LCTVS} structure. The second claim is a special case of the first one.\end{proof}

We close this subsection recalling two results.

\begin{theo}\label{ProjTopSN}
Let $V$ and $W$ be two {\small LCTVS}-s and let $(p_i)_{i\in I}$ and $(q_j)_{j\in J}$ be two families of seminorms that induce the topologies of $V$ and $W$, respectively. The projective tensor topology $\zp$ on $V\0 W$ is induced by $(p_i\0 q_j)_{ij}$. Moreover, for any $t\in V\0 W$, we have $$(p_i\0 q_j)(t)=\inf\{\sum_{k=1}^Np_i(v_k)q_j(w_k):t=\sum_{k=1}^Nv_k\0 w_k, v_k\in V, w_k\in W, N\in\N\}\;,$$ and, for any $v\in V$ and $w\in W$, we have $$(p_i\0 q_j)(v\0 w)=p_i(v)q_j(w)\;.$$
\end{theo}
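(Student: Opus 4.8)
The plan is to handle the three assertions of the statement in turn: that the displayed infimum formula defines a seminorm $p_i\otimes q_j$ on $V\otimes W$, that it satisfies the cross-norm identity $(p_i\otimes q_j)(v\otimes w)=p_i(v)q_j(w)$, and that the family $(p_i\otimes q_j)_{ij}$ generates precisely the projective topology $\pi$. The first of these is routine and I would dispatch it quickly. Finiteness is clear, since every $t\in V\otimes W$ admits a finite representation $t=\sum_k v_k\otimes w_k$. Absolute homogeneity $(p_i\otimes q_j)(\lambda t)=|\lambda|(p_i\otimes q_j)(t)$ follows by pushing the scalar into the first factor of a representation and applying it to both $\lambda t$ and $\lambda^{-1}(\lambda t)$. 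Subadditivity follows by concatenation: given near-optimal representations of $s$ and of $t$, their union represents $s+t$ and the associated sums add, whence $(p_i\otimes q_j)(s+t)\le(p_i\otimes q_j)(s)+(p_i\otimes q_j)(t)$.

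For the cross-norm identity, the inequality $(p_i\otimes q_j)(v\otimes w)\le p_i(v)q_j(w)$ is immediate from the one-term representation $t=v\otimes w$. For the reverse inequality I would invoke Hahn--Banach. On the line $\R v$ define $\varphi_0(\lambda v)=\lambda\,p_i(v)$; this is dominated by $p_i$, so it extends to $\varphi\in V^{*}$ with $|\varphi|\le p_i$ and $\varphi(v)=p_i(v)$, and similarly one builds $\psi\in W^{*}$ with $|\psi|\le q_j$ and $\psi(w)=q_j(w)$. Then for every representation $t=\sum_k v_k\otimes w_k$ of $v\otimes w$ one has $$p_i(v)q_j(w)=\varphi(v)\psi(w)=\sum_k\varphi(v_k)\psi(w_k)\le\sum_k p_i(v_k)q_j(w_k),$$ and passing to the infimum over representations yields $p_i(v)q_j(w)\le(p_i\otimes q_j)(v\otimes w)$.

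The topological assertion is the part I would treat most carefully. Writing $U_i=\{p_i\le 1\}\subset V$ and $O_j=\{q_j\le 1\}\subset W$, I would first check that $p_i\otimes q_j$ is exactly the Minkowski gauge of the convex balanced hull $\Gamma(U_i\otimes O_j)$: normalising each term $v_k\otimes w_k$ of a representation of $t$ exhibits $t/(p_i\otimes q_j)(t)$ as an element of $\Gamma(U_i\otimes O_j)$, and conversely. Recalling that the projective topology admits as a base of $0$-neighbourhoods the hulls $\Gamma(U\otimes O)$ with $U,O$ running over $0$-neighbourhoods of $V$ and $W$, this simultaneously shows that each $p_i\otimes q_j$ is $\pi$-continuous and that the $\pi$-basic seminorms are the $p\otimes q$ with $p,q$ continuous. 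To descend to the chosen generating families one uses that every continuous seminorm satisfies $p\le C\max_{i\in F}p_i$ and $q\le C'\max_{j\in G}q_j$ for finite $F,G$, together with the evident monotonicity $p\le p',\,q\le q'\Rightarrow p\otimes q\le p'\otimes q'$, giving $p\otimes q\le CC'\,(\max_{i\in F}p_i)\otimes(\max_{j\in G}q_j)$.

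The main obstacle is the remaining comparison $(\max_{i\in F}p_i)\otimes(\max_{j\in G}q_j)\le C''\max_{i\in F,\,j\in G}(p_i\otimes q_j)$, i.e. that forming finite maxima in the two factors is compatible with the tensor seminorm up to a constant; a $2\times2$ example with the coordinate seminorms on $\ell^\infty$ already shows that \emph{equality} fails (the tensor of the two maxima is the projective norm, the maximum of the tensors is the largest coordinate, and these disagree on a Hadamard matrix), so a genuine constant is unavoidable. I would prove it by duality, using $(p\otimes q)(t)=\sup\{|B(t)|:\ |B|\le 1\ \text{on}\ U_p\times U_q\}$ over bilinear forms $B$ and the fact that a form bounded on $(\bigcap_{i\in F}U_i)\times(\bigcap_{j\in G}O_j)$ decomposes as $B=\sum_{i,j}B_{ij}$ with $B_{ij}$ bounded on $U_i\times O_j$ (the polar of an intersection of balls being the convex hull of the polars), whence $|B(t)|\le\sum_{ij}|B_{ij}(t)|\le |F||G|\,\max_{ij}(p_i\otimes q_j)(t)$. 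Alternatively, since in the application the families are directed (finite-max-closed), one may state the comparison at the level of saturated families, for which $(\max_{i\in F}p_i)\otimes(\max_{j\in G}q_j)$ is itself one of the generators, and cite the standard references for the general statement.
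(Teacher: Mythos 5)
The paper does not prove this statement: it appears under ``we close this subsection recalling two results'' and is used as a known fact (in the directed form, for the standard seminorms on $\Ci$-type spaces), so there is nothing to compare your argument against line by line; I can only assess it on its merits. Your treatment of the first two assertions is the standard and correct one: the seminorm axioms by concatenation of representations, and the lower bound in the cross-norm identity via Hahn--Banach functionals $\varphi,\psi$ with $|\varphi|\le p_i$, $\varphi(v)=p_i(v)$, applied through $\varphi\otimes\psi$. Your identification of $p_i\otimes q_j$ with the gauge of $\Gamma(U_i\otimes O_j)$ and of the $\pi$-neighbourhood base with the sets $\Gamma(U\otimes O)$ is also the standard route, and you correctly isolate the only delicate point, namely the comparison $\bigl(\max_{i\in F}p_i\bigr)\otimes\bigl(\max_{j\in G}q_j\bigr)\le C\max_{i,j}(p_i\otimes q_j)$; your Hadamard example showing that equality fails is correct.

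The gap is in your proposed proof of that comparison. The parenthetical justification --- ``the polar of an intersection of balls is the convex hull of the polars'' --- is a statement about \emph{linear} functionals. Applied to $B(\cdot,w)$ for fixed $w$ it produces a decomposition whose coefficients and components depend on $w$, and nothing guarantees this dependence can be made linear; so it does not yield bilinear forms $B_{ij}$ with $|B_{ij}(v,w)|\le c_{ij}\,p_i(v)q_j(w)$. Worse, by the bipolar theorem the existence of such a decomposition with $\sum_{ij}c_{ij}\le C$ is \emph{equivalent} to the seminorm inequality you are trying to prove, so invoking it without proof is circular. In operator language (write $B$ as $\beta:W\to V^{*}$ and let $Z_i$ be the span of $U_i^{\circ}$ with its gauge norm) the claim is that every $\beta\in \op{Hom}(W,Z_1+\cdots+Z_{|F|})$ of norm $\le 1$ splits as $\sum_i\beta_i$ with $\beta_i\in\op{Hom}(W,Z_i)$ and controlled norms; this is true when the $U_i^\circ$ span complementary subspaces (your coordinate example), when the seminorms are Hilbertian, and when $V$ is finite dimensional (where $\max_{ij}(p_i\otimes q_j)$ and $P\otimes Q$ have the same kernel and are therefore equivalent), but for a general Banach couple $(Z_1,Z_2)$ the identity $\op{Hom}(W,Z_1+Z_2)=\op{Hom}(W,Z_1)+\op{Hom}(W,Z_2)$ is a genuinely nontrivial assertion and is not available off the shelf. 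As stated, this step is therefore not proved.

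Your fallback --- pass to directed (finite-max-saturated) families, for which $\bigl(\max_Fp_i\bigr)\otimes\bigl(\max_Gq_j\bigr)$ is dominated by a single generator --- is the correct and standard fix, and it is exactly what the references behind the theorem (Grothendieck, Tr\`eves) actually prove; note however that it proves a slightly different statement from the one in the theorem, namely that the \emph{saturations} of the families tensor to a generating family. Since the theorem is only applied in this paper to the directed seminorm families $p_{C,D}$ on the function spaces, that weaker statement is all that is ever used; but if you want the theorem verbatim for arbitrary generating families, the finite-max comparison must either be proved honestly or the hypothesis of directedness must be added.
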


\begin{prop}\label{CoSN}
Let $V$ and $W$ be two {\small LCTVS}-s and let $(p_i)_{i\in I}$ and $(q_j)_{j\in J}$ be two families of seminorms that induce the topologies of $V$ and $W$, respectively. A linear map $\ell:V\to W$ is continuous if and only if, for any $j\in J$, there exist $i_1,\ldots, i_N\in I$ and $C>0$, such that, for all $v\in V$, one has $$q_j(\ell(v))\le C\max_{k}p_{i_k}(v)\;.$$
\end{prop}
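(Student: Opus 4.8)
The plan is to reduce everything to continuity at the origin and then translate freely between neighborhood bases and seminorm inequalities, using the defining property of a {\small LCTVS}: a base of neighborhoods of $0$ in $V$ (resp. $W$) is given by the finite intersections $\bigcap_{k=1}^N\{v:p_{i_k}(v)<\zd\}$ (resp. by the analogous sets built from the $q_j$). Since $\ell$ is linear, it is continuous if and only if it is continuous at $0$, and this is the only general fact about linear maps on {\small TVS}-s that I would invoke.

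For the sufficiency direction, I would assume the seminorm estimate and show that the preimage of every basic neighborhood of $0$ in $W$ is a neighborhood of $0$ in $V$. It suffices to treat a single ball $\{w:q_j(w)<\zve\}$, a finite intersection being handled componentwise. Choosing the indices $i_1,\dots,i_N$ and the constant $C>0$ supplied by the hypothesis for this $j$, the inequality $q_j(\ell(v))\le C\max_k p_{i_k}(v)$ shows that the basic neighborhood $\bigcap_{k=1}^N\{v:p_{i_k}(v)<\zve/C\}$ is contained in $\ell^{-1}(\{w:q_j(w)<\zve\})$. Hence the preimage is a neighborhood of $0$ and $\ell$ is continuous.

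For the necessity direction, I would fix $j\in J$ and consider the neighborhood $\{w:q_j(w)<1\}$ of $0$ in $W$. By continuity of $\ell$ at $0$, its preimage is a neighborhood of $0$ in $V$, hence contains a basic neighborhood $\bigcap_{k=1}^N\{v:p_{i_k}(v)<\zd\}$ for some finite set of indices $i_1,\dots,i_N$ and some $\zd>0$; set $p:=\max_k p_{i_k}$. The crux is then a homogeneity argument: for $v\in V$ and $\zve>0$, the vector $v':=\tfrac{\zd}{p(v)+\zve}\,v$ satisfies $p(v')<\zd$, so $q_j(\ell(v'))<1$, and by linearity $q_j(\ell(v))<\tfrac{1}{\zd}\,(p(v)+\zve)$. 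Letting $\zve\to 0$ gives $q_j(\ell(v))\le\tfrac{1}{\zd}\,p(v)=\tfrac{1}{\zd}\max_k p_{i_k}(v)$, which is the asserted estimate with $C=1/\zd$.

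The argument is essentially routine; the only steps demanding care lie in the necessity direction. One must remember that continuity delivers a neighborhood involving only \emph{finitely many} of the $p_i$, which is exactly what produces the finite list $i_1,\dots,i_N$ in the statement. And the scaling step must cover the degenerate case $p(v)=0$ uniformly — the $+\zve$ regularization does this automatically, since then the bound forces $q_j(\ell(v))=0$ — so no separate case analysis is needed.
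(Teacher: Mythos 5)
Your proof is correct and complete: the paper merely \emph{recalls} Proposition \ref{CoSN} as a standard fact without supplying a proof, so there is no argument to compare against. Your route --- reduction to continuity at the origin, translation between basic seminorm balls and neighborhoods, and the rescaling step with the $+\zve$ regularization that uniformly covers the degenerate case $p(v)=0$ --- is the standard textbook derivation and fills the gap the paper leaves implicit.
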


\subsection{Presheaves of topological algebras and sheafification}\label{SheafiPShTA}

In the following, we need two lemmas. \medskip

Unless otherwise stated, all cartesian products $\prod_\za T_\za$ of topological spaces $T_\za$ are endowed with the product topology, i.e., the weakest topology for which all projections $p_\zb:\prod_\za T_\za\to T_\zb$ are continuous.

\begin{lem}\label{PreRes1}
Let $(T^i_\za)_\za$ $(\,$$i\in\{1,2,3\}$$\,)$ be families of topological spaces, and let $(m_\za)_\za$ be a family $m_\za:T^1_\za\times T^2_\za\to T^3_\za$ of continuous maps. Then the map $$m:\prod_\za T^1_\za\;\times\;\prod_\za T^2_\za\ni ((\zt_\za)_\za,({\frak t}_\za)_\za)\mapsto (m_{\za}(\zt_\za,{\frak t}_\za))_\za\in\prod_\za T^3_\za$$ is continuous.\end{lem}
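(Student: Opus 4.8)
The plan is to exploit the universal property of the product topology on the target space $\prod_\za T^3_\za$. By definition, this topology is the initial topology with respect to the projections $p^3_\zb:\prod_\za T^3_\za\to T^3_\zb$, so a map into it is continuous if and only if its composition with each projection is continuous. It therefore suffices to show that $p^3_\zb\circ m$ is continuous for every index $\zb$.

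First I would write the composite explicitly. For any $\zb$ and any point $((\zt_\za)_\za,({\frak t}_\za)_\za)$ of the domain, one has
$$(p^3_\zb\circ m)\big((\zt_\za)_\za,({\frak t}_\za)_\za\big)=m_\zb(\zt_\zb,{\frak t}_\zb)\;.$$
The right-hand side depends only on the $\zb$-th components, so the composite factors as
$$p^3_\zb\circ m=m_\zb\circ(p^1_\zb\times p^2_\zb)\;,$$
where $p^1_\zb:\prod_\za T^1_\za\to T^1_\zb$ and $p^2_\zb:\prod_\za T^2_\za\to T^2_\zb$ are the canonical projections of the two product factors constituting the domain.

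Next I would note that each projection $p^1_\zb$ and $p^2_\zb$ is continuous by the very definition of the product topology, whence the pairing $((\zt_\za)_\za,({\frak t}_\za)_\za)\mapsto(\zt_\zb,{\frak t}_\zb)\in T^1_\zb\times T^2_\zb$ is continuous (a map into a binary product is continuous exactly when both of its components are). Composing with the continuous map $m_\zb$ shows that $p^3_\zb\circ m$ is continuous. Since $\zb$ was arbitrary, $m$ is continuous, as claimed.

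There is no genuine obstacle here; the statement is a routine consequence of the universal property of products, and the verification is purely formal. The only subtlety worth flagging is the distinction between the domain $\prod_\za T^1_\za\times\prod_\za T^2_\za$ and the rearranged product $\prod_\za(T^1_\za\times T^2_\za)$: one must first project the two \emph{separate} product factors onto their $\zb$-th coordinates and only afterwards pair the results, rather than projecting a single product of products. Keeping the two projections $p^1_\zb$ and $p^2_\zb$ distinct in the factorization above is what makes this explicit.
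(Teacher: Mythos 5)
Your proof is correct and follows exactly the same route as the paper's: both reduce continuity of $m$ to continuity of each $p^3_\zb\circ m$ via the universal property of the product topology, and then identify this composite as $m_\zb\circ(p^1_\zb\times p^2_\zb)$. Your additional remark distinguishing the domain $\prod_\za T^1_\za\times\prod_\za T^2_\za$ from $\prod_\za(T^1_\za\times T^2_\za)$ is a helpful clarification but does not change the argument.
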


\begin{proof}
Since the cartesian product $\prod_\za T_\za$ of topological spaces $T_\za$ equipped with the product topology, is the product in the category of topological spaces, it follows from the universal property that a map $f:T\to\prod_\za T_\za$ from a topological space to a product space is continuous if and only if all the $p_\zb\circ f:T\to T_\zb$ are continuous. However, the map $p_\zb^3\circ m:((\zt_\za)_\za,({\frak t}_\za)_\za)\mapsto m_{\zb}(\zt_\zb,{\frak t}_\zb)$ is the composite $m_\zb\circ(p^1_\zb\times p^2_\zb)$, which is of course continuous.
\end{proof}

\begin{lem}\label{PreRes2} Let $V$ be a vector space, let $(V_\za)_\za$ be a family of {\small LCTVS}, let $\ell_\za: V_\za\to V$ be a family of linear maps, and let $V$ be equipped with the finest locally convex vector space topology, for which all the $\ell_\za$ are continuous $(\,$$V$ is then a {\small LCTVS}$\,)$. If $W$ is a {\small LCTVS}, a linear map $\ell:V\to W$ is continuous if and only if all the maps $\ell\circ\ell_\za:V_\za\to W$ are continuous.\end{lem}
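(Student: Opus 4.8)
The plan is to prove the two implications separately, the forward one being immediate and the converse resting on an explicit description of the finest locally convex topology carried by $V$. For the forward direction, if $\ell:V\to W$ is continuous, then each composite $\ell\circ\ell_\za$ is continuous, since $\ell_\za:V_\za\to V$ is continuous by the very construction of the topology on $V$ and a composite of continuous maps is continuous. This direction requires no further work.

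For the converse, I would first recall the standard description of the finest locally convex topology $\tau$ on $V$ for which all the $\ell_\za$ are continuous. In the seminorm language used elsewhere in this appendix (see Theorem \ref{ProjTopSN} and Proposition \ref{CoSN}), the continuous seminorms on $(V,\tau)$ are exactly those seminorms $p$ on $V$ such that $p\circ\ell_\za$ is a continuous seminorm on $V_\za$ for every $\za$. Equivalently, a basis of neighborhoods of $0$ is given by the absolutely convex, absorbing subsets $U\subseteq V$ for which $\ell_\za^{-1}(U)$ is a $0$-neighborhood in $V_\za$ for every $\za$.

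Granting this description, suppose that all the maps $\ell\circ\ell_\za:V_\za\to W$ are continuous. To show $\ell$ is continuous, let $q$ be an arbitrary continuous seminorm on $W$; then $q\circ\ell$ is a seminorm on $V$, and for each $\za$ one has the identity $(q\circ\ell)\circ\ell_\za=q\circ(\ell\circ\ell_\za)$, which is continuous on $V_\za$ as the composite of the continuous map $\ell\circ\ell_\za$ with the continuous seminorm $q$. By the characterization above, $q\circ\ell$ is therefore a continuous seminorm on $V$. Since $q$ was an arbitrary continuous seminorm on $W$, the linear map $\ell$ is continuous. In the neighborhood formulation the same argument reads: for an absolutely convex $0$-neighborhood $N$ in $W$, the preimage $\ell^{-1}(N)$ is absolutely convex and absorbing, and $\ell_\za^{-1}(\ell^{-1}(N))=(\ell\circ\ell_\za)^{-1}(N)$ is a $0$-neighborhood in each $V_\za$, so $\ell^{-1}(N)$ is a $0$-neighborhood in $V$.

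The only genuine obstacle is the description of $\tau$ invoked at the start of the converse, namely that the seminorms $p$ (resp.\ the sets $U$) specified above do define a locally convex topology and that it is truly the finest one rendering all the $\ell_\za$ continuous. This is a standard fact about final (inductive) locally convex topologies, so I would either cite it or include the short verification that each $\ell_\za$ is continuous for this topology by construction, while any locally convex topology making all $\ell_\za$ continuous admits only seminorms of the above type as continuous seminorms and is hence coarser.
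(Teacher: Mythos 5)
Your proof is correct, but it takes a different route from the paper, which does not prove the lemma at all: the paper's entire proof is a citation of Proposition 2.3.5 of Bogachev--Smolyanov (reference [BS17]). What you have done is supply the standard self-contained argument that the citation points to. The forward implication is, as you say, immediate. For the converse, your key ingredient is the characterization of the final (inductive) locally convex topology $\tau$ on $V$: a seminorm $p$ on $V$ is $\tau$-continuous if and only if every $p\circ\ell_\za$ is continuous on $V_\za$. Your verification of this characterization at the end is the right one --- the family of all such seminorms generates a locally convex topology for which each $\ell_\za$ is continuous, and any competing locally convex topology making the $\ell_\za$ continuous has all of its continuous seminorms in this family, hence is coarser. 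Given that, the chain $(q\circ\ell)\circ\ell_\za=q\circ(\ell\circ\ell_\za)$ together with the seminorm criterion for continuity of a linear map between locally convex spaces (which is essentially Proposition \ref{CoSN} of the paper) closes the argument. The trade-off is the usual one: the paper keeps its appendix short by outsourcing a standard fact, while your version makes the appendix self-contained and makes visible exactly which property of the final topology is being used --- which is arguably preferable here, since the same final-topology description underlies Corollary \ref{IndLimLCTA} and the continuity arguments in the proof of Proposition \ref{LCTAShfifi}. One small point of care: when you pass to the neighborhood formulation, you should note that $\ell^{-1}(N)$ is absorbing because $\ell$ is linear and $N$ is absorbing; with that remark both of your formulations are complete.
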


\begin{proof}
See \cite[Proposition 2.3.5]{IndLim}
\end{proof}

In the present text:

\begin{defi}
A \emph{topological algebra} $(\,${\small TA} for short$\,)$ $(\,$resp., a \emph{locally convex topological algebra} $(\,${\small LCTA} for short$\,)$$\,)$ is a $(\,$real$\,)$ topological vector space $(\,$resp., a locally convex $(\,$real$\,)$ topological vector space$\,)$, with an associative, bilinear, and $(\,$jointly$\,)$ continuous multiplication. A \emph{morphism of topological algebras} $(\,$resp., a \emph{morphism of locally convex topological algebras}$\,)$ is a continuous algebra morphism. We denote the category of topological algebras $(\,$resp., of locally convex topological algebras$\,)$ and morphisms between them by $\tt TAlg$ $(\,$resp., by $\tt LCTAlg$$\,)$.
\end{defi}

The category of Fr\'echet algebras is a full subcategory of the category of locally convex topological algebras, itself a full subcategory of the category of topological algebras.\medskip

We denote by ${\tt PSh}(T,{\tt TAlg})$ (resp., ${\tt Sh}(T,{\tt TAlg})$) the category of presheaves (resp., sheaves) of {\small TA}-s over a topological space $T$. Similarly, the category ${\tt PSh}(T,{\tt LCTAlg})$ (resp., ${\tt Sh}(T,$ ${\tt LCTAlg})$) is the category of presheaves (resp., sheaves) of {\small LCTA}-s over $T$, and the category ${\tt PSh}(T,{\tt Alg})$ (resp., ${\tt Sh}(T,{\tt Alg})$) is the category of presheaves (resp., sheaves) of algebras over $T$.

\begin{prop}\label{LCTAShfifi}
Denote by $+$ the sheafification functor $$+:{\tt PSh}(T,{\tt Alg})\to{\tt Sh}(T,{\tt Alg}):\op{For}\;,$$ i.e., the left adjoint of the forgetful functor $\op{For}$. If $\cF\in{\tt PSh}(T,{\tt TAlg})$, we have $\cF^+\in{\tt Sh}(T,{\tt TAlg})$, and the morphism $i:\cF\to\cF^+$ of presheaves of algebras is a morphism of presheaves of {\small TA}-s. Moreover, if $\zvf:\cF\to\cF'$ is a morphism in ${\tt PSh}(T,{\tt TAlg})$, then $\zvf^+:\cF^+$ $\to\cF'^+$ is a morphism in ${\tt Sh}(T,{\tt TAlg})$. The same results hold, if we replace {\small TA}-s by {\small LCTA}-s.
\end{prop}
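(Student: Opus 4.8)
The plan is to topologise the sheafification through its stalks and to reduce every continuity assertion to Lemmas \ref{PreRes1} and \ref{PreRes2}; I treat the {\small LCTA} case, the {\small TA} case being identical. For $x\in T$ write $\cF_x=\dlim_{U\ni x}\cF(U)$ for the stalk, the inductive limit being over the neighbourhoods of $x$ directed by reverse inclusion with the restrictions as bonding maps. First I would equip $\cF_x$ with the finest locally convex topology making all canonical maps $\cF(U)\to\cF_x$ continuous; by Lemma \ref{PreRes2} this is an {\small LCTVS}. Then I would realise $\cF^+(U)$, for $U$ open, in the usual way as the subset of $\prod_{x\in U}\cF_x$ consisting of the families of germs that are locally induced by a section of $\cF$, carrying the topology induced by the product topology; by Proposition \ref{IniTopLC} this is again an {\small LCTVS}. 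The restrictions of $\cF^+$ are the evident coordinate projections, hence continuous for the product topologies, and the canonical map $i_U:\cF(U)\ni s\mapsto([s]_x)_{x\in U}\in\cF^+(U)$ is continuous because each of its components $s\mapsto[s]_x$ is a canonical map into a stalk and the target carries the product topology; since $i_U$ is moreover an algebra morphism, $i$ is a morphism of presheaves of algebras which is continuous.

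The one substantive point, and the step I expect to be the main obstacle, is that the componentwise multiplication is \emph{jointly} continuous. By Lemma \ref{PreRes1} it suffices to show that the multiplication $m_x:\cF_x\times\cF_x\to\cF_x$ on each stalk is jointly continuous; granting this, $\prod_{x\in U}\cF_x$ becomes an {\small LCTA}, its subalgebra $\cF^+(U)$ an {\small LCTA} for the induced topology, and $\cF^+$ a sheaf of {\small LCTA}-s. To prove joint continuity of $m_x$ I would exploit that in the category of locally convex spaces the finite product $\cF(U)\times\cF(U)$ is simultaneously the finite coproduct, so that, colimits commuting with colimits, the canonical continuous linear bijection $j:\dlim_{U\ni x}\big(\cF(U)\times\cF(U)\big)\to\cF_x\times\cF_x$ is in fact a homeomorphism. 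Since $j$ is a homeomorphism, $m_x$ is continuous iff $m_x\circ j$ is; and $m_x\circ j$ is a linear map out of an inductive limit carrying the finest locally convex topology, so by Lemma \ref{PreRes2} its continuity is tested on each $\cF(U)\times\cF(U)$, where it factors as $\cF(U)\times\cF(U)\xrightarrow{m_U}\cF(U)\to\cF_x$, a composite of the continuous multiplication of the {\small TA} $\cF(U)$ with a canonical map, hence continuous. The same coincidence of finite products and coproducts holds in the category of all topological vector spaces, which yields the {\small TA} version verbatim. That $\cF^+$ satisfies the sheaf axioms is the usual set-level statement, and the equalizers involved are computed as subspaces of products, so they agree with the topologies just defined.

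Finally, functoriality is formal. For a morphism $\zvf:\cF\to\cF'$ in ${\tt PSh}(T,{\tt LCTAlg})$, the induced stalk maps $\zvf_x:\cF_x\to\cF'_x$ are continuous by Lemma \ref{PreRes2}, since $\zvf_x$ precomposed with each canonical map $\cF(U)\to\cF_x$ equals $\cF(U)\xrightarrow{\zvf_U}\cF'(U)\to\cF'_x$, a composite of continuous maps. Then $\prod_{x\in U}\zvf_x$ is continuous for the product topologies, and $\zvf^+_U$ is its restriction to the subalgebra $\cF^+(U)$, hence a continuous algebra morphism; so $\zvf^+:\cF^+\to\cF'^+$ is a morphism in ${\tt Sh}(T,{\tt LCTAlg})$. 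In summary, everything but the joint continuity of the stalk multiplication is a direct consequence of the universal properties of products and inductive limits recorded in Lemmas \ref{PreRes1} and \ref{PreRes2} together with Proposition \ref{IniTopLC}, and that joint continuity is exactly where the locally convex (or linear) biproduct structure must be brought in.
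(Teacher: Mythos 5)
Your construction coincides with the paper's in every structural respect: the stalks carry the finest locally convex topology making the canonical maps $\zp^x_U:\cF(U)\to\cF_x$ continuous, $\cF^+(U)$ sits inside $\prod_{x\in U}\cF_x$ with the induced product topology, the restrictions and the unit $i$ are handled through the definition of the product topology, the multiplication on $\cF^+(U)$ is reduced to the stalk multiplications by Lemma \ref{PreRes1}, and functoriality is reduced to continuity of the stalk maps $\zvf_x$ by Lemma \ref{PreRes2}. All of that matches the paper and is correct.

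The gap is exactly at the point you yourself flag as the main obstacle: joint continuity of $m_x:\cF_x\times\cF_x\to\cF_x$. Your identification $j:\varinjlim_{U\ni x}\bigl(\cF(U)\times\cF(U)\bigr)\stackrel{\sim}{\longrightarrow}\cF_x\times\cF_x$ is legitimate (finite products are biproducts in the category of locally convex spaces, and the colimit functor, being a left adjoint, preserves coproducts), but the next sentence is false: $m_x\circ j$ is \emph{not} a linear map --- $m_x$ is bilinear, so $m_x(\lambda a,\lambda b)=\lambda^2 m_x(a,b)$ --- and Lemma \ref{PreRes2} tests continuity only of \emph{linear} maps out of a space carrying the final \emph{locally convex} topology. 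That topology is in general strictly coarser than the final topology in $\tt Top$, so for a non-linear map the continuity of all composites with the structure maps does not imply continuity on the colimit; your argument only yields that each $\zp^x_U\circ m_U$ is continuous, which is weaker than joint continuity of $m_x$. Nor is this a removable technicality: joint continuity of bilinear maps does not in general pass to locally convex inductive limits (one retains separate continuity, and hypocontinuity under barrelledness hypotheses, but not joint continuity --- the multiplication on $\cD(\R)=\varinjlim_K\cD_K$ is the classical warning). The paper closes precisely this point by citing \cite[Lemma 2.2.]{MalBook} for the statement that an inductive limit of topological algebras, equipped with the final locally convex topology, is again a topological algebra. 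You need either to invoke such a result or to supply an argument that genuinely addresses joint continuity; the biproduct/colimit-commutation argument as written does not.
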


\begin{proof} We study the case $\cF\in{\tt PSh}(T,{\tt LCTAlg})$ (in particular $\cF\in{\tt PSh}(T,{\tt Alg})$).\medskip

Since sheafification is based on stalks $\cF_x$ ($x\in T$), i.e., on inductive limits, and since the inductive limit of a directed system of sets endowed with a same algebraic structure has as underlying set the inductive limit of the directed system of underlying sets, it is natural that the same result holds for sheafification functors. Let $+$ be the sheafification functor  $$+:{\tt PSh}(T,{\tt Alg})\ni\cF\mapsto\cF^+\in{\tt Sh}(T,{\tt Alg}):\op{For}\;.$$ Recall that $\cF^+(U)$ ($U\in{\tt Open}(T)$) is defined as the subset of $(\zP\cF)(U):=\prod_{x\in U}\cF_x$, which is made of those elements $\zs=(\zs_x)_{x\in U}=([s]_x)_{x\in U}$, for whom the section $s$ is constant in a neighborhood of any point of $U$. The algebra operations on $\cF^+(U)$ are naturally induced by those of the stalks. Restrictions are the obvious algebra morphisms. The morphism $i:\cF\to\cF^+$ of presheaves of algebras is defined by \be\label{ipresh}i_U:\cF(U)\ni s\mapsto ([s]_x)_{x\in U}\in\cF^+(U)\;,\ee and $i_U$ is injective, if $\cF$ is separated. If $\cG\in{\tt Sh}(T,{\tt Alg})$, the morphism $i:\cG\to\cG^+$ is an isomorphism of sheaves of algebras. Further, if $j:\cF\to\cG$ is a morphism of presheaves of algebras, the unique morphism $\bar \jmath:\cF^+\to\cG$ of sheaves of algebras, such that $\bar \jmath\circ i=j$, is given by $\bar\jmath_U=\prod_{x\in U} j_x$, where $j_x:\cF_x\to\cG_x$ is the morphism of algebras induced by $j\,$: \be\label{jsh}\bar \jmath_U:\cF^+(U)\ni ([s]_x)_{x\in U}\mapsto (j_x[s]_x)_{x\in U}\in\cG^+(U)\simeq\cG(U)\;.\ee Similarly, if $\cF,\cF'\in{\tt PSh}(T,{\tt Alg})$ and $\zvf:\cF\to\cF'$ is a presheaf morphism, the components of the sheaf morphism $\zvf^+:\cF^+\to\cF'^+$ are $$\zvf^+_U:\cF^+(U)\ni ([s]_x)_{x\in U}\mapsto(\zvf_x[s]_x)_{x\in U}\in\cF'^+(U)\;.$$

The stalk $\cF_x$ ($x\in T$) is the inductive limit algebra $\cF_x=\varinjlim_{U\ni x}\cF(U)$ of the directed system of algebras $(\cF(U),\zr^U_V)$, where $U$ is an open neighborhood of $x$ and $\zr^U_V$ the restriction from $U$ to $V\subset U$. This system is actually a directed system of {\small LCTA}-s, in the sense that the $\cF(U)$ are {\small LCTA}-s and the $\zr^U_V$ are continuous algebra morphisms. If we endow the inductive limit algebra $\cF_x$ with the final locally convex vector space topology with respect to the canonical algebra morphisms $\zp^x_U:\cF(U)\to\cF_x$ (i.e., with the finest locally convex vector space topology, for which the $\zp^x_U$ are all continuous), the limit $\cF_x$ is a {\small LCTVS}, whose multiplication is (jointly) continuous \cite[Lemma 2.2.]{MalBook}, i.e., the stalk $\cF_x$ is a {\small LCTA}.\medskip

In the following, the algebra $\cF^+(U)\subset(\zP\cF)(U)$ carries the induced topology of the product topology. Since any product of {\small LCTVS}-s and any subspace of a {\small LCTVS} are {\small LCTVS}-s, the algebra $\cF^+(U)$ is a {\small LCTVS}. The multiplication on $\cF^+ (U)$ is continuous in view of Lemma \ref{PreRes1}, so that $\cF^+(U)$ is a {\small LCTA}.\medskip

To show that $\cF^+\in{\tt Sh}(T,{\tt LCTAlg})$, it suffices to prove that any restriction $$r^U_V:(\zP\cF)(U)\ni(\zs_x)_{x\in U}\mapsto (\zs_x)_{x\in V}\in(\zP\cF)(V)$$ is continuous, i.e., that, for all $y\in V$, the map $$p^V_y\circ r^U_V=p^U_y:(\zP\cF)(U)\ni(\zs_x)_{x\in U}\mapsto \zs_y\in\cF_y$$ is continuous -- which is a consequence of the definition of the product topology.\medskip

The next step consists in proving that the morphism $i:\cF\to\cF^+$ of presheaves of algebras defined by \eqref{ipresh}, is a morphism of presheaves of {\small LCTA}-s, i.e., in proving that, for any $y\in U$, the map $$p^U_y\circ i_U=\zp_U^y:\cF(U)\ni s\mapsto [s]_y\in\cF_y$$ is continuous. This holds true by definition of the final topology on $\cF_y$.\medskip

Let $\zvf:\cF\to\cF'$ be a morphism in ${\tt PSh}(T,{\tt LCTAlg})$. To show that $\zvf^+:\cF^+$ $\to\cF'^+$ is a morphism in ${\tt Sh}(T,{\tt LCTAlg})$, it suffices to show that \be\label{CompE}\prod_{x\in U}\zvf_x:(\zP\cF)(U)\ni ([s]_x)_{x\in U}\mapsto(\zvf_x[s]_x)_{x\in U}\in(\zP\cF')(U)\;\ee is continuous. This is the case if and only if $\zvf_x:\cF_x\to\cF'_x$, $x\in U$, is continuous. In view of Lemma \ref{PreRes2}, the algebra morphism $\zvf_x$ is continuous if and only if $$\zvf_x\circ\zp^x_V=\zp'^x_V\circ\zvf_V:\cF(V)\ni s\mapsto \zvf_x\,[s]_x=[\zvf_Vs]_x\in\cF'_x\;,$$ $V\ni x$, is continuous. This condition is obviously fulfilled. \end{proof}

\begin{cor}\label{IndLimLCTA}
When equipped with the final locally convex vector space topology with respect to the canonical algebra morphisms $\zp^x_U:\cF(U)\to\cF_x$, a stalk $\cF_x$, $x\in T$, of a presheaf $\cF\in{\tt PSh}$ $(T,{\tt TAlg})$ is the inductive limit in ${\tt TAlg}$ of the directed system $(\cF(U),\zr^U_V)$. The same statement holds in $\tt LCTAlg$.
\end{cor}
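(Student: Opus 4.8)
The plan is to verify, in each of the two categories separately, the universal property defining the inductive limit (colimit) of the directed system $(\cF(U),\zr^U_V)$, the candidate object being the algebraic stalk $\cF_x=\varinjlim_{U\ni x}\cF(U)$ equipped with the final locally convex topology. Almost all of the analytic content has already been produced in the proof of Proposition \ref{LCTAShfifi}: there it is shown that $\cF_x$, carrying the finest locally convex vector space topology for which all the canonical algebra morphisms $\zp^x_U:\cF(U)\to\cF_x$ are continuous, is a locally convex topological algebra, joint continuity of its multiplication being supplied by \cite[Lemma 2.2.]{MalBook}. Consequently $\cF_x$ is an object of both ${\tt LCTAlg}$ and ${\tt TAlg}$, and the family $(\zp^x_U)_{U\ni x}$ is a cocone of continuous algebra morphisms compatible with the restrictions $\zr^U_V$. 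Only the universal property remains to be checked.

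First I would settle the ${\tt LCTAlg}$ statement. Let $A\in{\tt LCTAlg}$ and let $(\psi_U:\cF(U)\to A)_{U\ni x}$ be a compatible family of continuous algebra morphisms, i.e.\ $\psi_V\circ\zr^U_V=\psi_U$ whenever $V\subset U$. Forgetting topologies, the universal property of the colimit in ${\tt Alg}$ furnishes a unique algebra morphism $\psi:\cF_x\to A$ with $\psi\circ\zp^x_U=\psi_U$ for all $U\ni x$, and uniqueness as a \emph{continuous} algebra morphism is then automatic. The one substantial point is the continuity of $\psi$, and this is precisely what the final locally convex topology delivers: as $A$ is an {\small LCTVS} and $\cF_x$ carries the finest locally convex topology making the $\zp^x_U$ continuous, Lemma \ref{PreRes2} yields continuity of $\psi$ from the continuity of the composites $\psi\circ\zp^x_U=\psi_U$. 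Thus $\cF_x$ is the inductive limit in ${\tt LCTAlg}$.

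The object, the cocone, and the induced algebra morphism $\psi$ are literally unchanged for the ${\tt TAlg}$ statement, so the argument is formally identical save at the continuity step --- which I expect to be the main obstacle. When the test algebra $A$ is merely a topological algebra, not assumed locally convex, Lemma \ref{PreRes2} is no longer available: for a $0$-neighbourhood $O$ of $A$ one still has $(\zp^x_U)^{-1}\bigl(\psi^{-1}(O)\bigr)=\psi_U^{-1}(O)$, a $0$-neighbourhood of $\cF(U)$ for every $U$, so $\psi^{-1}(O)$ is a neighbourhood for the final \emph{vector space} topology; but, $O$ and hence $\psi^{-1}(O)$ being possibly non-convex, this does not by itself make $\psi^{-1}(O)$ a neighbourhood for the coarser final \emph{locally convex} topology. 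The route I would take is to identify the topology actually underlying the inductive limit in ${\tt TAlg}$, namely the finest topology that simultaneously makes all the $\zp^x_U$ continuous and keeps the multiplication jointly continuous, and then to show that this finest topological-algebra topology coincides with the final locally convex topology placed on $\cF_x$. Granting that identification, the universal property against an arbitrary topological-algebra target reduces to exactly the short verification of the ${\tt LCTAlg}$ case, continuity of $\psi$ being tested against precisely the topology we have put on $\cF_x$, while joint continuity of the product is again guaranteed by \cite[Lemma 2.2.]{MalBook}. Pinning down this coincidence of topologies, rather than the routine continuity bookkeeping, is where the genuine difficulty resides.
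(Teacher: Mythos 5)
Your treatment of the ${\tt LCTAlg}$ half coincides with the paper's proof: the canonical maps $\zp^x_U$ are continuous algebra morphisms by construction of the final locally convex topology, the algebraic colimit supplies the unique factorizing algebra morphism, and Lemma \ref{PreRes2} gives its continuity because the target is locally convex. That part is complete and needs no further comment.

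The gap is in the ${\tt TAlg}$ half, and you have located it correctly but not closed it. Your diagnosis --- that for a merely topological-algebra target $A$ the set $\psi^{-1}(O)$ is a $0$-neighbourhood for the final \emph{vector space} topology but need not be one for the coarser final \emph{locally convex} topology, so that Lemma \ref{PreRes2} is unavailable --- is accurate; note that the paper's own one-line proof invokes Lemma \ref{PreRes2} uniformly for both cases even though that lemma requires the target (and the sources) to be locally convex, so the paper does not resolve this point either. But your proposed repair is only announced, not executed: you reduce the whole ${\tt TAlg}$ claim to the assertion that the finest topology on $\cF_x$ making all the $\zp^x_U$ continuous and the multiplication jointly continuous coincides with the final locally convex topology, and then proceed "granting that identification". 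That identification is precisely the missing mathematical content, and it is not obviously true --- the finest compatible topological-algebra topology has no a priori reason to be locally convex, and without the identification the universal property tested against a non-locally-convex topological algebra fails exactly at the continuity of $\psi$. As written, your argument (like the paper's) establishes the corollary only in ${\tt LCTAlg}$; the ${\tt TAlg}$ statement remains open in your write-up. For what it is worth, only the ${\tt LCTAlg}$ (indeed Fr\'echet) case is used elsewhere in the paper, e.g.\ in Proposition \ref{DirImShfifi} and Theorem \ref{FinProds}, so the unproved half is not load-bearing.
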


\begin{proof}
Clearly, the $\zp^x_U:\cF(U)\to\cF_x$ are morphisms in $\tt (LC)TAlg$. Let $(F,p_U)$ be made of $F\in{\tt (LC)TAlg}$ and ${\tt (LC)TAlg}$-morphisms $p_U:\cF(U)\to F$, such that $p_V\circ\zr^U_V=p_U$. Due to Lemma \ref{PreRes2}, the unique $\tt Alg$-morphism $u:\cF_x\to F$, such that $u\circ\zp^x_U=p_U$, is continuous, since the $u\circ\zp^x_U=p_U$ are all continuous. Hence, the claim.
\end{proof}

\begin{prop}\label{ComplTensCoAlgMor}
Let $\za:A\to C$ and $\zb:B\to D$ be two continuous algebra morphisms between nuclear Fr\'echet algebras. Then $\za\widehat\0\zb:A\widehat\0 B\to C\widehat\0 D$ is a continuous algebra morphism between nuclear Fr\'echet algebras.
\end{prop}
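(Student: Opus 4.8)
The plan is to realize $\za\widehat\0\zb$ as the unique continuous extension of the algebraic tensor map $\za\0\zb$, and then to transport multiplicativity from the dense algebraic tensor product to the completions by a continuity/density argument. Since $A,B,C,D$ are nuclear, the projective and injective tensor topologies coincide on each algebraic tensor product, so the symbols $A\widehat\0 B$ and $C\widehat\0 D$ are unambiguous; throughout I would use the projective topology $\zp$. That these two completions are themselves nuclear Fr\'echet algebras is exactly \cite[Lemma 1.2.13]{Emerton}, which disposes of the source/target part of the statement and lets me treat both as topological algebras with jointly continuous multiplication.

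First I would define the $\R$-linear map $\za\0\zb:A\0 B\to C\0 D$ on elementary tensors by $a\0 b\mapsto \za(a)\0\zb(b)$; this is well defined by the universal property of the algebraic tensor product. The key analytic step is its continuity for the projective topologies, which I would verify through Theorem \ref{ProjTopSN} and Proposition \ref{CoSN}. Fix increasing families of seminorms $(p_i),(p'_j)$ inducing the topologies of $A,B$ and $(q_k),(q'_l)$ inducing those of $C,D$ (increasing families exist since these are Fr\'echet spaces). Continuity of $\za$ and $\zb$ gives, for each $k$ and $l$, indices $\zs(k),\zt(l)$ and constants $C_k,C'_l>0$ with $q_k(\za(a))\le C_k\,p_{\zs(k)}(a)$ and $q'_l(\zb(b))\le C'_l\,p'_{\zt(l)}(b)$. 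For $t=\sum_m a_m\0 b_m$ this yields $(q_k\0 q'_l)\big((\za\0\zb)(t)\big)\le\sum_m q_k(\za(a_m))\,q'_l(\zb(b_m))\le C_kC'_l\sum_m p_{\zs(k)}(a_m)\,p'_{\zt(l)}(b_m)$, and taking the infimum over all representations of $t$ gives $(q_k\0 q'_l)\big((\za\0\zb)(t)\big)\le C_kC'_l\,(p_{\zs(k)}\0 p'_{\zt(l)})(t)$, which is precisely the estimate demanded by Proposition \ref{CoSN}. Hence $\za\0\zb$ is continuous, and since a continuous linear map extends uniquely to the completions (see the text preceding Proposition \ref{CompletionTVSS}), I set $\za\widehat\0\zb:=\widehat{\za\0\zb}:A\widehat\0 B\to C\widehat\0 D$, a continuous linear map.

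It remains to see that this extension is a (unital) algebra morphism. A direct check on elementary tensors shows that $\za\0\zb$ is already multiplicative and unital on $A\0 B$: using the multiplication $(a\0 b)(a'\0 b')=(aa')\0(bb')$ one has $(\za\0\zb)\big((a\0 b)(a'\0 b')\big)=\za(aa')\0\zb(bb')=\big((\za\0\zb)(a\0 b)\big)\big((\za\0\zb)(a'\0 b')\big)$, and $(\za\0\zb)(1\0 1)=1\0 1$. To promote this to the completions I would compare the two continuous maps $(u,v)\mapsto(\za\widehat\0\zb)(uv)$ and $(u,v)\mapsto(\za\widehat\0\zb)(u)\,(\za\widehat\0\zb)(v)$ from $(A\widehat\0 B)\times(A\widehat\0 B)$ to $C\widehat\0 D$; both are continuous because the multiplications are jointly continuous and $\za\widehat\0\zb$ is continuous, and they agree on the dense subset $(A\0 B)\times(A\0 B)$. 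As the target is Hausdorff, they agree everywhere, so $\za\widehat\0\zb$ is an algebra morphism. The $\Zn$-graded case is identical once the sign $(-1)^{\langle b,a'\rangle}$ is carried along in the elementary-tensor computation.

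The step I expect to require the most care is not any isolated computation but the logical order: multiplicativity cannot be verified directly on $A\widehat\0 B$, whose general elements are only limits of finite tensors, so it must be established on the dense algebraic tensor product and then transported by joint continuity of the two multiplications; and the continuity estimate must be phrased against the explicit projective seminorms $p_i\0 p'_j$ and $q_k\0 q'_l$ furnished by Theorem \ref{ProjTopSN}, which is what makes Proposition \ref{CoSN} applicable.
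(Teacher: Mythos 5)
Your proposal is correct and follows essentially the same route as the paper: extend the continuous linear map $\za\0\zb$ to the completions, invoke \cite[Lemma 1.2.13]{Emerton} for the nuclear Fr\'echet algebra structure of $A\widehat\0 B$ and $C\widehat\0 D$, and transfer multiplicativity from the algebraic tensor product by continuity. The only differences are cosmetic: you spell out the projective-seminorm estimate that the paper merely asserts, and you obtain multiplicativity by observing that two continuous maps into a Hausdorff space agreeing on the dense subset $(A\0 B)\times(A\0 B)$ agree everywhere, whereas the paper writes elements of $A\widehat\0 B$ as convergent series $\sum_i a_i\0 b_i$ and passes to the limit --- both instances of the same continuity principle.
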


\begin{proof}
Since $\za$ and $\zb$ are continuous linear maps between locally convex spaces, the map $\za\0\zb:A\0 B\to C\0 D$ is continuous linear, and its continuous extension $\za\widehat\0\zb:A\widehat\0 B\to C\widehat\0 D$ is continuous linear as well.\medskip

The source and the target of $\za\widehat\0\zb$ are nuclear Fr\'echet algebras \cite[Lemma 1.2.13]{Emerton}. It remains to show that $\za\widehat\0\zb$ respects their (continuous) multiplications. Let ${\frak a}=\sum_{i=0}^\infty a_i\0 b_i$ and ${\frak b}=\sum_{j=0}^\infty a'_j\0 b'_j$ be two elements of $A\widehat\0 B$. Using continuity, we get $$(\za\widehat\0\zb)({\frak a}\cdot{\frak b})=\lim_n\lim_m\sum_{i=0}^n\sum_{j=0}^m(\za(a_i)\za(a'_j))\0(\zb(b_i)\zb(b'_j))\;.$$ It is straightforwardly seen that $(\za\widehat\0 \zb){\frak a}\cdot(\za\widehat\0 \zb){\frak b}$ is given by the same limit of sums of tensor products. \end{proof}

\begin{prop}\label{ExtMult}
The multiplication $m:A\times A\to A$ of a nuclear Fr\'echet algebra $A$ extends to a continuous algebra morphism $\widehat m:A\widehat\0 A\to A$ between nuclear Fr\'echet algebras.
\end{prop}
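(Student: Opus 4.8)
The plan is to construct $\widehat m$ in two stages --- first as a continuous \emph{linear} map, then to verify that it respects multiplications --- mirroring the structure of the proof of Proposition \ref{ComplTensCoAlgMor}. First I would invoke the definition of a topological algebra: the multiplication $m:A\times A\to A$ is jointly continuous and bilinear. By the universal property of the projective tensor topology it therefore factors through a continuous linear map $\tilde m:A\0 A\to A$ on the algebraic tensor product equipped with $\zp$, determined by $\tilde m(a\0 b)=ab$. Continuity is seen at the seminorm level through Theorem \ref{ProjTopSN} and Proposition \ref{CoSN}: joint continuity of $m$ gives, for each defining seminorm $p$ of $A$, a seminorm $q$ and a constant $C$ with $p(ab)\le C\,q(a)q(b)=C\,(q\0 q)(a\0 b)$, which is exactly the estimate required. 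Since $A$ is Fréchet, hence complete, and $A\widehat\0 A$ is by definition the completion of $(A\0 A,\zp)$, the map $\tilde m$ extends uniquely to a continuous linear map $\widehat m:A\widehat\0 A\to A$ (a continuous linear map into a complete space extends to the completion, see Proposition \ref{Completion}). By \cite[Lemma 1.2.13]{Emerton} the space $A\widehat\0 A$ is again a nuclear Fréchet algebra, so the source and target are of the advertised type.

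It then remains to check that $\widehat m$ is an algebra morphism, and here the essential input is the $\Zn$-commutativity of $A$ (all algebras in this paper are $\Zn$-commutative; without this hypothesis $\widehat m$ is not multiplicative). Both maps
$$({\frak a},{\frak b})\mapsto \widehat m({\frak a}\bullet{\frak b})\quad\text{and}\quad ({\frak a},{\frak b})\mapsto \widehat m({\frak a})\cdot\widehat m({\frak b})$$
are continuous bilinear maps $A\widehat\0 A\times A\widehat\0 A\to A$, because $\widehat m$, the multiplication $\bullet$ of $A\widehat\0 A$, and the multiplication of $A$ are all continuous. As $A\0 A$ is dense in $A\widehat\0 A$, it suffices to show the two maps agree on simple tensors. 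For $a\0 b$ and $a'\0 b'$ one computes, using $b\,a'=(-1)^{\langle b,a'\rangle}a'\,b$,
$$\widehat m\big((a\0 b)\bullet(a'\0 b')\big)=(-1)^{\langle b,a'\rangle}\,\widehat m\big((aa')\0(bb')\big)=(-1)^{\langle b,a'\rangle}\,aa'bb'=aba'b'=\widehat m(a\0 b)\cdot\widehat m(a'\0 b')\;.$$
By bilinearity the two maps coincide on all of $A\0 A\times A\0 A$, and by density and continuity they coincide everywhere.

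The step I expect to be the genuine obstacle is precisely this passage from simple tensors to arbitrary elements of the completion: one must verify that joint continuity of both multiplications, together with continuity of $\widehat m$, legitimately upgrades agreement on the dense subalgebra $A\0 A$ to agreement on all of $A\widehat\0 A$. Concretely, writing a general ${\frak a}=\lim_n\sum_{i=0}^n a_i\0 b_i$ and ${\frak b}=\lim_m\sum_{j=0}^m a'_j\0 b'_j$, one interchanges $\widehat m$ with the defining double limits and evaluates both sides using continuity of the product of $A$ --- exactly the bookkeeping already carried out for $\za\widehat\0\zb$ in the proof of Proposition \ref{ComplTensCoAlgMor}, to which this argument can simply be referred.
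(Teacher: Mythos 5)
Your proof is correct and follows essentially the same route as the paper's: factor the jointly continuous bilinear multiplication through the projective tensor product, extend to the completion using completeness of the Fr\'echet target, and verify multiplicativity by the same density-and-continuity bookkeeping as in Proposition \ref{ComplTensCoAlgMor}. Your explicit remark that the $\Zn$-commutativity of $A$ is what makes $\widehat m$ respect the Koszul-signed multiplication $\bullet$ on $A\widehat\0 A$ is a detail the paper leaves implicit, but it does not alter the argument.
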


\begin{proof}
We equip $A\0 A$ with the projective tensor topology. The continuous bilinear maps from $A\times A$ to $A$ correspond exactly to the continuous linear maps from $A\0 A$ to $A$. Hence, the multiplication $m$ can be viewed as a continuous linear map $m:A\0 A\to A$. The latter extends to a continuous linear map $\widehat m:A\widehat\0 A\to A$. This extension respects the (continuous) multiplications. The proof is similar to the one of Proposition \ref{ComplTensCoAlgMor}.\end{proof}

\subsection{Direct image and sheafification}\label{SheafiDirIm}

\begin{prop}\label{DirImShfifi}
Let $\cF\in{\tt PSh}(T,{\tt LCTAlg})$ and let $f\in C^0(T,T')$. There is a morphism $$\iota:(f_*\cF)^+\to f_*\,\cF^+$$ in ${\tt Sh}(T',{\tt LCTAlg})$.
\end{prop}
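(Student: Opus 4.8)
The plan is to exploit that the direct image of a sheaf is again a sheaf, so that $f_*\cF^+$ is already sheaf-valued and the only thing to produce is the universal comparison morphism out of the sheafification $(f_*\cF)^+$; no morphism exists in the opposite direction. First I would check that $f_*\cF^+\in{\tt Sh}(T',{\tt LCTAlg})$: by Proposition \ref{LCTAShfifi} the sheaf $\cF^+$ is a sheaf of {\small LCTA}-s on $T$, and since $(f_*\cF^+)(V)=\cF^+(f^{-1}(V))$ with the obvious (continuous, algebra-morphism) restrictions, the gluing and separation axioms transport along $f^{-1}$, so $f_*\cF^+$ is a sheaf of {\small LCTA}-s on $T'$.

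Next I would build the morphism explicitly, so as to keep the continuity under control. For every $x\in T$ I would introduce the stalk comparison map
$$\theta_x:(f_*\cF)_{f(x)}\longrightarrow \cF_x,\qquad [\,s\,]_{f(x)}\mapsto[\,s\,]_x\;,$$
where a germ of $f_*\cF$ at $f(x)$ is represented by some $s\in (f_*\cF)(V)=\cF(f^{-1}(V))$ with $V\ni f(x)$, and $x\in f^{-1}(V)$, so that $[s]_x\in\cF_x$ makes sense. This is well defined and an algebra morphism, and it is continuous: since $(f_*\cF)_{f(x)}$ carries the final locally convex topology with respect to the canonical maps $\cF(f^{-1}(V))\to(f_*\cF)_{f(x)}$ (Corollary \ref{IndLimLCTA}), by Lemma \ref{PreRes2} it suffices that each composite $\cF(f^{-1}(V))\to\cF_x$ be continuous, and this composite is exactly the canonical germ map $\zp^{x}_{f^{-1}(V)}$, which is continuous by definition of the final topology on $\cF_x$. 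I would then define, for open $V\subset T'$,
$$\iota_V:(f_*\cF)^+(V)\ni\zs=(\zs_y)_{y\in V}\mapsto\big(\theta_x(\zs_{f(x)})\big)_{x\in f^{-1}(V)}\in\cF^+(f^{-1}(V))=(f_*\cF^+)(V)\;,$$
i.e.\ the restriction of the reindexed product map $\prod_{y\in V}(f_*\cF)_y\to\prod_{x\in f^{-1}(V)}\cF_x$, $(\zs_y)_y\mapsto(\theta_x(\zs_{f(x)}))_x$.

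The main obstacle --- and essentially the only real content --- is to check that $\iota_V(\zs)$ actually lands in $\cF^+(f^{-1}(V))$, i.e.\ that the family $(\theta_x(\zs_{f(x)}))_x$ is locally induced by a single section of $\cF$. Here I would use the local constancy built into $\zs\in(f_*\cF)^+(V)$: given $x_0\in f^{-1}(V)$, local constancy at $y_0:=f(x_0)$ yields a neighborhood $V_0\ni y_0$ and a section $s\in\cF(f^{-1}(V_0))$ with $\zs_{y}=[s]_{y}$ in $(f_*\cF)_{y}$ for all $y\in V_0$; then for every $x\in W:=f^{-1}(V_0)$ (open, by continuity of $f$, and containing $x_0$) one has $\theta_x(\zs_{f(x)})=\theta_x([s]_{f(x)})=[s]_x$, so the family is induced by the single section $s\in\cF(W)$ near $x_0$. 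The remaining verifications are routine: continuity of $\iota_V$ follows from Lemma \ref{PreRes1} and the universal property of the product topology (the $x$-th projection of $\iota_V$ is $\theta_x\circ p_{f(x)}$, continuous) together with the fact that both $(f_*\cF)^+(V)$ and $\cF^+(f^{-1}(V))$ carry subspace topologies of the respective products; $\iota_V$ is an algebra morphism because the operations are computed componentwise and each $\theta_x$ is one; and $\iota$ commutes with the restrictions because on both sides restriction merely discards components, compatibly with the reindexing $x\mapsto f(x)$. Abstractly, $\iota$ is nothing but the factorization of the presheaf morphism $f_*(i):f_*\cF\to f_*\cF^+$ through the sheafification $(f_*\cF)^+$, guaranteed by the universal property established in Proposition \ref{LCTAShfifi}.
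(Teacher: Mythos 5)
Your proposal is correct and follows essentially the same route as the paper: you construct the stalk comparison maps $\theta_x:(f_*\cF)_{f(x)}\to\cF_x$ (the paper's $u_x$), assemble them into the reindexed product map, verify via the local-constancy argument that the image of $(f_*\cF)^+(V)$ lands in $\cF^+(f^{-1}(V))$, and handle continuity through Corollary \ref{IndLimLCTA}, Lemma \ref{PreRes2}, and the universal property of the product topology, exactly as in the paper. Your closing remark identifying $\iota$ as the universal factorization of $f_*(i)$ through $(f_*\cF)^+$ is a clean abstract summary of the same construction.
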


\begin{proof}
The assignment $$f_*\cF:{\tt Open}(T')\ni V\mapsto \cF(f^{-1}(V))\in {\tt LCTAlg}$$ together with the restrictions $$\zr^{f^{-1}(V)}_{f^{-1}(V')}:(f_*\cF)(V)\to (f_*\cF)(V'),$$ where $\zr$ are the restrictions of $\cF$ in $\tt LCTAlg$, is a presheaf $f_*\cF\in{\tt PSh}(T',{\tt LCTAlg})$. In view of Proposition \ref{LCTAShfifi}, we get $(f_*\cF)^+\in{\tt Sh}(T',{\tt LCTAlg})$. Similarly, we have $f_*\,\cF^+\in{\tt Sh}(T',{\tt LCTAlg})$.\medskip

Further, for $x\in T$ and $U\supset U'\ni x$, the $\tt LCTAlg$-morphisms $\zp^x_U:\cF(U)\to\cF_x$ and $\zp^x_{U'}:$ $\cF(U')\to\cF_x$ satisfy $\zp^x_{U'}\circ \zr^U_{U'}=\zp^x_U$. As, due to Corollary $\ref{IndLimLCTA}$, the stalk $(f_*\cF)_{f(x)}\in\tt LCTAlg$ is the inductive limit in $\tt LCTAlg$ of the directed system $$(\cF(f^{-1}(V)),\zr^{f^{-1}(V)}_{f^{-1}(V')})\quad(V\supset V'\ni f(x))\;,$$ there exists a unique $\tt LCTAlg$-morphism \be\label{StalksPush}u_x:(f_*\cF)_{f(x)}\to \cF_x\;,\ee such that $u_x\circ\zp^{f(x)}_V=\zp^x_{f^{-1}(V)}$, i.e., such that $u_x\,[s]_{f(x)}=[s]_x$, for all $s\in\cF(f^{-1}(V))$ and all $V\ni f(x)$.\medskip

For any $V\in{\tt Open}(T')$, the map \be\label{CompF}\iota_V:(\zP(f_*\cF))(V)\ni ([s]_y)_{y\in V} \mapsto (u_x\,[s]_{f(x)})_{x\in f^{-1}(V)}=([s]_x)_{x\in f^{-1}(V)}\in (\zP\cF)(f^{-1}(V))\ee is a continuous algebra morphism. Indeed, the algebraic operations are defined component-wise. For instance, the image of a product $([s]_y\cdot[s']_y)_{y\in V}$ is the product $(u_x[s]_{f(x)}\cdot u_x$ $[s']_{f(x)})_{x\in f^{-1}(V)}$ of the images. Moreover, the map $\iota_V$ is continuous if and only if the maps $$p^{f^{-1}(V)}_x\circ \iota_V:(\zP(f_*\cF))(V)\ni ([s]_y)_{y\in V} \mapsto u_x\,[s]_{f(x)}\in \cF_x\quad(x\in f^{-1}(V))\;$$ are. The preimage by $p^{f^{-1}(V)}_x\circ \iota_V$ of any open $\zw\subset \cF_x$ is the product over $y\in V$, whose factors indexed by $y\neq f(x)$ are $(f_*\cF)_y$ and whose factor indexed by $f(x)$ is the open subset $u_x^{-1}(\zw)$ of $(f_*\cF)_{f(x)}$. The preimage by $p^{f^{-1}(V)}_x\circ \iota_V$ is thus open in $(\zP(f_*\cF))(V)$.\medskip

The restriction of $\iota_V$ to $(f_*\cF)^+(V)$, still denoted by $\iota_V$, arrives in $(f_*\,\cF^+)(V)$. Assume that $([s]_y)_{y\in V}$ is implemented in a neighborhood $V_{y_0}$ of an arbitrary point $y_0\in V$ by a same section $t\in(f_*\cF)(V_{y_0})$, let $x_0\in f^{-1}(V)$, and set $U_{x_0}=f^{-1}(V_{f(x_0)})$. For any $x\in U_{x_0}$, we have $$u_x\,[s]_{f(x)}=u_x\,[t]_{f(x)}=[t]_x\;,$$ with $t\in\cF(U_{x_0})$.\medskip

The restriction $\iota_V:(f_*\cF)^+(V)\to (f_*\,\cF^+)(V)$ is obviously a morphism in $\tt LCTAlg$. Since these $\iota_V$ commute with restrictions, they define the sheaf morphism $\iota$ announced in Proposition \ref{DirImShfifi}. \end{proof}

\end{document}